\begin{document}

\author{
Lorenzo Alvisi\tinyddag \qquad
Allen Clement\tinystar \qquad
Natacha Crooks\tinyddag \qquad
Youer Pu\tinyddag \\
\tinyddag The University of Texas at Austin \qquad \tinystar Google,inc.
 }

\title{Seeing is Believing: \\ A Unified Model for Consistency and Isolation via States}
\date{}

\begin{titlingpage}
    \maketitle
        \begin{abstract}
      This paper introduces a unified model of consistency and
      isolation that minimizes the gap between how these guarantees
      are defined and how they are perceived. Our approach is premised
      on a simple observation: applications view storage systems as
      black-boxes that transition through a series of states, a subset
      of which are observed by applications. For maximum clarity,
      isolation and consistency guarantees should be expressed as
      constraints on those states. Instead, these properties are
      currently expressed as constraints on operation histories that
      are not visible to the application. We show that adopting a
      state-based approach to expressing these guarantees brings forth
      several benefits. First, it makes it easier to focus on the
      anomalies that a given isolation or consistency level allows
      (and that applications must deal with), rather than those that
      it proscribes.  Second, it unifies the often disparate theories
      of isolation and consistency and provides a structure for
      composing these guarantees. We leverage this modularity to apply
      to transactions (independently of the isolation level under
      which they execute) the equivalence between causal consistency
      and session guarantees that Chockler et al.  had proved for
      single operations. Third, it brings clarity to the increasingly
      crowded field of proposed consistency and isolation properties
      by winnowing spurious distinctions: we find that the recently
      proposed parallel snapshot isolation introduced by Sovran et
      al. is in fact a specific implementation of an older guarantee,
      lazy consistency (or PL-2+), introduced by Adya et al.
    \end{abstract}


\end{titlingpage}

\section{Introduction}

\par Large-scale applications such as Facebook, Amadeus, or Twitter
offload the managing of data at scale to replicated and/or distributed
systems. These systems, which often span multiple regions or
continents, must sustain high-throughput, guarantee low-latency, and
remain available across failures.
\par To increase scalability {\em within} a site, databases harness
the power of multicore computing
~\cite{Tu2013silo,diaconu2013hekaton,thomson2012calvin,faleiro2015rethinking,bernstein2015melding}
by deploying increasingly complex concurrency control
algorithms. Faced with the latent scalability bottleneck of
serializability, commercial systems often privilege instead weaker but
more scalable notions of isolation, such as snapshot isolation or read
committed
~\cite{berenson1995ansi,adya99weakconsis,mysqlcluster,postgres}.
Likewise, several recent research efforts focus on improving the
scalability of strong consistency guarantees in distributed storage
systems~\cite{zhang2015tapir,sovran2011walter,kraska2013mdcc,ports2015paxos}.
Modern large-scale distributed systems, however, to increase
scalability {\em across sites} largely renounce strong consistency in
favour of weaker guarantees, from causal consistency to
per-session-only guarantees
~\cite{lloyd13eiger,lloyd2011cops,lakshman2009cassandra,decandia2007dynamo,cooper08pnuts,voldemort,mongodb,riak,cassandraapps,documentdb}.
\par This trend poses an additional burden on the application programmer,
as weaker isolation and consistency guarantees allow for
counter-intuitive application behaviors: relaxing the ordering of
operations yields better performance, but introduces schedules and
anomalies that could not arise if transactions executed atomically and
sequentially.  Consider a bank account with a \$50 
balance and no overdraft allowed. Read-committed allows two
transactions to concurrently withdraw \$45, leaving the
account with a negative balance~\cite{berenson1995ansi}.  Likewise, 
causal consistency ensures that write-read dependencies are enforced, 
but provides no meaningful way to handle write-write
conflicts~\cite{crooks2016tardis}.
\par To mitigate 
programming complexity, many commercial
databases and distributed storage
systems~\cite{cassandra,mongodb,riak,documentdb,azurestorage,cloudbigtable,mysqlcluster,postgres,cloudsql,beanstalk} 
interact with applications through a front-end
that, like a valve, is meant to shield
applications from the complex concurrency
and replication protocols at play. This valve, however, is leaky at best:
a careful understanding of the system that implements
a given isolation or weak consistency level is oftentimes
{\em necessary} to determine which
anomalies the system will admit.

\par Indeed, isolation and consistency levels often assume
features specific to the systems for which they
were first defined---from
the properties of storage (e.g., whether it is single or
multiversioned~\cite{bernstein1981}); to the chosen concurrency
control (e.g., whether it is based on locking or
timestamps~\cite{berenson1995ansi}); or to other system
features (e.g., the existence of a centralized
timestamp~\cite{fekete2005msi}).
These assumptions, furthermore,
are not always explicit:
the claim, in the original ANSI SQL
specification,
that 
serializability
is equivalent to preventing four phenomena~\cite{berenson1995ansi}
only holds 
for
lock-based, single version databases.
Clarity on such matters 
is important: to this day,
multiversioned commercial databases claiming to implement
serializability in fact implement the weaker notion of
snapshot isolation~\cite{fekete2005msi,oracle12c,bailis2013HAT}.
\par We believe that at the root of this complexity is the current practice
of defining consistency and isolation guarantees in terms of the
ordering of low-level operations such as reads and writes, or sends and
receives. This approach has several drawbacks for application
programmers. First, it requires them to reason about
the ordering of operations that they cannot directly
observe.  Second, it makes it easy, as we have seen, to inadvertently
contaminate what should be system-independent guarantees with
system-specific assumptions. Third, by relying on operations that are
only meaningful within one of the layers in the system's stack, it
makes it hard to reason end-to-end about the system's guarantees.
\par To bridge the semantic gap between how isolation and consistency
guarantees are specified and how they are being used, we introduce a
new, unified framework for expressing both consistency and isolation
guarantees that relies exclusively on 
application-observable states rather than on low-level operations. 
The framework is
general: we use it to express most modern consistency and isolation
definitions, and prove that the definitions we obtain are equivalent
to their existing counterparts.
\par In addition to cleanly separating consistency and isolation
guarantees from the implementation of the system to which they apply,
we find that the new framework yields three advantages:
\begin{enumerate}
\item It brings
clarity to the increasingly crowded field of proposed consistency
and isolation properties, winnowing out spurious distinctions. In
particular, we show that the distinction between
PSI~\cite{sovran2011walter,cerone2015framework} and lazy consistency~\cite{adya99weakconsis,adya1997lazyconsistency}
is in fact an artefact of the replication model assumed.
\item It provides a simple framework for composing consistency and
  isolation guarantees.
We leverage its
expressiveness to prove that causal consistency is equivalent to jointly
guaranteeing all four session guarantees (read-your writes, monotonic
reads, writes follow reads, and monotonic writes~\cite{terry94sessionguarantees,chockler2000corba}) independently
of the isolation under which transactions execute.
\item It simplifies reasoning end-to-end about a system's design,
  opening up new opportunities for optimizing its implementation.  In
  particular, we show that PSI can be enforced without
        totally ordering the transactions executed at each site (\changebars{as the original PSI definition instead requires}{as instead
        the original PSI definition requires}), thus making the system less
  prone to have its performance dictated by the rate at which its
        slowest shard \changebars{(data partition)}{} can enforce dependencies.
\end{enumerate}

We provide an extended motivation in Section~\ref{sec:motivation}.
We introduce the model in Section~\ref{sec:model} and use it
to define isolation in Section~\ref{sec:isolation} and consistency
in Section~\ref{sec:consistency}. We highlight practical benefits of
our approach in Section~\ref{sec:contributions}. Finally, we summarize
and conclude in Section~\ref{sec:conclusion}.

\section{A system's case for a new formalism}
\label{sec:motivation}
\par Much of the complexity associated with weakly consistent systems stems
from an intricate three-way semantic gap between how applications are
encouraged to {\em use} these systems, how the guarantees these
systems provide are {\em expressed}, and how they are {\em
implemented}.
\par On the one hand, applications are invited to think of these systems as
black boxes that benevolently hide the complexity involved in
achieving the availability, integrity, and performance guarantees that
applications care about. For example, PaaS (Platform as a Service)
cloud-based storage systems~\cite{azurestorage,cloudbigtable},
databases~\cite{cloudsql} or webservers~\cite{beanstalk} free
applications from having to configure hardware, and let them simply pay
for reserved storage or
throughput~\cite{azurestorage,cloudsql}. Essentially, it is as if
applications were querying or writing to a logically centralized,
failure-free node that will scale as much as one's wallet will allow.
On the other hand, the precise guarantees that these black boxes provide
are generally difficult to pin down, as different systems often give
them implementation-specific twists that can only be understood by
looking inside the box.
\par For example, the exact meaning of {\em session guarantees},
present in Bayou~\cite{terry95bayou}, Corba~\cite{chockler2000corba} and, more
recently, in Pileus~\cite{terry2013pileus} and
DocumentDB~\cite{documentdb}, depends on whether the system to which
they apply implements a total order of write operations across client
sessions.  Consider the execution in Figure~\ref{fig:ser}: does
it satisfy the session guarantee {\em monotonic reads}, which calls
for reads to reflect a monotonically increasing set of writes? The
answer depends on whether the underlying system provides a total order
of write operations across client sessions (like DocumentDB), or just
a partial order based on the order of writes in each session. The
specification of monotonic reads, however, is silent on this issue.
\par Even the classic notion of serializability~\cite{papadimitriou1979serializability} 
 can fall pray to
inconsistencies. In theory, its guarantee is clear: it states that an
interleaved execution of transactions must be equivalent to a serial
schedule. In practice, however, the system interpretations of that
notion can differ. Consider the execution in
Figure~\ref{fig:ser}(a), consisting of two interleaved
write-only transactions: is it serializable? It would appear so, as
it is equivalent to the serial schedule ($T_1,T_2$). The answer,
however, depends on whether the underlying databases allows for writes to
be re-ordered, a feature that is expensive to implement. And indeed, a
majority of the self-declared serializable databases surveyed in
Figure~\ref{fig:ser}(a) do actually reject the schedule. 
In contrast, Figure~\ref{fig:ser}(b) depicts a
non-serializable schedule that exhibits {\em
write-skew}~\cite{fekete2005msi}. Yet, that execution is allowed by
systems that claim to be serializable, such as Oracle 12c. The source
of the confusion is the very definition of serializability. Oracle 12c
uses the definition of the original SQL standard, based on the four
ANSI SQL phenomena that it disallows~\cite{berenson1995ansi}. In
single-versioned systems, preventing these phenomena is indeed
equivalent to serializability; but in a multiversion system (such as
Oracle 12c) it no longer is.
\begin{SCfigure}
\includegraphics[width=0.7\linewidth]{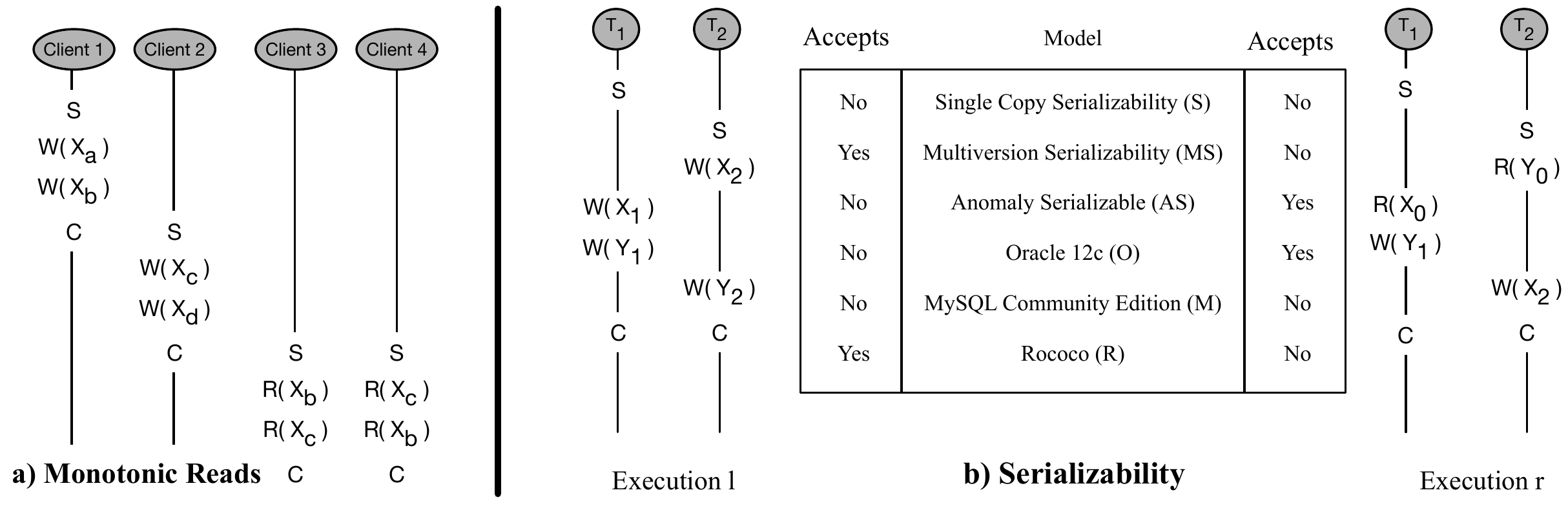}
\caption{a) Monotonic Reads  b) Serializability. Abbrevations refer to: S\cite{papadimitriou1979serializability}, MS\cite{bernstein1983mcc,adya99weakconsis},
    AS\cite{berenson1995ansi} O\cite{oracle12c}, M\cite{mysqlcluster}, R\cite{mu2014extracting}}
\label{fig:ser}
\end{SCfigure}
\par Fundamentally, consistency guarantees are contracts between the storage system and
its clients, specifying a well-defined set of admissible behaviors---
i.e., the set of values that each read is allowed to return. To be
useful, they need to be precise and unchanging. When implicit assumptions about the
implementation of the system are allowed to encroach, however, these
essential attributes can suffer.

\section{Model}
\label{sec:model}

Existing consistency or isolation models are defined as constraints on
the ordering of the read and write operations that the storage system
performs~\cite{adya99weakconsis,bernstein1983mcc,cerone2015framework,chockler2000corba}. Applications, however, cannot directly observe this
ordering. To them, the  storage system is a black box. All
they can observe are the values returned by  the read
operations they issue: they experience the storage system as if it
were going through a sequence of atomic state transitions, of which
they observe a subset.
To make it easier for applications to reason about consistency and
isolation, we adopt the same viewpoint of the
applications that must ultimately use these guarantees.  We propose a
model based on application-observable {\em states} rather than on the
invisible history of the low-level operations performed by the system.

Intuitively, a storage system guarantees a specific isolation or
consistency level if it can produce an {\em execution} (a sequence of
atomic state transitions) that is 1) consistent with the values
observed by applications and 2) valid, in that it satisfies the
guarantees of the desired isolation/consistency level.  In essence,
the values returned by the systems constrain the set of states that
the system can bring forth to demonstrate that it can produce a valid
execution.


More formally, we define a storage system {\em S} with respect to a
set $\mathcal{K}$ of keys, and $\mathcal{V}$ of values; a system
state $s$ is a unique mapping from key to values. \changebars{
For simplicity, we assume that each value is uniquely identifiable,
as is common practice in existing formalisms
~\cite{bernstein1983mcc,adya99weakconsis} and practical systems (ETags in Azure and S3, timestamps in Cassandra). There can thus be no ambiguity, when reading an object, as to which transaction wrote its content.}{For simplicity, we
assume that each value is uniquely identifiable}. In the initial system
state, all keys have value $\bot$.

To unify our treatment of consistency and isolation, we assume that
applications modify the storage system's state using transactions;
we model individual operations as transactions consisting of a single
read or a single write.  A transaction {\em t} is a tuple
$(\Sigma_t, \xrightarrow{to})$, where $\Sigma_t$ is the set of {\em
  operations} in {\em t}, and $\xrightarrow{to}$ is a total order on
$\Sigma_t$. Operation can be either reads or writes. {\em Read}
operation $r(k,v)$ retrieves value {\em v} by reading key {\em k};
{\em write} operation $w(k,v)$ updates $k$ to its new
value $v$.  The {\em read set} of $t$ contains the keys read by $t$:
$\mathcal{R}_t = \{k|r(k,v) \in \Sigma_t\}$. Similarly, the {\em
  write set} of $t$ contains the keys that $t$ updates:
$\mathcal{W}_t = \{k|w(k,v) \in \Sigma_t\}$. For simplicity of
exposition, we assume that a transaction only writes a key once.


Applying a transaction {\em t} to a state {\em s} transitions the 
system to a state $s'$ that is
identical to {\em s} in every key except those written by {\em
  t}.  We refer to {\em s} as the {\em parent state} of 
{\em t}, and refer to the transaction that generated {\em s'} as $t_{s'}$. Formally, 

\begin{definition} $s\xrightarrow[t]{} s' \Rightarrow ([((k,v)\in s' \wedge (k,v) \not
    \in s)$ \changebars{=}{\Rightarrow} $k \in \mathcal{W}_t] \wedge ( w(k,v) \in \Sigma_t)
\Rightarrow (k,v) \in s')$
\label{def:statetrans}
\end{definition}
\nc{check brackets}
We denote the set of keys in  which $s$ and $s'$ differ as \diff{s}{s'}.


An {\em execution}  $e$ for a set of transactions $\mathcal{T}$ is a
totally ordered set defined by the pair
$ (\mathcal{S}_e, \xrightarrow{t\in\mathcal{T}})$, where
$\mathcal{S}_e$ is the set of states generated by applying, starting
from the system's initial state, a permutation of all the transaction
in $\mathcal{T}$. We write $s \xrightarrow[]{*}s'$ (respectively,
$s \xrightarrow[]{+}s'$) to denote a sequence of zero (respectively,
one) or more state transitions from {\em s} to $s'$ in {\em e}.
%
%
%
Note that, while $e$ identifies the state transitions produced by each
transaction $t\in\mathcal{T}$, it does not specify the subset of
states in $\mathcal{S}_e$ that each operation in $t$ can
read from. In general, multiple states in $\mathcal{S}_e$ may be compatible with
the value returned by any given operation. We call this subset the
operation's {\em candidate read states}.
\vspace{2mm}
\begin{definition}\label{defn:readstate}
  Given an execution $e$ for a set of transactions
  $\mathcal{T}$, let $t \in \mathcal{T}$ and let $s_p$ denote $t$'s parent state. The candidate read
    states for a read operation $o =r(k,v) \in \Sigma_t$ is the set of states
  \[\mathcal{RS}_{e}(o) = \{s \in \mathcal{S}_e | 
  s \xrightarrow{*} s_p \wedge\big ( (k,v) \in s \vee (\exists w(k,v) \in \Sigma_t: w(k,v) \xrightarrow{to} r(k,v)) \big)\}\]
\end{definition}
To prevent transactions from {\em reading from the future}, we
restrict the set of valid {\em candidate read states} to those no
later than $s_p$.  Additionally, once $t$
writes $v$ to $k$, we require all subsequent read operations
$o \in \Sigma_t$ to return $v$~\cite{adya99weakconsis}.  

By convention, the candidate read states of a write operation include
all the states $s \in S_e$ that $s \xrightarrow{*} s_p$. It is easy to
prove that the candidate read states of any operation define a
subsequence of contiguous states in the total order that $e$ defines
on $S_e$. We refer to the first state in that sequence as
$s\hspace{-0.1em}f_{o}$, and to the last state as $sl_{o}$.  The
predicate \preread{e}{$\mathcal{T}$} guarantees that such states
exist:
\vspace{2mm}
\begin{definition}
\label{def:preread}
Let $\prereadmath{e}{t} \equiv \forall o \in \Sigma_t:
\mathcal{RS}_{e}(o) \neq \emptyset $. \ Then  $\prereadmath{e}{\mathcal{T}}   \equiv \forall t \in \mathcal{T}: \prereadmath{e}{t}$.
\end{definition}
We say that a state $s$ is {\em complete} for $t$ in $e$ if every
operation in $t$ can read from $s$.  We write:
\vspace{2mm}
\begin{definition}
\(\completemath{e}{t}{s}  \equiv  s \in \bigcap\limits_{o \in \Sigma_t} \mathcal{RS}_{e}(o) \)
\end{definition}
Finally, we introduce the notion of \textit{internal
read consistency}: internal read consistency states that read operations that follow each other in the
transaction order should read from a monotonically increasing
state. We write:
\vspace{2mm}
\begin{definition}
\(
\ircmath{e}{t} \equiv \forall o, \, o' \in \Sigma_t:
o'\xrightarrow{to} o \Rightarrow \lnot (sl_o\xrightarrow{+}s\hspace{-0.1em}f_{o'}) \)
\end{definition}

\section{Isolation}
\label{sec:isolation}
Isolation guarantees specify the valid set of executions for a given
set of transactions $\mathcal{T}$.  The
long-established~\cite{bernstein1981,bernstein1983mcc,adya99weakconsis}
way to accomplish this has been to constrain the history of the
low-level operations that the system is allowed to perform when
processing transactions.  Our new approach eschews this history, which
is invisible to applications, in favor of application-observable
states. Before we introduce our work, we provide some context for it
by summarizing some of the key definitions and results from Adya's
classic, history-based, treatment of isolation~\cite{adya99weakconsis}.

\subsection{A history-based specification of isolation guarantees}

\begin{definition} 
{\em 
A {\em history} $H$ over a set of transactions consists of two parts: $(i)$ a
partial order of events $E$ that reflects the operations (e.g., read,
write, abort, commit) of those transactions; and $(ii)$ a version order,
$<<$, that totally orders committed object versions.}
\end{definition}
\vspace{2mm}
\begin{definition} {\em We consider
  three kinds of {\em direct read/write conflicts:}}
\begin{tabbing}
mm\=mm\=mm\=m\=mm\=mm\=mm\=mm\= \kill
\> {\bf Directly write-depends} $T_i$ {\em writes a version of
$x$ and
$T_j$ writes the next version of} $x$ $(T_i \xrightarrow{ww}
T_j)$ \\
\> {\bf Directly read-depends} $T_i$ {\em writes a version of $x$ that 
$T_j$ then reads  $(T_i
\xrightarrow{wr} T_j)$}  \\
\> {\bf Directly anti-depends} $T_i$ {\em reads a version of $x$, and $T_j$
writes the next version of} $x$ $(T_i \xrightarrow{rw} T_j)$ 
\end{tabbing}
\end{definition}
\vspace{2mm}
\begin{definition} {\em We say that}  {\em $T_j$ {\em start-depends}
    on $T_i$ (denoted as $T_i \xrightarrow{\bstp} T_j$ if $c_i <_t b_j$,
where $c_i$ denotes $T_i$'s commit timestamp and $b_j$ $T_j$'s start timestamp,
i.e., if $T_j$ starts after $T_i$ commits.}
\end{definition}
\vspace{2mm}
\begin{definition}
{\em Each node in the} direct serialization graph DSG(H)  {\em arising from a history} $H$
  {\em corresponds to a committed transaction in} $H$.  {\em Directed edges in}
  DSG(H) {\em correspond to different types of direct conflicts. There is a
  read/write/anti-dependency edge from transaction $T_i$ to
  transaction $T_j$ if $T_j$ directly read/write/antidepends on
  $T_i$.}
\end{definition}
\vspace{2mm}
\begin{definition} {\em The} Started-ordered Serialization Graph
SSG(H) {\em contains the same nodes and edges as} DSG(H) {\em along with start-dependency edges}.
\end{definition}
\vspace{2mm}
\begin{definition} \label{def:phenomena} {\em Adya identifies the following phenomena:}
\begin{description}
\item[\hspace{.3 cm} \gzero: Write Cycles] DSG(H) {\em contains a directed cycle consisting entirely of
  write-dependency edges.}
\item[\hspace{.3 cm} \gonea: Dirty Reads]  H {\em contains an aborted
    transaction $T_i$ and a committed transaction $T_j$ such that
    $T_j$ has read the same object (maybe via a predicate) modified by
    $T_i$.}
\item[\hspace{.3 cm} \goneb: Intermediate Reads] H {\em contains a
    committed transaction $T_j$ that has read a version of object $x$
    written by transaction $T_i$ that was not $T_i$'s final
    modification of $x$.}
\item[\hspace{.3 cm} \gonec: Circular Information Flow] DSG(H)
    {\em contains a directed cycle consisting entirely of
dependency edges.} 
\item[\hspace{.3 cm} \gone:] $\gonea \vee \goneb \vee \gonec$.
\item[\hspace{.3 cm} \gtwo: Anti-dependency Cycles] DSG(H) {\em contains a directed cycle having one or more
  anti-dependency edges.}
\item[\hspace{.3cm} \gs: Single Anti-dependency Cycles] DSG(H) {\em contains a directed cycle with
exactly one anti-dependency edge.}
\item[\hspace{0.3 cm} \gsia: Interference] SSG(H) {\em contains a  read/write-dependency edge from $T_i$ to $T_j$
  without there also  being a start-dependency edge from $T_i$ to $T_j$. } 
\item[\hspace{.3 cm} \gsib: Missed Effects] SSG(H) {\em contains a directed cycle with exactly one
  anti-dependency edge. } 
\item[\hspace{.3 cm} \gsi:] $\gsia \vee \gsib$.
\end{description}
\end{definition}
\vspace{2mm}	
\begin{definition} {\em Adya defines the following isolation levels in
  terms of the phenomena  in Definition~\ref{def:phenomena}}:
\begin{description}
\item[\hspace{.5 cm} Serializability (PL-3)] $ \equiv \lnot \gone \land
  \lnot \gtwo$  \ \ \ \ \ \ \ \ \ \ \ \ \ \ \ {\bf Read Committed (PL-2)}$ \equiv \lnot \gone$
\item[\hspace{.5 cm} Read Uncommitted (PL-1)] $ \equiv \lnot \gzero$   \ \
  \ \ \ \ \ \ \ \ \ \ \ \ \ \ \ 
			{\bf Snapshot Isolation}  $ \equiv \lnot
                        \gone \land \lnot \gsi$
\end{description}
\end{definition}

\subsection{A state-based specification of isolation guarantees}

In our approach based on observable states, isolation guarantees
specify the valid set of executions for a given set of transactions
$\mathcal{T}$ by constraining each transaction $t \in \mathcal{T}$ in
two ways.  First, they limit which states, among those in the
candidate read sets of the operations in $t$, are admissible. Second,
they restrict which states can serve as parent states for $t$.  We
express these constraints by means of a {\em commit test}: for an
execution $e$ to be valid under a given isolation level $\mathcal{I}$,
each transaction \textit{t} in $e$ must satisfy the commit test for
$\mathcal{I}$, written \commit{I}{t}{e}.

\begin{definition}
	A storage system
	satisfies an isolation level $\mathcal{I}  \equiv \exists e:
        \forall t \in T : \commitmath{I}{t}{e}$.
\end{definition}

\begin{table}
\begin{center}
\footnotesize{
\begin{tabular}{|c|c|}
\hline
Serializability & \complete{e}{t}{$s_p$} \\
\hline
Snapshot Isolation &  $\exists s \in S_e.$ \complete{e}{t}{s} $\wedge ($\diff{s}{s_p} $\cap \mathcal{W}_{t} = \emptyset)$ \\ 
\hline
Read Committed  &\preread{e}{$t$}\\ 
\hline
Read Uncommitted & True\\ 
\hline
\end{tabular}
}
\end{center}
\caption{ANSI SQL Commit Tests}
\label{table:isolation}
\end{table}

\noindent Table~\ref{table:isolation} shows the commit tests for the four most
common ANSI SQL isolation levels.  We informally motivate their
rationale below.

\par\noindent{\bf Serializability}.  Serializability requires the values
observed by the operations in each transaction $t$ to be consistent with
those that would have been observed in a sequential execution.  The
commit test enforces this requirement through two complementary
conditions on observable states. First, all operations of $t$ should read
from the same state $s$, thereby ensuring that transactions  never observe the
effects of concurrently running transactions.  Second, $s$ should be
the parent state of $t$, i.e., the state that $t$ transitions
from. 

\par\noindent{\bf Snapshot isolation (SI)}. Like
serializability, SI prevents every transaction $t$ from seeing the
effects of concurrently running transactions. \changebars{
The commit test enforces this requirement by having all 
operations in $t$ read from the same state $s$, where $s$ is
a state produced by a transaction that precedes $t$ in the
execution $e$}{The commit test enforces
this requirement by having all operations in $t$ read from the same
state $s$ produced by a transaction that precedes $t$ in the execution
$e$.} However, SI no longer insists on $s$ being $t$'s parent state $s_p$:
other transactions may commit in between $s$ and $s_p$, whose operations $t$ will
not observe. The commit test only forbids $t$ from modifying any of the keys
that changed value as the system's state progressed from $s$ to
$s_p$. 

\par\noindent{\bf Read committed}. Read committed allows $t$ to see
the effects of concurrent transactions, as long as they are
committed. The commit test therefore no longer constrains all
operations in $t$ to read from the same state; instead, it only
requires them to read from state that precedes $t$ in
the execution $e$.

\par\noindent{\bf Read uncommitted}.  Read uncommitted allows
$t$ to see the effects of concurrent transactions, whether they have
committed or not. The commit test reflects this permissiveness, to the
point of allowing transactions to read arbitrary values. The reason
for this seemingly excessive laxity is that isolation models in
databases consider only committed transactions and are therefore
unable to distinguish between values produced by aborted transactions
and by altogether imaginary writes.  This distinction is not lost in
environments, such as transactional memory, where correctness depends on
providing guarantees such as opacity~\cite{guerraoui2008opacity} for all live
transactions.  We discuss this further in Section~\ref{sec:conclusion}.

Although these tests make no mention of histories, they each admit the
same set of executions of the corresponding  history-based condition
formulated by Adya~\cite{adya99weakconsis}. In~\ref{appendix:ser},~\ref{appendix:si}
and~\ref{appendix:rc}, we prove the following theorems,
respectively:

\begin{theorem}
\label{theorem:ser} Let $\mathcal{I}$ be Serializability (SER). Then $\ \ \exists e:\forall t
  \in \mathcal{T}: \commitmath{SER}{t}{e}  \equiv \lnot \gone \land \lnot
  \gtwo$.
\end{theorem}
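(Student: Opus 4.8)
The plan is to prove the biconditional by exhibiting a two-way correspondence between the totally ordered executions admitted by the serializability commit test and the histories whose direct serialization graph is acyclic. The bridge between the two formalisms is the observation that the commit test $\completemath{e}{t}{s_p}$ forces every operation of $t$ to read from $t$'s parent state; since the states in $\mathcal{S}_e$ are totally ordered and each transition applies a single transaction, such an execution is nothing but a serial schedule, and the order in which states appear in $e$ induces a total order $\ll_e$ on the transactions in $\mathcal{T}$.

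For the forward direction ($\Rightarrow$), I would start from an execution $e$ in which every transaction passes the serializability commit test $\commitmath{SER}{t}{e}$ and build a history $H$ whose version order $<<$ agrees with $\ll_e$. The heart of this direction is to check that every edge of DSG($H$) points forward along $\ll_e$: a write-depend edge respects $<<$ by construction; a read-depend edge $T_i \xrightarrow{wr} T_j$ forces $T_i \ll_e T_j$ because $T_j$ reads $T_i$'s version out of its parent state, which lies after the state $T_i$ produced; and an anti-depend edge $T_i \xrightarrow{rw} T_j$ forces $T_i \ll_e T_j$ because $T_i$ reads the latest version installed before its own parent state while $T_j$ writes the next one. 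A graph all of whose edges respect a total order is acyclic, which immediately rules out \gzero, \gonec, and \gtwo. The remaining clauses of \gone (namely \gonea\ and \goneb) are vacuous in our model, since executions contain no aborted transactions and each transaction writes a key at most once, so $\lnot \gone \land \lnot \gtwo$ holds.

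For the backward direction ($\Leftarrow$), I would first argue that $\lnot \gone \land \lnot \gtwo$ is equivalent to DSG($H$) being acyclic: every cycle in DSG($H$) consists of dependency and anti-dependency edges, and is therefore witnessed either by \gonec\ (when it contains no anti-dependency edge) or by \gtwo\ (when it contains at least one). A topological sort of the acyclic DSG then yields a total order on $\mathcal{T}$, from which I construct $e$ by applying the transactions in that order starting from the initial state, taking each transaction's parent state $s_p$ to be the state immediately preceding it. It then remains to verify that this $e$ satisfies the commit test, i.e., that $\completemath{e}{t}{s_p}$ holds for every transaction.

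The step I expect to be the main obstacle is precisely this last verification: showing that each read $r(k,v) \in \Sigma_t$ can indeed read $v$ from $s_p$. Let $T_w$ be the writer of $v$; the edge $T_w \xrightarrow{wr} t$ places $T_w$ before $t$ in the topological order, but I must also rule out any transaction $T_m$ that writes $k$ strictly between $T_w$ and $t$, for such a $T_m$ would overwrite $v$ before $t$'s parent state and make the read impossible. This is where the absence of anti-dependency cycles is indispensable: because write-depend edges make $<<$ consistent with the topological order, any such $T_m$ would write a version of $k$ later than $v$, forcing an anti-dependency edge $t \xrightarrow{rw} T_{next}$ with $T_{next} \preceq T_m$ in version order, and hence placing $t$ before $T_m$ in the order, contradicting $T_m$'s position between $T_w$ and $t$. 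Making this chain of version-order and edge-direction arguments precise, while correctly handling the internal-read clause that forces a transaction to read its own writes before consulting $s_p$, is the delicate part of the proof.
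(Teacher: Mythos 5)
Your proposal is correct and follows essentially the same route as the paper's proof: both directions hinge on the same two ideas, namely that a valid execution forces every DSG edge (ww, wr, rw) to point forward along the execution order so that no cycle---and hence no \gzero, \gonec, or \gtwo---can exist, and conversely that a topological sort of the acyclic DSG yields an execution whose commit test is verified by the anti-dependency contradiction ruling out intermediate writers between the version read and the reader's parent state. The only difference is cosmetic: you prove the execution-to-phenomena direction directly, whereas the paper proves its contrapositive, but the underlying claim (DSG edges respect the state order) is identical.
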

\begin{theorem}
\label{theorem:si}Let $\mathcal{I}$ be Snapshot Isolation (SI). Then $\ \ \exists e:\forall t
  \in \mathcal{T}: \commitmath{SI}{t}{e} \equiv \lnot \gone \land \lnot \gsi$
\end{theorem}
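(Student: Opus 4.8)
The plan is to prove the biconditional by moving between the two worlds through a single bridge: the total order that an execution $e$ imposes on its states $\mathcal{S}_e$. Since each transaction contributes exactly one new state, this order induces a total order on the transactions in $\mathcal{T}$, and from it I would read off both the version order $<<$ (the order in which writes to a given key appear) and a timestamp assignment, taking a transaction's commit timestamp to be the position of the state it produces and its start timestamp to be the position of the snapshot state $s$ that witnesses its commit test. Under this dictionary, $T_i \xrightarrow{\bstp} T_j$ holds exactly when $T_i$'s state precedes $T_j$'s snapshot in $e$. Establishing this correspondence once, and showing it is invertible, is what reduces the theorem to checking the individual phenomena.

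For the forward direction ($\exists e:\forall t:\commitmath{SI}{t}{e} \Rightarrow \lnot\gone\land\lnot\gsi$), I would take an execution in which every transaction passes the SI commit test and build the history $H$ described above. Ruling out $\gone$ reuses the reasoning of Theorem~\ref{theorem:ser} almost verbatim: $\gonea$ and $\goneb$ are excluded because $H$ contains only committed transactions and because the final-write convention together with internal read consistency force reads to see final versions, while $\gonec$ follows because every $\xrightarrow{ww}$ and $\xrightarrow{wr}$ edge points forward in the state order, so DSG(H) is acyclic. For $\lnot\gsia$ I would argue that, since the snapshot $s$ witnessing $t$'s commit test is \complete{e}{t}{s}, any transaction $T_i$ on which $t$ read- or write-depends must already have been applied by $s$; hence $T_i$'s state precedes $t$'s snapshot and the required start-dependency edge is present. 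For $\lnot\gsib$ I would show that a cycle with exactly one anti-dependency edge would force some transaction to write a key that changed value between its snapshot $s$ and its parent $s_p$, contradicting the commit-test clause $\diff{s}{s_p}\cap\mathcal{W}_t=\emptyset$.

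For the reverse direction ($\lnot\gone\land\lnot\gsi \Rightarrow \exists e:\forall t:\commitmath{SI}{t}{e}$), I would start from a history $H$ free of $\gone$ and $\gsi$ and synthesize an execution. The absence of $\gonec$ and of missed effects ($\gsib$) should let me linearize the transactions into a total state order consistent with both the version order and the start-dependency relation; applying them in this order from the initial state defines $\mathcal{S}_e$. For each $t$ I would then designate as its snapshot the latest state no later than its start timestamp, and as its parent $s_p$ the state immediately preceding the one $t$ produces. Completeness of the snapshot would follow from $\lnot\gonea$, $\lnot\goneb$ and interference-freedom ($\lnot\gsia$), which jointly guarantee that every version $t$ reads was installed by a transaction committed before $t$'s start and not yet overwritten at the snapshot; the no-write-conflict clause would follow from $\lnot\gsib$, since an interposed transaction overwriting a key in $\mathcal{W}_t$ would close a single-anti-dependency cycle in SSG(H).

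The main obstacle I anticipate is the reverse direction's linearization step together with the joint verification of the two commit-test clauses. It is not enough to order the transactions acyclically: the order must simultaneously place each transaction's snapshot so that the snapshot is \emph{complete} and so that nothing in the window between snapshot and parent touches the transaction's write set. Showing that a single linear extension of SSG(H) can meet all of these constraints at once---that the structure forced by $\lnot\gsia$ and $\lnot\gsib$ never backs the construction into an unavoidable write-write conflict or an operation with an empty candidate read set---is the delicate combinatorial core, and is where I would concentrate the effort, most likely via an explicit topological construction with careful tie-breaking on the start timestamps.
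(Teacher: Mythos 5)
Your overall architecture matches the paper's (construct timestamps from witness states in one direction, topologically sort a graph in the other), but the proposal is missing the one idea that makes the hard direction work, and several of your clause-to-phenomenon attributions are wrong as stated. The genuine gap is in your reverse direction ($\lnot \gone \land \lnot \gsi \Rightarrow \exists e$): linearizing the transactions ``consistent with the version order and the start-dependency relation'' is \emph{not} sufficient to make the designated snapshots complete. Concretely, suppose $t$ reads version $x_q$ written by $t_q$, and its designated snapshot is the state of its latest start-dependency predecessor $t_k$. A transaction $t_m$ that writes the next version of $x$ (so $t \xrightarrow{rw} t_m$) need have no version-order edge and no start-dependency edge forcing it after $t_k$, so a topological sort of your graph may legally place $s_{t_m}$ between $s_{t_q}$ and $s_{t_k}$, and then $(x,x_q) \notin s_{t_k}$: completeness fails. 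The paper closes exactly this hole by augmenting the graph with composed edges $t_i \xrightarrow{\ledge{2}} t_j$ whenever $t_i \xrightarrow{\bstp} t_k \xrightarrow{rw} t_j$, and then proving the augmented graph $\lgraph$ acyclic --- a proof that itself needs transitivity and composition claims which invoke $\lnot \gsib$. This is a new precedence constraint that must be in the graph \emph{before} sorting; it cannot be recovered by ``careful tie-breaking on start timestamps,'' because tie-breaking only resolves incomparable pairs and cannot reverse an order the sort has already been permitted to choose. You correctly flagged this step as the delicate core, but the proposal contains no mechanism for it, and that mechanism is the substance of the paper's proof.

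Separately, your mapping of commit-test clauses to phenomena would fail on cases the proof must cover. In the execution-to-history direction, completeness alone does not rule out $\gsia$ for write-write dependencies: if $t_j$ blind-writes $x$ and $t_i$ wrote the previous version, nothing about $\completemath{e}{t_j}{s}$ forces $s_{t_i}$ to precede $t_j$'s snapshot --- it is the clause $\diffmath{s}{s_p} \cap \mathcal{W}_{t_j} = \emptyset$ that does. Dually, a $\gsib$ cycle need not violate the disjointness clause at all: in the cycle $t_i \xrightarrow{rw} t_j \xrightarrow{wr} t_i$ the transaction $t_i$ may be read-only, making $\diffmath{s}{s_p} \cap \mathcal{W}_{t_i} = \emptyset$ vacuous; the contradiction there is with completeness (the snapshot of $t_i$ would have to lie both at-or-after $s_{t_j}$, to see $t_j$'s write, and strictly before it, to still hold the version $t_j$ overwrote), which is why the paper routes this case through its timestamp-assignment invariant rather than through disjointness. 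Likewise, in your reverse direction the no-write-conflict clause is secured by $\lnot \gsia$ (a conflicting writer interposed between snapshot and parent yields a $ww$ edge without the mandatory start-dependency edge), not by $\lnot \gsib$ as you claim. Each of these mis-attributions is repairable from the joint hypothesis $\lnot \gone \land \lnot \gsi$, but as written those steps do not go through.
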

\begin{theorem} 
\label{theorem:rc}
Let $\mathcal{I}$ be Read Committed (RC). Then $\ \ \exists e:\forall t
  \in \mathcal{T}: \commitmath{RC}{t}{e}  \equiv  \lnot \gone$
\end{theorem}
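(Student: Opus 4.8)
The plan is to prove the equivalence in two directions, after first observing that the modelling assumptions collapse $\neg \gone$ to its essential core. Since $\mathcal{T}$ contains only committed transactions, no execution can exhibit an aborted read, so $\neg \gonea$ holds vacuously; and since each transaction writes a given key at most once, there are no intermediate versions to read, so $\neg \goneb$ holds vacuously as well. The theorem therefore reduces to showing that $\exists e : \forall t \in \mathcal{T} : \prereadmath{e}{t}$ is equivalent to $\neg \gonec$, i.e.\ to the acyclicity of the subgraph of $\mathrm{DSG}(H)$ induced by the write- and read-dependency edges. To move between the two worlds I would fix a correspondence between an execution $e$ and a history $H$ over the same $\mathcal{T}$: the events of $H$ are determined by the reads and writes (with their values) in $\mathcal{T}$, while the version order $<<$ is the remaining degree of freedom that I line up between the two models. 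Given $e$, the total order on $\mathcal{S}_e$ induces, for each key, the order in which transactions overwrite it, which I take as $<<$ (so $\xrightarrow{ww}$ edges agree with the order of states in $e$); because values are uniquely identifiable, each externally-sourced read $r(k,v)$ determines a unique writer $T_i$ of $(k,v)$, fixing the $\xrightarrow{wr}$ edges. Conversely, given $H$ satisfying $\neg \gonec$, the dependency subgraph is a DAG, so it admits a topological order, and applying the transactions of $\mathcal{T}$ in that order from the initial state yields a candidate execution.

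For the forward direction ($\Rightarrow$), I would prove the key lemma that in any execution satisfying $\prereadmath{e}{t}$ for all $t$, every dependency edge points forward in the total order of $e$. For a $\xrightarrow{ww}$ edge this is immediate from the choice of $<<$. For a $\xrightarrow{wr}$ edge $T_i \to T_j$ arising from a read $r(k,v) \in \Sigma_{T_j}$, non-emptiness of $\mathcal{RS}_e(r(k,v))$ supplies a state $s$ with $(k,v) \in s$ and $s \xrightarrow{*} s_p$, where $s_p$ is $T_j$'s parent; since $(k,v)\in s$ and the states containing $(k,v)$ begin at $T_i$'s output state, $T_i$'s output state is no later than $s$, hence $T_i$'s output $\xrightarrow{*} s_p$ and so $T_i$ precedes $T_j$ strictly. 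A cycle of dependency edges would therefore place some transaction strictly before itself, which is impossible; hence $\neg \gonec$.

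For the backward direction ($\Leftarrow$), I would verify that the execution built from the topological order actually satisfies $\prereadmath{e}{t}$ for every $t$. The crucial observation --- and the reason the PREREAD test matches exactly $\neg \gonec$ rather than a stronger condition --- is that read committed lets each operation choose its own read state, so an intervening overwrite does not block a read: for a read of $(k,v)$ written externally by $T_i$, the output state of $T_i$ still contains $(k,v)$ and, since $T_i$ precedes $T_j$ in the topological order, satisfies $s \xrightarrow{*} s_p$; reads of internally-written values are handled by the second disjunct of Definition~\ref{defn:readstate}, and reads returning the initial value $\bot$ by the initial state, which precedes every $s_p$. Thus every operation has a non-empty candidate read set, establishing the PREREAD side.

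The main obstacle I anticipate is not either implication in isolation but making the $e \leftrightarrow H$ dictionary airtight. I must check that the version order read off from $e$ is total and well defined on each key, that the unique-writer assignment used to recover the $\xrightarrow{wr}$ edges is consistent with the values the states of $e$ actually expose, and --- in the construction direction --- that the topological order is compatible with the given version order, so that the reads recover precisely the values prescribed by $H$ rather than some later overwrite. Once this bookkeeping is pinned down, both directions follow from the single lemma that $\prereadmath{e}{t}$ for all $t$ is equivalent to every dependency edge being forward-oriented in the total order of $e$.
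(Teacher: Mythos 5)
Your proposal is correct and follows essentially the same route as the paper's proof: one direction topologically sorts the write/read-dependency subgraph of $DSG(H)$ and checks non-empty candidate read sets operation by operation (external reads via the writer's output state, internal reads and initial-value reads via the definition), and the other direction instantiates the version order from the execution and proves the key lemma that every dependency edge is forward-oriented under PREREAD, so a \gonec{} cycle contradicts the total order on states. Your upfront dismissal of \gonea{} and \goneb{} via the modeling assumptions (only committed transactions, single write per key) is just a repackaging of the paper's empty-read-state arguments for those two phenomena, not a substantive departure.
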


\begin{theorem}
\label{theorem:ru} Let $\mathcal{I}$ be Read Uncommitted (RU). Then $\ \ \exists e:\forall t
  \in \mathcal{T}: \commitmath{RU}{t}{e}  \equiv  \lnot \gzero$

\end{theorem}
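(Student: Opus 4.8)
The plan is to prove both implications by exhibiting an explicit translation between the state-based and history-based worlds, exploiting the fact that the two formalisms share the same set of transactions $\mathcal{T}$ with fixed operations and values and differ only in how they record the relative order of writes to a common key: in the state model this order is \emph{induced} by the total order an execution $e$ imposes on $\mathcal{T}$, whereas in Adya's model it is the explicit version order $<<$. Two simplifications will drive the argument. First, the commit test \commitmath{RU}{t}{e} is the constant predicate \textsf{True}, so an execution is valid as soon as it exists; since any permutation of $\mathcal{T}$ applied from the initial state is an execution, the left-hand side never fails on account of reads or parent states. Second, \gzero{} is defined purely in terms of write-dependency edges, so read operations (and hence the arbitrary values that RU tolerates) are irrelevant. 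This is what makes RU the simplest of the four theorems.

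For the forward direction I would start from any execution $e$ and read off the induced version order, ordering for each key $k$ the transactions that write $k$ by their position in the total order of $e$. Because an edge $T_i \xrightarrow{ww} T_j$ holds exactly when $T_j$ writes the version of some key immediately following a version written by $T_i$, such an edge forces $T_i$ to precede $T_j$ in $e$. Every write-dependency edge therefore points forward in the single global total order that $e$ provides, so no sequence of $\xrightarrow{ww}$ edges can close into a cycle, and $\lnot \gzero$ follows immediately.

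The backward direction is where the only real work lies. Given a history with $\lnot \gzero$, I must linearize its transactions into an execution, and for that I need the \emph{entire} version-order relation $<<$ (the union over all keys, transitively closed) to be acyclic, whereas $\lnot \gzero$ only forbids cycles among edges between \emph{consecutive} versions. The bridging observation is that whenever two writers of a key are related by $<<$ they are joined by a path of consecutive-version $\xrightarrow{ww}$ edges; hence any cycle in $<<$ would contract to a cycle of write-dependency edges, contradicting $\lnot \gzero$. With $<<$ shown acyclic, I would extend it to a total order on $\mathcal{T}$ by topological sort, apply the transactions in that order from the initial state to obtain an execution $e$ whose induced version order agrees with $<<$, and observe that \commitmath{RU}{t}{e} holds vacuously for every $t \in \mathcal{T}$, so $e$ witnesses the left-hand side. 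I expect this bridging lemma---that acyclicity of consecutive-version edges already implies acyclicity of the full version order---to be the main obstacle; once it is in hand, the triviality of the RU commit test disposes of everything else.
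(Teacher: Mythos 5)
Your proof is correct, but there is no in-paper proof to compare it against: Theorem~\ref{theorem:ru} is stated in Section~\ref{sec:isolation}, yet the paper's appendices (\ref{appendix:ser}, \ref{appendix:si}, \ref{appendix:rc}) prove only the serializability, snapshot isolation, and read committed cases, and the RU case is left unproved. What you propose is the natural degeneration of the paper's template to RU. Your direction $\exists e \Rightarrow \lnot\gzero$ instantiates the version order from the execution exactly as the paper does in its ($\Leftarrow$) arguments for SER and RC (there: $x_i << x_j$ iff both transactions write $x$ and $s_{t_i} \xrightarrow{*} s_{t_j}$), after which every write-dependency edge points forward in the total order of $e$, so no \gzero{} cycle can exist. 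In the converse direction, your bridging lemma is sound --- per-key version orders are total, so any $<<$-relation between two writers of a key expands into a path of consecutive-version $\xrightarrow{ww}$ edges, and a cycle in the transitive closure would therefore yield a \gzero{} cycle --- and the subsequent topological sort is the exact analogue of the paper's sort of $DSG(H)$ for SER and of the dependency-only subgraph $SDSG(H)$ for RC. One point deserves correction, though: you present the bridging lemma and the sort as \emph{necessary} (``for that I need the entire version-order relation to be acyclic''), but under the literal statement they are not. Since $\commitmath{RU}{t}{e}$ is the constant True, \emph{any} permutation of $\mathcal{T}$ applied from the initial state already witnesses $\exists e: \forall t \in \mathcal{T}: \commitmath{RU}{t}{e}$; the theorem places no requirement that $e$ be consistent with the history's version order. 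Your extra machinery instead proves a strictly stronger and worthwhile fact --- that $e$ can be chosen to agree with the given $<<$, which is the correspondence the paper's other proofs maintain --- so keep it, but state explicitly that this is a strengthening rather than an obligation imposed by the theorem.
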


\subsubsection{Discussion} 

The above theorems establish that a specification of isolation
guarantees based on client-observable states is as expressive as one
based on histories. Adopting a client-centric perspective, however,
has a distinct advantage: it makes it easier for application
programmers to understand the anomalies allowed by weak isolation
levels.

To illustrate this ``intuition gap'', consider the simple banking example of
Figure~\ref{fig:banking}.  Alice and Bob share checking (C) and saving
(S) accounts, each holding \$30.  To avoid the bank's wrath, before performing a withdrawal
they check that the total funds in their accounts allow for it. They
then withdraw the amount from the specified account, using the other
account to eventually cover any overdraft.  Suppose Alice and Bob try
concurrently to each withdraw \$40 from, respectively, their checking
and savings account, and issue transactions $t_{w1}$ and
$t_{w2}$. Figure~\ref{fig:banking}(a) shows an execution under
serializability. Because transactions read from their parent
state, $t_{w2}$ observes $t_{w1}$'s withdrawal and, since the
balance of Bob's accounts is below \$40, aborts.

In contrast, consider the execution under snapshot isolation in
Figure~\ref{fig:banking}(b).  It is legal for both $t_{w1}$ and
$t_{w2}$ to read the same state $s_1$, find that the combined funds in
the two accounts exceed \$40, and, unaware of each other, proceed to
generate an execution whose final state $s_3$ will get Alice and Bob
in trouble with the bank.  This anomaly, commonly referred to as
write-skew, arises because $t_{w2}$ is allowed to read from a state
other than the most recent state.  Defining snapshot isolation in
terms of observable states makes the source of this anomaly obvious,
arguably to a greater degree than the standard history-based
definition, which characterizes snapshot isolation as ``disallowing
all cycles consisting of direct (write-write and write-read)
dependencies and at most a single anti-dependency''.


Though we have focused our discussion on  ANSI SQL isolation levels
that do not consider real-time, our model can straightforwardly be
extended to support strict serializability~\cite{Herlihy90Linearizability} as follows.

Let $\mathcal{O}$ be a time oracle that assigns distinct
\textit{start} and \textit{commit} timestamps (\textit{t.start} and
\textit{t.commit}) to every transaction $t \in \mathcal{T}$.  A
transaction $t_1$ time-precedes $t_2$ (we write $t_1<_s t_2$) if
$t_1.commit < t_2.start$. Strict serializability can then be defined
by adding the following condition to the serializability commit test:
$\forall t' \in \mathcal{T}: t'
<_s t \Rightarrow s_{t'} \xrightarrow{*} s_t$.

\begin{figure}
\center
\includegraphics[width=0.9\linewidth]{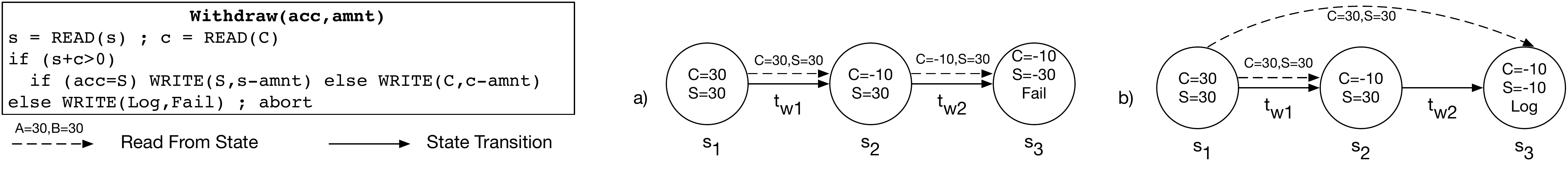}
\caption{Simple Banking Application. Alice and Bob share checking and savings
accounts. 
Withdrawals are allowed as long as the sum of both account is greater
than zero} 
\label{fig:banking}
\end{figure}

\section{Consistency}
\label{sec:consistency}
Though isolation guarantees typically do not regulate how transactions
from a given client should be ordered~\footnote{Strict
serializability is the exception to this rule.}, they tacitly
assume that transactions from the same client will be executed in
client-order, as they naturally would in a centralized or
synchronously replicated storage system.
In weakly consistent systems, where transactions can be asynchronously
replicated between sites, this assumption no longer holds: two
transactions from the same client may be re-ordered if they happen to
be executed on different replicas.  To bring back order, distributed
systems introduce the notion of \textit{sessions}.  Sessions
encapsulate the sequence of transactions performed by a single entity
(a thread, client, or application). Informally, their aim is to
provide each entity with a view of the system consistent with its own
actions;\changebars{}{formally, a session \textit{se} is a tuple
$(T_{se} ,\xrightarrow{se})$ where $\xrightarrow{se}$ is a total order
over the transactions in $\mathcal{T}_{se}$.  The set of all sessions
is denoted by $SE$.}
\changebars{
\begin{definition}
A session \textit{se} is a tuple
$(T_{se} ,\xrightarrow{se})$ where $\xrightarrow{se}$ is a total order
    over the transactions in $\mathcal{T}_{se}$ such that $\mathcal{T}_{se}
    \subseteq \mathcal{T}$.
\end{definition}
}{}

To provide a foundation to a common theory of isolation and
consistency, we define session-based consistency guarantees for
transactions. Session guarantees have traditionally been defined for
operations~\cite{chockler2000corba,terry94sessionguarantees,brzezinski04sessions}:
our definitions can be mapped back by considering single-operation
transactions.

We first introduce the definition of sequential
consistency~\cite{Lamport79How}. Sequential consistency requires read operations within
each transaction observe monotonically increasing states and have
non-empty candidate read sets and, like previously defined isolation
levels, demands that all sessions observe a single execution. Unlike
isolation levels, however, sequential consistency also requires
transactions to take effect in the order specified by their
session. We define the commit test for sequential consistency as
follows:
\[
\commitmath{SC}{e}{t} \equiv \prereadmath{e}{t} \wedge
\ircmath{e}{t} \wedge (\forall se \in SE : \forall t_i
\xrightarrow{se} t_j : (s_{t_{i}} \xrightarrow{+} s_{t_{j}} \wedge
\forall o \in \Sigma_{t_{j}}: s_{t_{i}} \xrightarrow{*} sl_o))
\]

Guaranteeing the existence of a single execution
across all clients is often prohibitively expensive
if sites are geographically distant.
Many systems instead allow clients in different sessions to observe
{\em distinct} executions. Clients consequently perceive the system as
consistent with their own actions, but not necessarily with those of
others.
To this effect, we reformulate the \textit{commit test} into a
\textit{session test}: for an execution $e$ to be valid under a given
session $se$ and session guarantee $SG$, each transaction \textit{t}
in $\mathcal{T}_{se}$ must satisfy the session test for $SG$, written
\session{SG}{se}{t}{e}.

\begin{definition}
	A storage system
	satisfies a session guarantee $SG  \equiv \forall se \in SE: \exists e:
        \forall t \in \mathcal{T}_{se} : \sessionmath{SG}{se}{t}{e}$.
\end{definition}

Intuitively, session tests invert the order between the existential qualifier for execution and the universal quantifier for sessions. Table~\ref{table:sessions} shows the session tests for the most common
session guarantees. We informally motivate their rationale below.

\par \textbf {Read-My-Writes} This session guarantee states that a
client will read from a state that includes any preceding writes {\em
  in its session}. RMW is a fairly weak guarantee: it does not
constrain the order in which writes take effect, nor does it provide
any guarantee on the reads of a client who never writes, \changebars{
or say anything}{nor has
anything to say} about the outcome of reads performed in other sessions
(as it limits the scope of PREREAD only to the transactions of its
session). The session test is simply content with asking for the read
state of every operation in a transaction's session to be after the
commit state of all preceding update transactions in that session.
\par \textbf {Monotonic Reads} Monotonic reads, instead, constrains
a client's reads to observe an increasingly up-to-date state of the database:
this applies to transactions in a session, and to operations within the
transaction (by IRC).
The notion of  ``up-to-date" here may vary by client, as the storage is free
to arrange transactions differently for each session's execution. The only way for the client
to detect an MR violation is hence to read a value three times, reading the initial value, a new value, and the initial value again. Moreover, a client is not guaranteed to see the effects of its
own write: MR allows for clients to read from monotonically increasing but
stale states.
\par \textbf {Monotonic Writes} In contrast, monotonic writes
constrains the ordering of writes within a session: the sequence of
state transitions in each execution must be consistent with the order
of update transactions in {\em every session}.  Unlike MR and RMW,
monotonic writes is a global guarantee, at least when it comes to
update transactions. The PREREAD requirement for read operations,
instead, continues to apply only within each session.
\par \textbf {Writes-Follow-Reads} Like monotonic writes,
writes-follow-reads is a global guarantee, this time covering reads as
well as writes. It  states that, if a
transaction reads from a state \textit{s}, all transactions that
follow in that session must be ordered after \textit{s} in every
execution that a client may observe.
\par \textbf{Causal Consistency}
Finally, causal consistency guarantees that any execution will order transactions
in a causally consistent order: read operations in a session will see monotonically increasing read states, and commit in session order. Likewise, transactions that read from a state \textit{s} will be ordered
after \textit{s} in all sessions. This relationship is transitive: every transaction that reads $s$ (or that follows in the session) will also be ordered
after $s$. 
\begin{table}
\footnotesize{
\begin{center}
\begin{tabular}{|c|c|}
\hline
Read-My-Writes (RMW)& $\prereadmath{e}{\mathcal{T}_{se}} \wedge \forall o \in \Sigma_t: \forall t'\xrightarrow{se} t:\mathcal{W}_{t'} \neq \emptyset \Rightarrow s_{t'} \xrightarrow{*}sl_o$\\ 
\hline
Monotonic Reads (MR) & $\prereadmath{e}{\mathcal{T}_{se}} \wedge \ircmath{e}{t} \wedge \forall o \in \Sigma_t: \forall t'\xrightarrow{se}t: \forall o' \in \Sigma_{t'}: \lnot (sl_o\xrightarrow{+}s\hspace{-0.1em}f_{o'}) $\\
\hline
Monotonic Writes (MW) & $\prereadmath{e}{\mathcal{T}_{se}} \wedge \forall se' \in SE: \forall t_i \xrightarrow{se'} t_j: (\mathcal{W}_{t_i} \neq \emptyset \wedge
\mathcal{W}_{t_j} \neq \emptyset) \Rightarrow s_{t_i} \xrightarrow{+} s_{t_j} $\\
\hline
Writes-Follow-Reads (WFR) & $\prereadmath{e}{\mathcal{T}} \wedge \forall se' \in SE: \forall t_i\xrightarrow{se'}t_j: \forall o_i \in \Sigma_{t_i}:\mathcal{W}_{t_j} \neq \emptyset \Rightarrow sf_{o_i} \xrightarrow{+} s_{t_j} $\\
\hline
Causal Consistency (CC) &  \tabincell{c}{$ \prereadmath{e}{\mathcal{T}} \wedge \ircmath{e}{t} \wedge (\forall o \in \Sigma_t: \forall t' \xrightarrow{se} t:  s_{t'} \xrightarrow{*} sl_o) $\\ $\wedge (\forall se' \in SE: \forall t_i \xrightarrow{se'} t_j: s_{t_i} \xrightarrow{+} s_{t_j})$ } \\
\hline
\end{tabular}
\end{center}
}
\caption{Session Guarantees}
\label{table:sessions}
\end{table}



\section {Practical benefits}
\label{sec:contributions}
In addition to reducing the gap between how isolation and consistency
guarantees are defined and how they are perceived by their users, definitions based
on client-observable states provide further benefits.
\subsection{Economy}
Focusing on client-observable states frees definitions from
implementation-specific assumptions. Removing these artefacts can bring
out similarities and winnow out spurious distinctions in the
increasingly crowded field of isolation and consistency.

Consider, for example, parallel snapshot isolation (PSI), recently
proposed by Sovran et al~\cite{sovran2011walter} and lazy consistency,
introduced by Adya et
al~\cite{adya99weakconsis,adya1997lazyconsistency}.  Both isolation
levels are appealing as they are implementable at scale in
geo-replicated settings. Indeed, PSI aims to offer a scalable
alternative to snapshot isolation by relaxing the order in which
transaction are allowed to commit on geo-replicated sites. The
specification of PSI is given as an abstract specification code that
an implementation must emulate.
\begin{definition}
PSI enforces three main properties:
\begin{itemize}
\item \textit{P1 (Site Snapshot Read): All operations read the most recent
committed version at the transaction's site as of the time when the transaction
began.}
\item \textit{P2 (No Write-Write Conflicts): The write sets of each pair of committed
somewhere-concurrent transactions must be disjoint (two transactions
are somewhere concurrent if they are concurrent on site($T_1$) or site($T_2$).}
\item \textit{P3 (Commit Causality Across Sites): If a transaction $T_1$ commits at a site
A before a transaction $T_2$ starts at site A, then $T_1$ cannot commit after $T_2$
at any site.}
\end{itemize}
\end{definition}

PL-2+, on the other hand, guarantees consistent reads (transactions
never partially observe the effects of other transactions)
and disallows lost updates. Formally:
\begin{definition}
Lazy Consistency (PL-2+) $\equiv \lnot G1 \land \lnot G\text{-single}$
\end{definition}
At first blush, these system-centric definitions bear little
resemblance to each other.  Yet, when their authors explain their
intuitive meaning from a client's perspective, similarities
emerge. Cerone et al.~\cite{cerone2015framework} characterizes PSI as
requiring that transactions read from a causally consistent state and
that concurrent transactions do not write the same object. Adya 
describes PL-2+ in intriguingly similar  terms: ``PL-2+ provides a notion
of “causal consistency” since it ensures that a transaction is placed
fter all transactions that causally affect it''~\cite{adya99weakconsis}. The
``intuition gap'' between how these guarantees are formally expressed
and how they are experienced by clients makes it hard to appreciate
how these guarantees actually compare.  

Formulating isolation and consistency in terms of client-observable
states eliminates this gap by {\em forcing} definitions that
inherently specify guarantees according to how they are perceived by
clients.  The client-centric definition of PSI given below, for
example, makes immediately clear that a valid PSI execution must
ensure that all transactions observe the effects of transactions that
they depend on.

\begin{definition}
For each transaction $t$, let its  \textit{precede-set} 
contain the set of transactions after which $t$ is ordered. A
transaction $t'$  precedes $t$ if (i) $t$ reads a value that $t'$
wrote; or (ii) $t$ writes an object modified by $t'$ and the execution
orders $t'$ before $t$; or (iii) $t'$ precedes $t''$ and $t''$
precedes $t$.
\[
\ddependsmath{e}{\hat{t}}=
\{t| \exists o \in \Sigma_{\hat{t}}: t=t_{{sf_o}}\}
                \cup \{t|s_{t} \xrightarrow{+}s_{\hat{t}} \wedge \mathcal{W}_{\hat{t}} \cap \mathcal{W}_{t} \neq \emptyset \}\]
\[
\dependsmath{e}{\hat{t}} = \left( \cup_{t \in \ddependsmath{e}{\hat{t}}} \dependsmath{e}{t} \right) \cup \ddependsmath{e}{\hat{t}}
\]
\[
\commitmath{PSI}{t}{e} \equiv \prereadmath{e}{t} \wedge \forall o \in \Sigma_t: \forall t' \in \dependsmath{e}{t}: o.k \in \mathcal{W}_{t'} \Rightarrow s_{t'}  \xrightarrow{*} \slomath{o}
\]
\end{definition}

Motivated by the additional clarity afforded by this definition, we
investigated further the relationship between PSI and lazy consistency
and found that, perhaps surprisingly, PL-2+ and PSI are indeed {\em
  equivalent}.

In~\ref{appendix:psi} we prove that this client-centric,
state-based  definition of PSI is equivalent to both the axiomatic
formulation of PSI ($PSI_A$) by Cerone et al. and to the cycle-based
specification of PL-2+:
\begin{theorem}
\label{theorem:2pl}
Let $\mathcal{I}$ be PSI. Then $\exists e:\forall t \in \mathcal{T}:\commitmath{PSI}{t}{e} \equiv \lnot G1 \land \lnot G\text{-single}$ 
\end{theorem}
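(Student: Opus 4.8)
The plan is to prove the equivalence by a two-way translation between state-based executions and Adya-style histories, following the template of Theorems~\ref{theorem:ser}--\ref{theorem:rc}. Given an execution $e$ I associate the history $H(e)$ whose version order $\ll$ on each key is inherited from the total order that $e$ imposes on $\mathcal{S}_e$: the version of $k$ installed by $t_i$ precedes the version installed by $t_j$ whenever $s_{t_i} \xrightarrow{+} s_{t_j}$. Conversely, given a history $H$ satisfying $\lnot\gone \land \lnot\gs$, I build an execution by topologically sorting the transactions. The easy half of $\gone$ is immediate: executions in the state model contain no aborted transactions, so $\gonea$ cannot occur; the paper's write-once assumption leaves no intermediate versions for $\goneb$ to expose; and $\prereadmath{e}{t}$ forces every read to draw its value from a genuine state no later than the parent. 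The substance of the argument therefore lies in $\gonec$ (dependency cycles) and $\gs$ (cycles with a single anti-dependency edge).

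The first key lemma I would establish connects the depends-set to the dependency graph: the first clause of $\ddependsmath{e}{\hat t}$ records exactly the write-read source $t_{sf_o}$ of each read, and the second clause records every earlier co-writer on a shared key, so that its transitive closure $\dependsmath{e}{\hat t}$ is precisely the set of transactions reaching $\hat t$ along a path of write-read and write-dependency edges. Moreover each such edge $t_a \to t_b$ entails $s_{t_a} \xrightarrow{+} s_{t_b}$. Forward direction ($\Rightarrow$): assuming every $t$ passes $\commitmath{PSI}{t}{e}$, dependency edges point forward in the acyclic state order, so $\gonec$ is impossible. For $\gs$, suppose a cycle $t_1 \xrightarrow{rw} t_2 \to \cdots \to t_1$ with the single anti-dependency edge $t_1 \xrightarrow{rw} t_2$ and all remaining edges dependencies. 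The dependency path from $t_2$ back to $t_1$ places $t_2 \in \dependsmath{e}{t_1}$. Letting $o = r(x,v) \in \Sigma_{t_1}$ witness the anti-dependency, we have $x \in \mathcal{W}_{t_2}$, and the commit test applied to $t_1$ forces $s_{t_2} \xrightarrow{*} \slomath{o}$. But $t_1 \xrightarrow{rw} t_2$ means $t_2$ installs the next version of $x$, whence $\slomath{o} \xrightarrow{+} s_{t_2}$ strictly, a contradiction; hence $\lnot\gs$.

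For the reverse direction ($\Leftarrow$) I would construct $e$ by extending the write-read/write-dependency relation of $H$ to a total order; this relation is acyclic exactly because $\lnot\gonec$ holds, so a topological sort exists, and I take the resulting sequence of states as $\mathcal{S}_e$ with $\ll$ respected. PREREAD follows because each read's version is installed by a transaction ordered earlier, so the state immediately after that write lies in the operation's candidate read set. The commit test is where $\lnot\gs$ does the real work: fix a read $o = r(k,v) \in \Sigma_t$ and any $t' \in \dependsmath{e}{t}$ with $k \in \mathcal{W}_{t'}$; I must show $s_{t'} \xrightarrow{*} \slomath{o}$. If instead $t'$ installed a version of $k$ after the one $o$ read, then $t \xrightarrow{rw} t''$ for the next-version writer $t''$, followed by write-dependency edges $t'' \to \cdots \to t'$; composing with the dependency path from $t'$ to $t$ guaranteed by $t' \in \dependsmath{e}{t}$ yields a cycle with exactly one anti-dependency edge, contradicting $\lnot\gs$. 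Hence $t'$'s version precedes the read version, $s_{t'} \xrightarrow{*} \slomath{o}$, and $\commitmath{PSI}{t}{e}$ holds.

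The main obstacle I anticipate is the bookkeeping in the reverse direction: Adya's anti-dependency is defined against the \emph{next} version, whereas the state construction naturally exposes \emph{all} later writers of a key, so I must route through the next-version writer and a chain of write-dependency edges to recover a genuine $\gs$ cycle from any commit-test violation. Care is also needed to confirm that the topological order can simultaneously honour every read---so that $sf_o$ and $\slomath{o}$ are well defined and contiguous---without interposing a spurious overwrite between a value's writer and its reader; here too the absence of single-anti-dependency cycles is exactly what guarantees a consistent placement. Finally, to match the claim of equivalence with Cerone et al.'s axiomatic $PSI_A$, I would observe that its causal-snapshot-read axiom translates into the $\dependsmath{e}{t}$-closure condition of the commit test and its disjoint-concurrent-writes axiom into $\lnot\gs$, closing the loop.
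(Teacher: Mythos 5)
Your proposal is correct and follows essentially the same route as the paper's proof: the same version order induced by the execution, the same forward-direction contradiction (a dependency path around the cycle puts the anti-dependency target in the depends-set, so the commit test forces it before $sl_o$ while the version order forces it after), and the same reverse-direction construction via a topological sort of dependency edges, with $\lnot\gs$ invoked by routing any commit-test violation through the next-version writer and a chain of write-dependency edges. Your auxiliary lemmas (depends-set membership coincides with ww/wr-reachability and implies precedence in the execution) are exactly the paper's Lemmas A-style preddirect/predtrans/predchain, so the only cosmetic differences are that you state the reachability correspondence as a single two-way lemma and fold the trivial internal-read/write cases into the machinery rather than spelling them out.
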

\begin{theorem}
\label{theorem:psi}
Let $\mathcal{I}$ be PSI. Then $\exists e:\forall t \in \mathcal{T}:\commitmath{PSI}{t}{e} \equiv PSI_A$
\end{theorem}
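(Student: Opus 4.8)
The plan is to prove the equivalence by routing through the cycle-based characterization already in hand. Theorem~\ref{theorem:2pl} gives $\exists e:\forall t \in \mathcal{T}:\commitmath{PSI}{t}{e} \equiv \lnot G1 \wedge \lnot G\text{-single}$, so it suffices to show that Cerone et al.'s axiomatic formulation coincides with Adya's phenomena, i.e. $PSI_A \equiv \lnot G1 \wedge \lnot G\text{-single}$. This reduction is attractive because it confines the remaining work to the history/axiomatic world, reusing the state-to-history bridge of Theorem~\ref{theorem:2pl} rather than rebuilding it. Recall that $PSI_A$ specifies an abstract execution $(\mathcal{T}, \mathrm{VIS}, \mathrm{AR})$ with $\mathrm{VIS} \subseteq \mathrm{AR}$ and $\mathrm{AR}$ a total order, subject to internal consistency (\emph{INT}), external reads (\emph{EXT}: each read returns the $\mathrm{AR}$-greatest $\mathrm{VIS}$-visible write to its key), transitive visibility (\emph{TransVis}), and write-conflict freedom (\emph{NoConflict}: any two transactions writing a common key are $\mathrm{VIS}$-ordered). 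Notably neither side mentions session order, matching the fact that $\commitmath{PSI}{t}{e}$ and the $\mathrm{DSG}$-based $\lnot G1 \wedge \lnot G\text{-single}$ impose none.

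For the forward direction ($PSI_A \Rightarrow \lnot G1 \wedge \lnot G\text{-single}$), I would take an abstract execution satisfying the PSI axioms and build $\mathrm{DSG}(H)$ on the same transactions, reading the version order off $\mathrm{AR}$. The dependency edges $\xrightarrow{wr}$ and $\xrightarrow{ww}$ all point along $\mathrm{VIS}$: a reads-from edge is a $\mathrm{VIS}$ edge by \emph{EXT}, and a write-dependency edge between conflicting writers is $\mathrm{VIS}$-oriented by \emph{NoConflict} consistently with $\mathrm{AR}$. Since $\mathrm{VIS}\subseteq\mathrm{AR}$ and $\mathrm{AR}$ is total, these edges are acyclic, ruling out $G1c$; \emph{INT} and \emph{EXT} exclude dirty and intermediate reads ($G1a, G1b$), giving $\lnot G1$. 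To rule out $G\text{-single}$, I would take a hypothetical cycle with exactly one anti-dependency edge $T_i \xrightarrow{rw} T_j$ and collapse its remaining edges: each lies in $\mathrm{VIS}$, so by \emph{TransVis} they compose to a single $T_j \xrightarrow{\mathrm{VIS}} T_i$. But $T_i \xrightarrow{rw} T_j$ means $T_i$ read a version of some $x$ that $T_j$ overwrites with an $\mathrm{AR}$-later version; since $T_j$ writes $x$ and is $\mathrm{VIS}$-visible to $T_i$, \emph{EXT} forces $T_i$ to have read $T_j$'s (or a still later) version, a contradiction.

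For the backward direction ($\lnot G1 \wedge \lnot G\text{-single} \Rightarrow PSI_A$), I would construct the witnessing abstract execution from a history with neither phenomenon: set $\mathrm{VIS}$ to the transitive closure of the read- and write-dependency edges (transitive by construction, and a strict partial order because $\lnot G1$ forbids dependency cycles), and let $\mathrm{AR}$ be any linear extension of $\mathrm{VIS}$. Then \emph{TransVis} holds by construction, \emph{NoConflict} because conflicting writers are joined by a $\xrightarrow{ww}$ edge and hence $\mathrm{VIS}$-ordered, and \emph{INT} from well-formedness of $H$. The crux is \emph{EXT}: a failure would mean some $\mathrm{VIS}$-visible, same-key writer is $\mathrm{AR}$-ordered after the version the reader observed; tracing to the writer of the immediately following version produces an anti-dependency edge which, closed by the $\mathrm{VIS}$-chain witnessing visibility, yields a cycle with a single anti-dependency edge, contradicting $\lnot G\text{-single}$. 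Chaining the two directions with Theorem~\ref{theorem:2pl} then delivers the claim.

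I expect the main obstacle to be this $G\text{-single} \leftrightarrow (\emph{EXT}\wedge\emph{TransVis})$ correspondence, and in particular the bookkeeping around Adya's \emph{next-version} formulation of anti-dependency: an \emph{EXT} violation exhibits a visible writer of some \emph{later} version, whereas the $\xrightarrow{rw}$ edge lands on the writer of the \emph{immediately} following version, so I must argue (using \emph{TransVis} and \emph{NoConflict} to order the intervening writers) that the visibility chain can be re-routed to close a cycle with \emph{exactly} one anti-dependency edge, and that collapsing a $\mathrm{VIS}$-chain via transitivity preserves this single-anti-dependency count. Care is also needed to confirm that $\mathrm{AR}$ can always be chosen as a linear extension without introducing spurious edges, and that the $\mathrm{SSG}$'s start-dependency edges stay irrelevant here: unlike snapshot isolation, PSI imposes no global commit order, so only $\mathrm{DSG}$ edges and the partial $\mathrm{VIS}$ enter the argument. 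An alternative, more self-contained route would bypass Theorem~\ref{theorem:2pl} and build the abstract execution directly from a state-based witness $e$, taking $\mathrm{AR}$ as the total order $e$ imposes on states and $\mathrm{VIS}$ as $\dependsmath{e}{\cdot}$; there the same tension resurfaces as the obligation $\forall t' \in \dependsmath{e}{t}: o.k \in \mathcal{W}_{t'} \Rightarrow s_{t'} \xrightarrow{*} \slomath{o}$ inside $\commitmath{PSI}{t}{e}$.
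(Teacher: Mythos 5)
Your proposal is correct in outline, but it takes a genuinely different route from the paper's proof. The paper proves Theorem~\ref{theorem:psi} by a direct two-way translation between state-based executions and Cerone-style abstract executions: given a witness $e$, it defines $T_i \xrightarrow{AR} T_j \Leftrightarrow s_{t_i} \xrightarrow{+} s_{t_j}$ and $T_i \xrightarrow{VIS} T_j \Leftrightarrow t_i \in \dependsmath{e}{t_j}$ and verifies INT, EXT, TRANSVIS and NOCONFLICT one by one (leaning on Lemmas~\ref{lemma:preddirect}, \ref{lemma:predtrans} and~\ref{lemma:predchain}); conversely, from $(\mathcal{H}, VIS, AR)$ it builds $e$ by applying transactions in $AR$ order and re-derives $\prereadmath{e}{\mathcal{T}}$ and the commit test. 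This is precisely the ``alternative, more self-contained route'' you mention at the end and set aside. What you do instead is keep Theorem~\ref{theorem:2pl} as the only state-to-history bridge and prove the purely history-level equivalence $PSI_A \equiv \lnot \gone \land \lnot \gs$, then chain the two equivalences; the chaining is logically sound, since all three statements are properties of the same set of committed transactions, existentially quantified over the version order on one side and over $VIS$/$AR$ on the other, and your constructions supply those witnesses in both directions. Your two central arguments are the right ones and do go through: dependency edges embed into $VIS$ (by EXT with unique values, and by NOCONFLICT oriented consistently with the $AR$-induced version order), so a $\gs$ cycle collapses via TRANSVIS into a single $VIS$ edge contradicting EXT; conversely, with $VIS$ the transitive closure of dependency edges and $AR$ any linear extension, an EXT failure unwinds into a cycle with exactly one anti-dependency edge, contradicting $\lnot\gs$. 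Two residual details beyond the bookkeeping you already flag: $\gonea$ is excluded simply because both models speak only of committed transactions (not by INT/EXT), and an anti-dependency edge whose underlying read is \emph{internal} must be handled by NOCONFLICT plus antisymmetry of $AR$ rather than by EXT, since EXT constrains only external reads. The trade-off between the routes: yours confines all new work to the history/axiomatic world and establishes the PSI-equals-PL-2+ correspondence as a standalone fact about histories---arguably the paper's headline claim, which in the paper's organization is only a corollary of Theorems~\ref{theorem:2pl} and~\ref{theorem:psi}---whereas the paper's direct construction needs no stepping stone and showcases how naturally the state-based model maps onto abstract executions ($AR$ is the execution order, $VIS$ is the depends relation), which is the methodological point of the section.
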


\subsection{Composition}

Formulating isolation and consistency guarantees on the basis of
client-observable states makes them not just easier to understand, but
also to compose. Composing such guarantees is often desirable
in practice to avoid counter-intuitive behaviors. For example, when
considered in isolation, monotonic reads lets transactions take effect
in an order inconsistent with session order, and monotonic writes puts
no constraints on reads. Systems like DocumentDB thus compose multiple
such guarantees in their implementation, but have no way of
articulating formally the new guarantee that their implementations
offer.
Expressing guarantees as local session tests makes it easy to
formalize their composition.
\begin{definition}
A storage system satisfies a set of session guarantees  $\mathcal{G} \ 
\mathit {iff}$ \\  $\forall se \in  SE : \exists e : \forall t \in T_{se} :\forall SG \in \mathcal{G}:  \sessionmath{SG}{se}{t}{e}$
\end{definition}
An analogous definition specifies the meaning of combining isolation
levels.  Once formalized, such combinations can be easily compared
against existing standalone consistency guarantees. For example,
generalizing a result by Chockler et al's~\cite{chockler2000corba}, we
prove in~\ref{appendix:causal} that also when using transactions  the four
session guarantees, taken together, are equivalent to causal
consistency. 
\begin{theorem}\label{theorem:causal}
    Let $\mathcal{G} = \{RMW,MR,MW,WFR \}$, then \\ $\forall se \in SE :\exists e: \forall t \in T_{se} :\sessionmath{\mathcal{G}}{se}{t}{e} 
    \equiv \forall se \in SE : \exists e :  \forall t \in T_{se}: \sessionmath{CC}{se}{t}{e}$
\end{theorem}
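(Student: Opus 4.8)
The plan is to prove the two directions separately, in each case trying to reuse a single execution, and to lean throughout on the structural fact that every operation reads from a state no later than its own transaction's commit state: for any $t$ and any $o \in \Sigma_t$, the candidate-read-state definition gives $sf_o \xrightarrow{*} sl_o \xrightarrow{*} s_p \xrightarrow{+} s_t$, where $s_p$ is $t$'s parent state, so in particular $sf_o \xrightarrow{*} s_t$. I would isolate this as a preliminary lemma, since it drives almost every step. For the forward direction ($\Leftarrow$) I would fix, for each session $se$, the execution $e$ witnessing $\sessionmath{CC}{se}{t}{e}$, and show the \emph{same} $e$ witnesses all four guarantees for every $t \in \mathcal{T}_{se}$. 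This is essentially clause-matching: $\prereadmath{e}{\mathcal{T}}$ and $\ircmath{e}{t}$ from CC supply the PREREAD and IRC conjuncts of every guarantee; the CC read clause $\forall o : \forall t' \xrightarrow{se} t : s_{t'} \xrightarrow{*} sl_o$ is literally stronger than the RMW clause (it drops the $\mathcal{W}_{t'} \neq \emptyset$ guard); the CC write clause $\forall t_i \xrightarrow{se'} t_j : s_{t_i} \xrightarrow{+} s_{t_j}$ is likewise stronger than the MW clause; and MR and WFR then follow by composing these CC clauses with the structural lemma, e.g. $sf_{o'} \xrightarrow{*} s_{t'} \xrightarrow{*} sl_o$ gives $\lnot(sl_o \xrightarrow{+} sf_{o'})$ for MR, and $sf_{o_i} \xrightarrow{*} s_{t_i} \xrightarrow{+} s_{t_j}$ gives the WFR inequality. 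I expect this direction to be short.

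For the backward direction ($\Rightarrow$), the subtlety is that CC's read and write clauses range over \emph{all} session-predecessors, whereas RMW and MW constrain only \emph{writing} transactions; thus a \emph{read-only} transaction's commit state is entirely unpinned by the four guarantees, and the given $e$ need not satisfy CC. Indeed one can build an $e$ meeting all four tests in which a read-only $t'$ is placed, in the total order, after writes that a later same-session $t$ does not observe, violating $s_{t'} \xrightarrow{*} sl_o$. My plan is therefore to \emph{repair} $e$ rather than use it verbatim: factor $e$ into its underlying sequence of distinct value-states (the commit states of writing transactions, since read-only transactions leave values unchanged), and re-insert each read-only transaction immediately after the latest value-state any of its operations reads (its \emph{read frontier}), breaking ties among co-located read-only transactions by session order and then a fixed global order. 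Because read-only transactions do not alter values, this yields a valid execution $e'$ that preserves every candidate-read-state relation at value-state granularity, hence preserves $\prereadmath{e'}{\mathcal{T}}$ (whose global form comes from WFR), $\ircmath{e'}{t}$ (from MR), and the writing-transaction clauses of MW; CC's read clause restricted to writing predecessors follows directly from RMW, and its write clause among writing pairs from MW.

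It remains to recover CC's two clauses for the repositioned read-only transactions, and this is the crux. The quantitative input is that MR, read at value-state granularity, forces $t' \xrightarrow{se} t$ to satisfy $\rho(o') \le \rho(o)$ for every $o' \in \Sigma_{t'}$ and $o \in \Sigma_t$, where $\rho$ names the value-state an operation reads; hence $t'$'s read frontier is at or before every state $t$ reads, and placing $s_{t'}$ at that frontier yields $s_{t'} \xrightarrow{*} sl_o$, the CC read clause. For the CC write clause $s_{t'} \xrightarrow{+} s_{t_j}$ with $t' \xrightarrow{se'} t_j$, I would invoke WFR when $t_j$ writes (its commit strictly follows every $sf_{o'}$ for $o' \in \Sigma_{t'}$, hence strictly follows $t'$'s frontier) and MR plus the tie-break rule when $t_j$ is read-only (its frontier is no earlier than $t'$'s, and session-order tie-breaking makes its commit strictly later). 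The main obstacle I anticipate is precisely this bookkeeping: showing that a \emph{single} placement of each read-only transaction simultaneously satisfies the read clause toward all later reads and the strict write clause toward all later commits, and that the tie-break order among read-only transactions sharing a value-run is globally consistent (acyclic). I would neutralize it by defining the placement purely in terms of value-state indices together with one fixed total tie-break order, so that acyclicity is immediate and each CC inequality reduces to the frontier bound $\rho(o') \le \rho(o)$ established above.
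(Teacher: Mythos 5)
Your forward direction matches the paper's (clause-matching plus the structural lemma $sf_o \xrightarrow{+} s_t$, which is the paper's first lemma), and your overall plan for the backward direction---keep update transactions in their $e$-order and re-insert read-only transactions at new positions---is also the paper's strategy. The gap is in your placement rule. You claim that MR, ``read at value-state granularity,'' forces $\rho(o') \le \rho(o)$ for $t' \xrightarrow{se} t$, i.e.\ that read frontiers are monotone along a session. That is false in this model: an operation does not read a single state but an interval $[sf_o, sl_o]$ of candidate read states, and the MR test only asserts $\lnot(sl_o \xrightarrow{+} sf_{o'})$, i.e.\ $sf_{o'} \xrightarrow{*} sl_o$; it says nothing relating $sf_{o'}$ to $sf_o$. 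Concretely, let $t'$ and $t$ both be read-only with $t' \xrightarrow{se} t$, let $t'$ read an object $x$ whose value was installed at state $s_5$, and let $t$ read an object $y$ whose value was installed at $s_2$ and stays current through $s_9$. Then $sf_{o'} = s_5$ and $sl_o = s_9$, so MR holds (and RMW, MW, WFR hold vacuously, since both transactions are read-only), yet $t$'s frontier $s_2$ strictly precedes $t'$'s frontier $s_5$. Your construction places $s_t$ right after $s_2$ and $s_{t'}$ right after $s_5$---they are not co-located, so no tie-break applies---giving $s_t \xrightarrow{+} s_{t'}$ in $e'$. This violates CC's fourth clause (all transactions commit in session order), and also its third clause: in $e'$ the parent of $t$ is $s_2$, so $sl_o = s_2$, while $s_{t'}$ sits after $s_5$, so $s_{t'} \xrightarrow{*} sl_o$ fails. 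Frontier \emph{inversion}, not frontier \emph{ties}, is the real obstacle, and your tie-break rule cannot repair it.

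The missing idea is the one the paper builds into its construction: the parent state of a read-only transaction $t_i$ is chosen as $\max\{\max_{o \in \Sigma_{t_i}}\{sf_o\},\, s_{t_{i-1}}\}$, where $t_{i-1}$ is $t_i$'s immediate session predecessor. This running maximum forces read-only transactions into session order by fiat (since the four guarantees do not imply it), and the nontrivial work of the paper's proof is then showing this placement is simultaneously compatible with everything else: that $e'$ still satisfies all four session guarantees, that all transactions commit in session order (proved by induction over the read-only transactions of a session), and that CC's read clause holds at the chosen positions---which works because read-only transactions do not change values, so pushing $t_i$ forward to its predecessor's commit state keeps each of its reads, and each later read in the session, inside its candidate interval. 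In your example this gives $s_p(t) = s_{t'}$ rather than $s_2$, and CC holds. So your proposal needs its placement rule replaced by the running max (and the subsequent bookkeeping redone relative to it); as written, the construction produces executions that fail CC.
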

This result holds independently of the transactions' isolation
level, as it enforces no relationship between a transaction's parent
state and the read states of the operations of that transaction.
Sometimes, however, constraining the isolation level of transactions
within a session may be useful: think, for example, of a large-scale
distributed system that would like transactions to execute atomically,
while preserving session order.  Once again, formulating isolation and
consistency guarantees in terms of observable states makes expressing
such requirements straightforward: all that is needed to modularly
combine guarantees is to combine their corresponding commit and
session tests.
\begin{definition}

A storage system
	satisfies a session guarantee SG and isolation level
        $\mathcal{I}$ iff \\ $ \forall se \in SE: \exists e:
        \forall t \in T_{se} : \sessionmath{SG}{se}{t}{e} \wedge \commitmath{\mathcal{I}}{t}{e}$.
\end{definition}

\subsection{Scalability}

As we observed earlier, applications are encouraged to think of weakly
consistent systems as benevolent black boxes that hide, among others,
the details of replication.  Since replicas are invisible to clients,
our new client-centric definitions of consistency and isolation make
no mention of them, giving developers full flexibility in how these
guarantees should be implemented.

In contrast, in their current system-centric formulations, several
consistency and isolation guarantees not only explicitly refer to
replicas, but they subject operations within each replica to stronger
requirements than what is called for by these guarantees' end-to-end
obligations.  For example, the original definition of parallel
snapshot isolation~\cite{sovran2011walter} requires individual replicas to
enforce snapshot isolation, even as it globally only offers (as we
prove in Theorem~\ref{theorem:2pl}) the guarantees of Lazy
Consistency/PL-2+~\cite{adya1997lazyconsistency,adya99weakconsis}. Similarly, several definitions of causal
consistency interpret the notion of \textit{thread of
  execution}~\cite{ahamad94causalmemory} as serializing all operations
that execute on the same
site~\cite{mahajan11depot,attiya2015occ,cerone2015disc}.  Requiring
individual sites to offer stronger guarantees that what applications
can observe not only runs counter to an intuitive end-to-end argument,
but---as these guarantees translate into unnecessary dependencies
between operations and transactions that must then be honored across
sites--- can have significant implications on a system's ability to
provide a given consistency guarantee at scale.  We show
in Figure~\ref{fig:dependencies} that, for a simple transactional workload,
guaranteeing per-site PL-2+ rather than snapshot isolation
can reduce the number of dependencies per transaction by two
orders of magnitude ($175 \times$).

\begin{SCfigure}[1.5]
\includegraphics[width=0.3\textwidth]{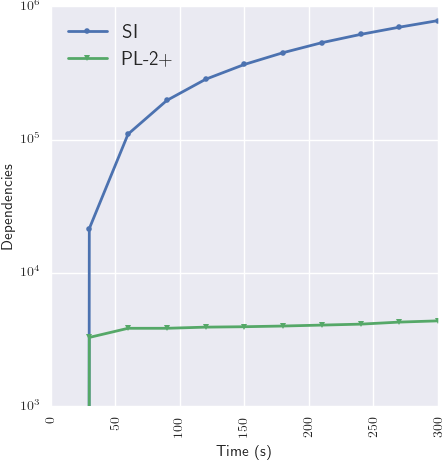}
\label{fig:dependencies}
\caption{Number of dependencies per transactions as a function of time
  when running PL-2+ vs SI at each site. Experiments are run using
  TARDiS~\cite{crooks2016tardis}, a geo-replicated transactional
  key-value store that supports both snapshot isolation and parallel
  snapshot isolation.  Our workload consists of read-write
  transactions (three reads, three writes) accessing 10,000 keys
  according to a uniform distribution.  Experiments are run on a
  shared local cluster of machines equipped with a 2.67GHz Intel Xeon
  CPU X5650 with 48GB of memory and connected by a 2Gbps
  network. Inter-machine ping latencies average 0.15 ms. Each
  experiment is run with three dedicated server machines. SI
  artificially forces transactions to be dependent on every other
  transaction at that site, while lazy consistency establishes a
  dependency only when there is an actual data conflict.  This has
  little impact on performance during failure-free executions, but
  can exacerbate the impact of slowdown cascades.}
\end{SCfigure}

\section{Related work and conclusion}
\label{sec:conclusion}

Most past  definitions of isolation and
consistency~\cite{bernstein87db,berenson1995ansi,bernstein1981,bernstein1983mcc,papadimitriou1979serializability,sovran2011walter,fekete2005msi,adya99weakconsis,ardekani2013nonmonotonic,herlihy1990linearizability,brzezinski04sessions,terry94sessionguarantees,chockler2000corba} 
%
refer to specific orderings of low-level operations and to system
properties that cannot be easily observed or understood by
applications.  To better align these definitions with
what clients perceive, recent work 
~\cite{attiya2015occ,mahajan11cacTR,cerone2015framework} distinguishes between
{\em concrete} executions (the nuts-and-bolts implementations detail)
and {\em abstract} executions (the system behaviour as perceived by
the client) on its way to introduce observable causal
consistency, a refinement of causal consistency
where causality can be inferred by client observations.
Attiya et al. introduce the notion of observable causal
consistency~\cite{attiya2015occ}, a refinement on causal consistency
where causality can be inferred by client observations. Likewise,
Cerone et al.~\cite{cerone2015framework} introduce the dual notions of
visibility and arbitration to define, axiomatically, a large number of
existing isolation levels. All continue, however, to reason about
correctness as constraints on ordering of read and write
operations. Our model takes their approach a step further: it
\textit{directly defines} consistency and isolation in terms of the 
observable states that are routinely used by developers to express
application invariants~\cite{bailis2015feral,alvaro2013borders}.

\par \textbf{Limitations}  Our model 
has two main limitations.  First, it does not constrain the behavior
of ongoing transactions, as it assumes that applications never
make externally visible decisions based on uncommitted data. It thus
cannot express consistency models, like opacity~\cite{guerraoui2008opacity} or virtual
world consistency~\cite{imbs2012virtual}  designed to prevent  STM
transactions from accessing 
an invalid memory location.
Second, our model enforces a total order on executions, which does not
easily account with recently proposed forking consistency
models~\cite{li2004sundr, mahajan11depot,crooks2016tardis,cachin2009faust}.  We leave extending the model in this direction as future work.

\par \textbf{Conclusion} We present a new way to reason about
consistency and isolation based on application-observable
states. This approach
\one\ maps more naturally to what applications can observe, in turn
making it obvious what anomalies distinct isolation/consistency levels
allow; \two\ provides  a structure to compose and compare isolation and consistency
guarantees; and  \three\ enables performance optimizations by reasoning about consistency guarantees end-to-end.

\bibliographystyle{abbrv}
\small{
\bibliography{refs}

\begin{thebibliography}{10}

\bibitem{adya99weakconsis}
A.~Adya.
\newblock {\em Weak Consistency: A Generalized Theory and Optimistic
  Implementations for Distributed Transactions}.
\newblock {Ph.D.}, MIT, Cambridge, MA, USA, Mar. 1999.
\newblock Also as Technical Report MIT/LCS/TR-786.

\bibitem{adya1997lazyconsistency}
A.~Adya and B.~Liskov.
\newblock Lazy consistency using loosely synchronized clocks.
\newblock In {\em Proceedings of the Sixteenth Annual ACM Symposium on
  Principles of Distributed Computing}, PODC '97, pages 73--82, New York, NY,
  USA, 1997. ACM.

\bibitem{ahamad94causalmemory}
M.~Ahamad, G.~Neiger, J.~E. Burns, P.~Kohli, and P.~Hutto.
\newblock Causal memory: Definitions, implementation and programming.
\newblock Technical report, Georgia Institute of Technology, 1994.

\bibitem{alvaro2013borders}
P.~Alvaro, P.~Bailis, N.~Conway, and J.~M. Hellerstein.
\newblock Consistency without borders.
\newblock In {\em Proceedings of the 4th ACM Symposium on Cloud Computing},
  SOCC '13, pages 23:1--23:10, 2013.

\bibitem{beanstalk}
Amazon.
\newblock Elastic beanstalk.
\newblock \url{https://aws.amazon.com/elasticbeanstalk/}.

\bibitem{cassandra}
Apache.
\newblock {Cassandra}.
\newblock \url{http://cassandra.apache.org/}.

\bibitem{ardekani2013nonmonotonic}
M.~S. Ardekani, P.~Sutra, and M.~Shapiro.
\newblock Non-monotonic snapshot isolation: Scalable and strong consistency for
  geo-replicated transactional systems.
\newblock In {\em Proceedings of the 32nd International Symposium on Reliable
  Distributed Systems}, SRDS '13, pages 163--172, 2013.

\bibitem{attiya2015occ}
H.~Attiya, F.~Ellen, and A.~Morrison.
\newblock Limitations of highly-available eventually-consistent data stores.
\newblock In {\em Proceedings of the 2015 ACM Symposium on Principles of
  Distributed Computing}, PODC '15, pages 385--394, New York, NY, USA, 2015.
  ACM.

\bibitem{bailis2013HAT}
P.~Bailis, A.~Davidson, A.~Fekete, A.~Ghodsi, J.~M. Hellerstein, and I.~Stoica.
\newblock Highly available transactions: Virtues and limitations.
\newblock {\em PVLDB}, 7(3):181--192, 2013.

\bibitem{bailis2015feral}
P.~Bailis, A.~Fekete, M.~J. Franklin, A.~Ghodsi, J.~M. Hellerstein, and
  I.~Stoica.
\newblock Feral concurrency control: An empirical investigation of modern
  application integrity.
\newblock In {\em Proceedings of the 2015 ACM SIGMOD International Conference
  on Management of Data}, SIGMOD '15, pages 1327--1342, New York, NY, USA,
  2015. ACM.

\bibitem{riak}
Basho.
\newblock Riak.
\newblock \url{http://basho.com/products/}.

\bibitem{berenson1995ansi}
H.~Berenson, P.~Bernstein, J.~Gray, J.~Melton, E.~O'Neil, and P.~O'Neil.
\newblock A critique of ansi sql isolation levels.
\newblock {\em SIGMOD Rec.}, 24(2):1--10, May 1995.

\bibitem{bernstein2015melding}
P.~A. Bernstein, S.~Das, B.~Ding, and M.~Pilman.
\newblock Optimizing optimistic concurrency control for tree-structured,
  log-structured databases.
\newblock In {\em International Conference on Management of Data (SIGMOD)}. ACM
  – Association for Computing Machinery, May 2015.
\newblock SIGMOD'15, May 31 - June 04, 2015, Melbourne, VIC, Australia.

\bibitem{bernstein1981}
P.~A. Bernstein and N.~Goodman.
\newblock Concurrency control in distributed database systems.
\newblock {\em ACM Comput. Surv.}, 13(2):185--221, June 1981.

\bibitem{bernstein1983mcc}
P.~A. Bernstein and N.~Goodman.
\newblock Multiversion concurrency control;theory and algorithms.
\newblock {\em ACM Transactions on Database Systems}, 8(4):465--483, 1983.

\bibitem{bernstein87db}
P.~A. Bernstein, V.~Hadzilacos, and N.~Goodman.
\newblock {\em Concurrency control and recovery in database systems}.
\newblock 1987.

\bibitem{brzezinski04sessions}
B.~Brzezinski, C.~Sobaniec, and W.~D.
\newblock From session causality to causal consistency.
\newblock In {\em Proceedings of the 12th Euromicro Conference on Parallel,
  Distributed and Network based Processing}, PDP 2004, 2004.

\bibitem{cachin2009faust}
C.~Cachin, I.~Keidar, and A.~Shraer.
\newblock Trusting the cloud.
\newblock {\em SIGACT News}, 40(2):81--86, June 2009.

\bibitem{cassandraapps}
Cassandra.
\newblock {Use Cases}.
\newblock \url{http://www.planetcassandra.org/apache-cassandra-use-cases/}.

\bibitem{cerone2015framework}
A.~Cerone, G.~Bernardi, and A.~Gotsman.
\newblock A framework for transactional consistency models with atomic
  visibility.
\newblock In {\em 26th International Conference on Concurrency Theory, {CONCUR}
  2015, Madrid, Spain, September 1.4, 2015}, pages 58--71, 2015.

\bibitem{cerone2015disc}
A.~Cerone, A.~Gotsman, and H.~Yang.
\newblock {\em Distributed Computing: 29th International Symposium, DISC 2015,
  Tokyo, Japan, October 7-9, 2015, Proceedings}, chapter Transaction Chopping
  for Parallel Snapshot Isolation, pages 388--404.
\newblock Springer Berlin Heidelberg, Berlin, Heidelberg, 2015.

\bibitem{chockler2000corba}
G.~Chockler, R.~Friedman, and R.~Vitenberg.
\newblock Consistency conditions for a corba caching service.
\newblock In {\em Proceedings of the 14th International Conference on
  Distributed Computing}, DISC '00, pages 374--388, London, UK, UK, 2000.
  Springer-Verlag.

\bibitem{cooper08pnuts}
B.~F. Cooper, R.~Ramakrishnan, U.~Srivastava, A.~Silberstein, P.~Bohannon,
  H.-A. Jacobsen, N.~Puz, D.~Weaver, and R.~Yerneni.
\newblock {PNUTS}: Yahoo!'s hosted data serving platform.
\newblock {\em Proceedings of the VLDB Endowment}, 1(2):1277--1288, 2008.

\bibitem{crooks2016tardis}
N.~Crooks, Y.~Pu, E.~Nancy, T.~Gupta, L.~Alvisi, and A.~Clement.
\newblock Tardis: A branch-and-merge approach to weak consistency.
\newblock In {\em SIGMOD}, 2016.

\bibitem{decandia2007dynamo}
G.~DeCandia, D.~Hastorun, M.~Jampani, G.~Kakulapati, A.~Lakshman, A.~Pilchin,
  S.~Sivasubramanian, P.~Vosshall, and W.~Vogels.
\newblock Dynamo: Amazon's highly available key-value store.
\newblock In {\em Proceedings of 21st ACM Symposium on Operating Systems
  Principles}, SOSP '07, pages 205--220, 2007.

\bibitem{diaconu2013hekaton}
C.~Diaconu, C.~Freedman, E.~Ismert, P.-A. Larson, P.~Mittal, R.~Stonecipher,
  N.~Verma, and M.~Zwilling.
\newblock Hekaton: Sql server's memory-optimized oltp engine.
\newblock In {\em Proceedings of the 2013 ACM SIGMOD International Conference
  on Management of Data}, SIGMOD '13, pages 1243--1254, New York, NY, USA,
  2013. ACM.

\bibitem{faleiro2015rethinking}
J.~M. Faleiro and D.~J. Abadi.
\newblock Rethinking serializable multiversion concurrency control.
\newblock {\em Proc. VLDB Endow.}, 8(11):1190--1201, July 2015.

\bibitem{fekete2005msi}
A.~Fekete, D.~Liarokapis, E.~O'Neil, P.~O'Neil, and D.~Shasha.
\newblock Making snapshot isolation serializable.
\newblock {\em ACM Trans. Database Syst.}, 30(2):492--528, June 2005.

\bibitem{cloudbigtable}
Google.
\newblock Bigtable - massively scalable nosql.
\newblock \url{https://cloud.google.com/bigtable/}.

\bibitem{cloudsql}
Google.
\newblock Cloud sql - fully managed sql service.
\newblock \url{https://cloud.google.com/sql/}.

\bibitem{guerraoui2008opacity}
R.~Guerraoui and M.~Kapalka.
\newblock On the correctness of transactional memory.
\newblock In {\em Proceedings of the 13th ACM SIGPLAN Symposium on Principles
  and Practice of Parallel Programming}, PPoPP '08, pages 175--184, New York,
  NY, USA, 2008. ACM.

\bibitem{Herlihy90Linearizability}
M.~P. Herlihy and J.~M. Wing.
\newblock Linearizability: a correctness condition for concurrent objects.
\newblock {\em ACM Trans. Program. Lang. Syst.}, 1990.

\bibitem{herlihy1990linearizability}
M.~P. Herlihy and J.~M. Wing.
\newblock Linearizability: a correctness condition for concurrent objects.
\newblock {\em ACM Trans. Program. Lang. Syst.}, 12(3):463--492, July 1990.

\bibitem{imbs2012virtual}
D.~Imbs and M.~Raynal.
\newblock Virtual world consistency: A condition for stm systems (with a
  versatile protocol with invisible read operations).
\newblock {\em Theor. Comput. Sci.}, 444:113--127, July 2012.

\bibitem{kraska2013mdcc}
T.~Kraska, G.~Pang, M.~J. Franklin, S.~Madden, and A.~Fekete.
\newblock Mdcc: multi-data center consistency.
\newblock In {\em Proceedings of the 8th ACM European Conference on Computer
  Systems}, EuroSys '13, pages 113--126, 2013.

\bibitem{lakshman2009cassandra}
A.~Lakshman and P.~Malik.
\newblock Cassandra: structured storage system on a p2p network.
\newblock In {\em Proceedings of the 28th ACM symposium on Principles of
  distributed computing}, PODC '09, pages 5--5, New York, NY, USA, 2009. ACM.

\bibitem{Lamport79How}
L.~Lamport.
\newblock How to make a multiprocessor computer that correctly executes
  multiprocess programs.
\newblock {\em IEEE Trans. Comput.}, 28(9):690--691, Sept. 1979.

\bibitem{li2004sundr}
J.~Li, M.~Krohn, D.~Mazi\`{e}res, and D.~Shasha.
\newblock Secure untrusted data repository (sundr).
\newblock In {\em Proceedings of the 6th conference on Symposium on Opearting
  Systems Design \& Implementation - Volume 6}, OSDI'04, pages 9--9, Berkeley,
  CA, USA, 2004. USENIX Association.

\bibitem{lloyd2011cops}
W.~Lloyd, M.~J. Freedman, M.~Kaminsky, and D.~G. Andersen.
\newblock Don't settle for eventual: scalable causal consistency for wide-area
  storage with {COPS}.
\newblock In {\em Proceedings of the 23rd ACM Symposium on Operating Systems
  Principles}, SOSP '11, pages 401--416, 2011.

\bibitem{lloyd13eiger}
W.~Lloyd, M.~J. Freedman, M.~Kaminsky, and D.~G. Andersen.
\newblock Stronger semantics for low-latency geo-replicated storage.
\newblock In {\em Proceedings of the 10th USENIX Symposium on Networked Systems
  Design and Implementation}, NSDI '13, pages 313--328, 2013.

\bibitem{mahajan11cacTR}
P.~Mahajan, L.~Alvisi, and M.~Dahlin.
\newblock Consistency, availability, convergence.
\newblock Technical Report TR-11-22, Computer Science Department, The
  University of Texas at Austin, May 2011.

\bibitem{mahajan11depot}
P.~Mahajan, S.~Setty, S.~Lee, A.~Clement, L.~Alvisi, M.~Dahlin, and M.~Walfish.
\newblock Depot: Cloud storage with minimal trust.
\newblock {\em ACM Trans. Comput. Syst.}, 29(4):12:1--12:38, Dec. 2011.

\bibitem{azurestorage}
Microsoft.
\newblock Azure storage - secure cloud storage.
\newblock \url{https://azure.microsoft.com/en-us/services/storage/}.

\bibitem{documentdb}
Microsoft.
\newblock Documentdb - nosql service for json.
\newblock \url{https://azure.microsoft.com/en-us/services/documentdb/}.

\bibitem{mongodb}
MongoDB.
\newblock {Agility, Performance, Scalibility. Pick three}.
\newblock \url{https://www.mongodb.org/}.

\bibitem{mu2014extracting}
S.~Mu, Y.~Cui, Y.~Zhang, W.~Lloyd, and J.~Li.
\newblock Extracting more concurrency from distributed transactions.
\newblock In {\em Proceedings of the 11th USENIX Conference on Operating
  Systems Design and Implementation}, OSDI'14, pages 479--494, Berkeley, CA,
  USA, 2014. USENIX Association.

\bibitem{mysqlcluster}
Oracle.
\newblock {MySQL Cluster}.
\newblock \url{https://www.mysql.com/products/cluster/}.

\bibitem{oracle12c}
Oracle.
\newblock Oracle 12c - isolation levels.
\newblock
  \url{https://docs.oracle.com/database/121/CNCPT/consist.htm#CNCPT1320/}.

\bibitem{papadimitriou1979serializability}
C.~H. Papadimitriou.
\newblock The serializability of concurrent database updates.
\newblock {\em J. ACM}, 26(4):631--653, Oct. 1979.

\bibitem{ports2015paxos}
D.~R.~K. Ports, J.~Li, V.~Liu, N.~K. Sharma, and A.~Krishnamurthy.
\newblock Designing distributed systems using approximate synchrony in data
  center networks.
\newblock In {\em Proceedings of the 12th USENIX Conference on Networked
  Systems Design and Implementation}, NSDI'15, pages 43--57, Berkeley, CA, USA,
  2015. USENIX Association.

\bibitem{postgres}
PostgreSQL.
\newblock \url{http://www.postgresql.org/}.

\bibitem{sovran2011walter}
Y.~Sovran, R.~Power, M.~K. Aguilera, and J.~Li.
\newblock Transactional storage for geo-replicated systems.
\newblock In {\em Proceedings of the 23rd ACM Symposium on Operating Systems
  Principles}, SOSP '11, pages 385--400, 2011.

\bibitem{terry94sessionguarantees}
D.~B. Terry, A.~J. Demers, K.~Petersen, M.~J. Spreitzer, M.~M. Theimer, and
  B.~B. Welch.
\newblock Session guarantees for weakly consistent replicated data.
\newblock In {\em Proceedings of the 3rd International Conference on on
  Parallel and Distributed Information Systems}, PDIS '94, pages 140--150,
  1994.

\bibitem{terry2013pileus}
D.~B. Terry, V.~Prabhakaran, R.~Kotla, M.~Balakrishnan, M.~K. Aguilera, and
  H.~Abu-Libdeh.
\newblock Consistency-based service level agreements for cloud storage.
\newblock In {\em Proceedings of the 24th ACM Symposium on Operating Systems
  Principles}, SOSP '13, pages 309--324, 2013.

\bibitem{terry95bayou}
D.~B. Terry, M.~M. Theimer, K.~Petersen, A.~J. Demers, M.~J. Spreitzer, and
  C.~H. Hauser.
\newblock Managing update conflicts in {Bayou}, a weakly connected replicated
  storage system.
\newblock In {\em Proceedings of the 15th ACM Symposium on Operating Systems
  Principles}, SOSP '95, pages 172--182, 1995.

\bibitem{thomson2012calvin}
A.~Thomson, T.~Diamond, S.-C. Weng, K.~Ren, P.~Shao, and D.~J. Abadi.
\newblock Calvin: Fast distributed transactions for partitioned database
  systems.
\newblock In {\em Proceedings of the 2012 ACM SIGMOD International Conference
  on Management of Data}, SIGMOD '12, pages 1--12, 2012.

\bibitem{Tu2013silo}
S.~Tu, W.~Zheng, E.~Kohler, B.~Liskov, and S.~Madden.
\newblock Speedy transactions in multicore in-memory databases.
\newblock In {\em Proceedings of the Twenty-Fourth ACM Symposium on Operating
  Systems Principles}, SOSP '13, pages 18--32, New York, NY, USA, 2013. ACM.

\bibitem{voldemort}
Voldemort.
\newblock {Project Voldemort: A distributed database}.
\newblock Online, Mar. 2012.

\bibitem{zhang2015tapir}
I.~Zhang, N.~K. Sharma, A.~Szekeres, A.~Krishnamurthy, and D.~R.~K. Ports.
\newblock Building consistent transactions with inconsistent replication.
\newblock In {\em Proceedings of the 25th Symposium on Operating Systems
  Principles}, SOSP '15, pages 263--278, New York, NY, USA, 2015. ACM.

\end{thebibliography}
}
\pagebreak
\setcounter{section}{0}
\setcounter{subsection}{0}
\setcounter{figure}{0}
\setcounter{table}{0}
\appendix
\renewcommand\thesection{Appendix \Alph{section}}
\renewcommand\thefigure{\Alph{section}\arabic{figure}}
\renewcommand\thetable{\Alph{section}\arabic{table}}

\section{Adya et al's model for specifying weak isolation\nc{TODO: find a better name}}
\label{appendix:adya}
Adya et al.~\cite{adya99weakconsis} introduces a cycle-based framework
for specifying weak isolation levels. We summarize its main
definitions and theorems here.

To capture a given system run, Adya uses the  notion of \textit{history}.
\begin{definition}
	A history H over a set
	of transactions consists of two parts: i) a partial order of events E that reflects the operations (e.g.,
	read, write, abort, commit) of those transactions, and ii) a version order, $<<$, that is a total order on
	committed object versions.
\end{definition}
We note that the version-order associated with a history is implementation specific. As stated in Bernstein et al~\cite{bernstein1983mcc}: as long as there exists a version order such that the corresponding direct serialization graph satisfies a given isolation level, the history satisfies
that isolation level.

The model introduces several types of direct read/write conflicts,
used to specify the \textit{direct serialization graph}.

\begin{definition}Direct conflicts:
		\begin{description}
			\item[Directly write-depends] $T_i$ writes a version of $x$, and $T_j$ writes the next version of $x$, denoted as $T_i \xrightarrow{ww} T_j$ 
			\item[Directly read-depends] $T_i$ writes a version of $x$, and $T_j$ reads the version of $x$ $T_i$ writes, denoted as $T_i \xrightarrow{wr} T_j$ 
			\item[Directly anti-depends] $T_i$ reads a version of $x$, and $T_j$ writes the next version of $x$, denoted as $T_i \xrightarrow{rw} T_j$
            \end{description}
\end{definition}

\begin{definition} Time-Precedes Order. 
	The time-precedes order, $\tprec$ , is a partial order specified for history H such that:
	\begin{enumerate}
		\item $b_i \tprec c_i$, i.e., the start point of a transaction precedes its commit point.
		\item for all i and j, if the scheduler chooses $T_j$'s start point after 
		$T_i$ 's commit point, 
		we have $c_i \tprec s_j$; otherwise, we have $b_j \tprec c_i$.
	\end{enumerate}
\end{definition}

\begin{definition}
	Direct Serialization Graph. We define the direct serialization graph arising from
	a history H, denoted DSG(H), as follows. Each node in DSG(H) corresponds to a committed
	transaction in H and directed edges correspond to different types of direct conflicts. There is a
	read/write/anti-dependency edge from transaction $T_i$ to transaction $T_j$ if $T_j$ directly read/write/antidepends
	on $T_i$. 
\end{definition}

The model is augmented with a logical notion of time, used to define the \textit{start-ordered serialization graph}.

\begin{definition}
	Start-Depends. $T_j$ start-depends on $T_i$ if $c_i \tprec s_j$ , i.e., if it starts after $T_i$ commits. We write $T_i \xrightarrow{b} T_j$ 
\end{definition}

\begin{definition}
	Start-ordered Serialization Graph or SSG. For a history H, SSG(H) contains the
	same nodes and edges as DSG(H) along with start-dependency edges.
\end{definition}

The model introduces several \textit{phenomema}, of which isolation levels proscribe a subset.

\begin{definition}Phenomena:
    
		\begin{description}
			\item[\gzero: Write Cycles] A history H exhibits phenomenon
			\gzero if DSG(H) contains a directed cycle consisting entirely of write-dependency edges.
			\item [ \gonea: Dirty Reads] A history H exhibits phenomenon \gonea~if it contains an aborted
			transaction $T_i$ and a committed transaction $T_j$ such that
			$T_j$ has read the same object (maybe via a predicate) modified by $T_i$.
			\item [ \goneb: Intermediate Reads] A history H exhibits phenomenon
			\goneb~if it contains a committed transaction $T_j$ that has read a version of
			object x written by transaction $T_i$ that was not $T_i$'s final modification
			of x.
			\item [ \gonec: Circular Information Flow] A history H exhibits phenomenon
		    \gonec if DSG(H) contains a directed cycle consisting entirely of
			dependency edges.
			\item [\gtwo: Anti-dependency Cycles] A history H exhibits phenomenon \gtwo if DSG(H) contains a directed cycle having one or more anti-dependency edges.
            \item[\gs: Single Anti-dependency Cycles] {\em DSG(H) contains a directed cycle with
exactly one anti-dependency edge.}
			\item[\gsia: Interference] A history H exhibits phenomenon \gsia~if SSG(H) contains a
			read/write-dependency edge from $T_i$ to $T_j$ without there also being a start-dependency
			edge from $T_i$ to $T_j$.
			\item [\gsib: Missed Effects] A history H exhibits phenomenon \gsib~if SSG(H) contains
			a directed cycle with exactly one anti-dependency edge. 
		\end{description}
	\end{definition}

\begin{definition}Isolations:
		\begin{description}
			\item[Serializability (PL-3)] $\lnot \gone \land \lnot \gtwo$
			\item[Read Committed (PL-2)] $\lnot \gone$
			\item[Read Uncommitted (PL-1)] $\lnot \gzero$
			\item[Snapshot Isolation]  $\lnot \gone \land \lnot \gsi$
		\end{description}
\end{definition}

\section{State-based and cycle-based model equivalence}
\label{appendix:model}
This section proves the following theorems:
\par \textbf{Theorem ~\ref{theorem:ser}}. Let $\mathcal{I}$ be Serializability (SER). Then $\ \ \exists e:\forall t
  \in \mathcal{T}. \commitmath{SER}{t}{e}  \equiv \lnot \gone \land \lnot
  \gtwo$.

\par \textbf{Theorem ~\ref{theorem:si}}. Let $\mathcal{I}$ be Snapshot Isolation (SI). Then $\ \ \exists e:\forall t
  \in \mathcal{T}. \commitmath{SI}{t}{e} \equiv \lnot \gone \land \lnot \gsi$

\par \textbf{Theorem ~\ref{theorem:rc}}. Let $\mathcal{I}$ be Read Committed (RC). Then $\ \ \exists e:\forall t
  \in \mathcal{T}. \commitmath{RC}{t}{e}  \equiv  \lnot \gone$

\subsection{Serializability}
\label{appendix:ser}
\par \textbf{Theorem ~\ref{theorem:ser}}. Let $\mathcal{I}$ be Serializability (SER). Then $\ \ \exists e:\forall t
  \in \mathcal{T}. \commitmath{SER}{t}{e}  \equiv \lnot \gone \land \lnot
  \gtwo$.

\begin{proof}
	  \textbf{We first prove} $\lnot \gone \land \lnot \gtwo \Rightarrow \exists e: \forall t \in \mathcal{T}: \commitmath{SER}{t}{e}$. 
	
Let $H$ define a history over $\mathcal{T}=\{t_1, t_2, ..., t_n\}$
  and let $DSG(H)$ be the corresponding direct serialization graph.
 Together $\lnot \gonec$ and $\lnot \gtwo$ state that the DSG(H) must not contain anti-dependency or dependency cycles: DSG(H) must therefore be acyclic. Let $i_1,...i_n$ be a permutation of $1,2,...,n$ such that $t_{i_1},...,t_{i_n}$ is a topological sort of DSG(H) (DSG(H) is acyclic and can thus be topologically sorted).

 We construct an execution $e$ according to the topological order defined above: $e: s_0 \rightarrow s_{t_{i_1}}\rightarrow s_{t_{i_2}} \rightarrow ... \rightarrow s_{t_{i_n}}$ and show that $\forall t
  \in \mathcal{T}. \commitmath{SER}{t}{e}$. Specifically, we show that for
  all $t = t_{i_j}, \completemath{e}{t_{i_j}}{s_{t_{i_{j-1}}}}$ where $s_{t_{i_{j-1}}}$
  is the parent state of $t_{i_j}$.
	
Consider the three possible types of operations in $t_{i_j}$: 
\begin{enumerate}
\item \textit{External Reads}: an operation reads an object version
that was created by another transaction. 
\item \textit{Internal Reads}: an operation reads an object version that
itself created.
\item \textit{Writes}: an operation creates a new object version. 
\end{enumerate}
We show that the parent state of $t_{i_j}$ is included in the read set of
each of those operation types:
\begin{enumerate}
		\item \textit{External Reads}. Let $r_{i_j}(x_{i_k})$  read the
        version for $x$ created by $t_{i_k}$, where $k \neq j$.\\ \\
We first show that $s_{t_{i_k}} \xrightarrow[]{*}  s_{t_{i_{j-1}}}$.
As $t_{i_j}$ directly read-depends on $t_{i_k}$, there must exist an edge $t_{i_k}\xrightarrow{wr}t_{i_j}$ in $DSG(H)$, and  $t_{i_k}$ must
therefore be ordered before $t_{i_j}$ in the topological sort of $DSG(H)$ ($k<j$). Given $e$ was constructed by applying every transaction in $\mathcal{T}$ in topological order, it follows that  $s_{t_{i_k}} \xrightarrow[]{*}  s_{t_{i_{j-1}}}$. \\ \\
Next, we argue that the state $s_{t_{i_{j-1}}}$ contains the object-value pair $(x,x_{i_k})$. Specifically, we show that there does not exists a $s_{t_{i_l}}$, where $k<l<j$, such that $t_{i_l}$ writes a different version of $x$. We prove this by contradiction. Consider the smallest such $l$: $t_{i_j}$ reads the version of $x$ written by $t_{i_k}$ and $t_{i_l}$ writes a different version of $x$.
\changebars{$t_{i_l}$, in fact, writes the next version of $x$ as $e$ is constructed following $ww$ dependencies: if there existed an intermediate version of $x$, then either $t_{i_l}$ was not the smallest transaction, or $e$ does not respect $ww$ dependencies.}{}
Note that $t_{i_j}$ thus directly anti-depends on $t_{i_l}$, i.e. $t_{i_j}\xrightarrow{rw} t_{i_l}$.
As the topological sort of $DSG(H)$ from which we constructed $e$ respects
anti-dependencies, we finally have  $s_{i_j}\xrightarrow[]{*} s_{t_{i_l}}$, i.e. $j \leq l$, a contradiction. We conclude: 	$(x,x_{i_k})\in s_{t_{i_{j-1}}} $, and therefore $s_{t_{i_{j-1}}} \in \RSmath{r_{i_j}(x_{i_k})}$.
		\item \textit{Internal Reads}. Let $r_{i_j}(x_{i_j})$
        read $x_{i_j}$ such that $w(x_{i_j})\xrightarrow{to}r(x_{i_j})$. By definition, the read state set of such an operation consists
        of $\forall s \in \mathcal{S}_e: s\xrightarrow{*} s_p$. Since $s_{t_{i_{j-1}}}$ is $t_{i_j}$'s parent state, it trivially follows that $s_{t_{i_{j-1}}} \in \RSmath{r_{i_j}(x_{i_j})}$. 
		\item \textit{Writes}.  Let $w_{i_j}(x_{i_j})$ be a write operation. By definition, its read state set consists of all the states before $s_{t_{i_{j}}}$ in
        the execution. Hence it also trivially follows that $s_{t_{i_{j-1}}} \in \RSmath{w_{i_j}(x_{i_j})}$.
	\end{enumerate}
	
	Thus $s_{t_{i_{j-1}}} \in \bigcap\limits_{o \in \Sigma_{t_{i_{j}}}} \RSmath{o} $. We have $\completemath{e}{t_{i_j}}{s_{t_{i_{j-1}}}}$ for any $t_{i_j}: \forall t \in \mathcal{T}: \commitmath{SER}{t}{e}$.

	($\Leftarrow$) \textbf{We next prove} $\exists e: \forall t \in
          \mathcal{T}: \commitmath{SER}{t}{e} \Rightarrow  \lnot \gone
          \wedge \lnot \gtwo$.

To do so, we prove the contrapositive $\gone \vee \gtwo \Rightarrow \forall e\ \exists t \in \mathcal{T}: \lnot\commitmath{SER}{t}{e}$. Let $H$ be a history that displays phenomena $\gone$ or $\gtwo$. We generate a contradiction. Consider any execution
$e$ such that $\forall t \in \mathcal{T}: \commitmath{SER}{t}{e}$. 

We first instantiate the version order for $H$, denoted as $<<$, as follows: given an execution $e$ and an object $x$, $ x_i << x_j$ if and only if $x \in \mathcal{W}_{t_i} \cap \mathcal{W}_{t_j} \land  s_{t_{i}} \xrightarrow{*} s_{t_{j}} $.
	
First, we show that: 	
	\begin{clm}
		$t_i \rightarrow t_j$ in $DSG(H) \Rightarrow s_{t_i} \xrightarrow{*} s_{t_j}$ in the execution e ($i\neq j$).
		\label{ser1}
	\end{clm}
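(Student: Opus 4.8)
The plan is to prove Claim~\ref{ser1} by case analysis on the type of the edge $t_i \to t_j$ in $DSG(H)$, which by definition is one of three kinds: a write-dependency $t_i \xrightarrow{ww} t_j$, a read-dependency $t_i \xrightarrow{wr} t_j$, or an anti-dependency $t_i \xrightarrow{rw} t_j$. Throughout I would lean on two facts. First, the serializability commit test forces $\completemath{e}{t}{s_p}$ for every $t$, so every operation of $t$ reads from $t$'s parent state $s_p$; in particular every external read of $t$ observes the version of its key that is current in $s_p$. Second, the version order has been instantiated precisely so that $x_i << x_j$ holds iff $x \in \mathcal{W}_{t_i} \cap \mathcal{W}_{t_j}$ and $s_{t_i} \xrightarrow{*} s_{t_j}$. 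As a preliminary observation, immediate from Definition~\ref{def:statetrans} together with the fact that the initial state maps every key to $\bot$, the value of a key $x$ in any state $s$ is the version written by the latest, in state order, transaction that writes $x$ and whose state precedes or equals $s$; equivalently, once a transaction writing a newer version of $x$ has been applied, no earlier version of $x$ can reappear. All three cases ultimately rest on this observation, so I would make it fully rigorous first.

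The write-dependency and read-dependency cases are then direct. For $t_i \xrightarrow{ww} t_j$ the claim is a restatement of the version-order instantiation: the edge means $x_i << x_j$ for some key $x$, which by construction yields $s_{t_i} \xrightarrow{*} s_{t_j}$ immediately. For $t_i \xrightarrow{wr} t_j$, the edge means $t_j$ reads the version $x_i$ that $t_i$ wrote; since $i \neq j$ this is an external read, so by the commit test it is served from $t_j$'s parent state $s_{p_j}$, giving $(x, x_i) \in s_{p_j}$. Because $x_i$ first appears in the state $s_{t_i}$ produced by $t_i$ and is still present in $s_{p_j}$, the preliminary observation gives $s_{t_i} \xrightarrow{*} s_{p_j}$, and composing with the single transition $s_{p_j} \xrightarrow{} s_{t_j}$ yields $s_{t_i} \xrightarrow{*} s_{t_j}$.

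The anti-dependency case is the one I expect to be the main obstacle, since it requires an argument by contradiction rather than a direct reading-off. Here $t_i$ reads some version $x_k$ of $x$ (written by a transaction $t_k$) and $t_j$ writes the next version, so $x_k << x_j$. By the commit test, $(x, x_k) \in s_{p_i}$, i.e. $x_k$ is current in $t_i$'s parent state. Suppose, for contradiction, that $\lnot (s_{t_i} \xrightarrow{*} s_{t_j})$; since $e$ totally orders states and $i \neq j$, this forces $s_{t_j} \xrightarrow{+} s_{t_i}$, and together with the single step $s_{p_i} \xrightarrow{} s_{t_i}$ it follows that $s_{t_j} \xrightarrow{*} s_{p_i}$. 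Thus $t_j$, which writes $x_j$ with $x_k << x_j$, has already been applied by the time the system reaches $s_{p_i}$; by the preliminary observation the version of $x$ current in $s_{p_i}$ can then be no older than $x_j$, contradicting $(x, x_k) \in s_{p_i}$ with $x_k << x_j$. Hence $s_{t_i} \xrightarrow{*} s_{t_j}$, completing the case and the claim. The care points to watch are keeping the strict ($\xrightarrow{+}$) and non-strict ($\xrightarrow{*}$) orderings straight when passing between a transaction's state and its parent state, and ensuring the ``no stale version reappears'' observation is stated precisely enough to discharge the anti-dependency contradiction.
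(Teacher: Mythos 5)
Your proposal is correct and takes essentially the same approach as the paper: a case analysis over the three $DSG(H)$ edge types, using the serializability commit test (all reads served from the parent state) together with the instantiated version order, and resting on the observation that a key's value in a state is that written by the latest preceding writer. The only cosmetic difference is that you argue the anti-dependency case by contradiction, which is just the contrapositive of the paper's direct argument that $t_j$ cannot yet have been applied when $t_i$ reads version $x_m$ from its parent state.
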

	\begin{proof}
		Consider the three edge types in $DSG(H)$:
		\begin{description}
			\item{$t_i\xrightarrow{ww}t_j$} There exists an object $x$ s.t. $x_i<<x_j$ (version order). 
By construction, we have $s_{t_i} \xrightarrow{*} s_{t_j}$.
			
			\item{$t_i\xrightarrow{wr}t_j$} There exists an object $x$ s.t. $t_j$ reads version 
            $x_i$ written by $t_i$. Let $s_{t_k}$ be the parent state of $s_{t_j}$, i.e. $s_{t_k} \rightarrow s_{t_j}$. By assumption $\commitmath{SER}{e}{{t}}$ ($t=t_j$), i.e. $\completemath{e}{t_j}{s_{t_k}}$, hence we have $(x,x_i) \in s_{t_{k}}$. For the effects of $t_{i}$ to be visible in $s_{t_{k}}$, $t_i$ must have been applied at an earlier point in the execution. Hence we have: $s_{t_i} \xrightarrow{*} s_{t_k} \rightarrow s_{t_j}$.
			\item{$t_i\xrightarrow{rw}t_j$} There exist an object $x$ s.t. $t_i$ reads version $x_m$ written by $t_m$, $t_j$ writes $x_j$ and $x_m << x_j$. By construction, $x_m << x_j$ implies $s_{t_m} \xrightarrow{*} s_{t_j}$. Let $s_{t_k}$ be the parent state of $s_{t_j}$, i.e. $s_{t_k} \rightarrow s_{t_i}$. As $\commitmath{SER}{e}{{t}}$, where $t=t_j$, holds
by assumption, i.e. $\completemath{e}{t_i}{s_{t_k}}$, the key-value pair $(x,x_m) \in s_{t_{k}}$, hence $s_{t_m} \xrightarrow{*}s_{t_k}$ as before. In contrast, $s_{t_i} \xrightarrow{*} s_{t_j}$: indeed,$(x,x_m) \in s_{t_k}$ and $x_m << x_j$. Hence,
$t_j$ has not yet been applied.  We thus have
$s_{t_k} \xrightarrow{} s_{t_i} \xrightarrow{*} s_{t_j}$.
		\end{description}
		
	\end{proof}

We now derive a contradiction in all cases of the
disjunction $\gone \lor \gtwo$: 
\begin{itemize}
\item Let us assume that $H$ exhibits
phenomenon $\gonea$ (aborted reads). There
must exists events $w_i(x_i), r_j(x_i)$ in
$H$ such that $t_i$ subsequently aborted. $\mathcal{T}$ and any corresponding execution $e$, however, consists only of committed
transactions. Hence $\forall e: \not\exists s \in \mathcal{S}_e, s.t.\ s \in \RSmath{ r_j(x_i)}$: no
complete state can exists for $t_j$. There thus
exists a transaction for which the commit test cannot be satisfied, for any e. We have a contradiction.
\item  Let us assume that $H$ exhibits 
phenomenon $\goneb$ (intermediate reads). In an execution $e$, only the final writes of a transaction are applied. Hence,$\not\exists s \in \mathcal{S}_e, s.t.\ s \in \RSmath{ r(x_{intermediate})}$. There
thus exists a transaction, which for all $e$, will not 
satisfy the commit test. We once again have a contradiction.
		\item Finally, let us assume that the history $H$ displays one or both phenomena $\gonec$ or $\gtwo$. 
Any history that displays $\gonec$ or $\gtwo$ will contain a cycle in the $DSG$. Hence, there must exist a chain of transactions $t_i\rightarrow t_{i+1} \rightarrow ... \rightarrow t_{j}$ such that $i=j$ in $DSG(H)$.
By Claim~\ref{ser1}, we thus have $s_{t_i} \xrightarrow{*} s_{t_{i+1}} \xrightarrow{*} \dots \xrightarrow{*} s_{t_j}$ for any $e$. By definition however, a valid execution must be totally ordered. We have our
final contradiction.
\end{itemize}
All cases generate a contradiction. We have
$\gone \vee \gtwo \Rightarrow \forall e: \exists t \in \mathcal{T}: \lnot\commitmath{SER}{e}{t}$. This
completes the proof. 
\end{proof}

\subsection{Snapshot Isolation}
\label{appendix:si}
\par \textbf{Theorem ~\ref{theorem:si}}. Let $\mathcal{I}$ be Snapshot Isolation (SI). Then $\ \ \exists e:\forall t
  \in \mathcal{T}. \commitmath{SI}{t}{e} \equiv \lnot \gone \land \lnot \gsi$

\begin{proof}
	  \textbf{We first prove} $\lnot \gone \land \lnot \gsi \Rightarrow \exists e: \forall t \in \mathcal{T}: \commitmath{SI}{t}{e}$.

\par {\bf \lgraph\ properties} To do so, we introduce the notion of \textit{logical order} between transactions and capture this relationship in a \textit{logic-order}\changebars{ directed }{}\nc{find
better name} graph\changebars{ (LDG)}{}.\yp{How is Logic-order Directed Graph?} This
order refines the pre-existing start-order present in snapshot isolation to include transitive observations. $\lgraph$ contains, as nodes,
the set of committed transactions in $H$ and includes the following two edges:
\begin{itemize}
\item $t_i \xrightarrow{\ledge{1}} t_j$ iff $t_i \xrightarrow[]{\bstp} t_j$
(and $t_i \neq t_j$). \la{I assume this was $t_i$, not $t_1$} This is a
  simple renaming of the $\bstp$-edge, for clarity. \la{What is a b-edge?}
\item $t_i \xrightarrow{\ledge{2}} t_j$ iff   $\exists t_k: t_i \xrightarrow[]{\bstp} t_k \xrightarrow[]{rw} t_j$. Intuitively, this
edge captures the observation that if $t_1$ is ordered before $t_2$ and $t_2$'s reads entail that $t_2$ is before $t_3$, then $t_1$ must be before $t_3$. We note that, by construction $t_i \neq t_j$ (otherwise, the history $H$ will display
a \gsib{} cycle).
\end{itemize}

Both $\ledge{1}$ and $\ledge{2}$ edges are transitive:
\begin{clm}
\label{l1trans}
	$\ledge{1}$ edge is transitive:
$t_1 \xrightarrow[]{\ledge{1}} t_2 \xrightarrow[]{\ledge{1}} t_3 \Rightarrow t_1 \xrightarrow[]{\ledge{1}} t_3$
\end{clm}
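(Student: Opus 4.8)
The plan is to unfold the $\ledge{1}$ edge back to the underlying time-precedes order $\tprec$ and then simply invoke its transitivity, bridging through the intermediate transaction's start and commit points. Recall that $t_i \xrightarrow{\ledge{1}} t_j$ is by definition nothing more than the start-dependency $t_i \xrightarrow{\bstp} t_j$ (with $t_i \neq t_j$), which holds exactly when $c_i \tprec b_j$, i.e.\ when $t_i$'s commit point precedes $t_j$'s start point in $\tprec$. So the hypotheses $t_1 \xrightarrow{\ledge{1}} t_2$ and $t_2 \xrightarrow{\ledge{1}} t_3$ translate into $c_1 \tprec b_2$ and $c_2 \tprec b_3$, and the goal $t_1 \xrightarrow{\ledge{1}} t_3$ reduces to establishing $c_1 \tprec b_3$ together with the side condition $t_1 \neq t_3$.

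First I would bridge the two hypotheses through $t_2$. The defining property of the time-precedes order guarantees that a transaction's start point precedes its commit point, so $b_2 \tprec c_2$. Chaining the three facts yields $c_1 \tprec b_2 \tprec c_2 \tprec b_3$, and since $\tprec$ is a partial order---hence transitive---we obtain $c_1 \tprec b_3$. By the definition of start-dependency this is precisely $t_1 \xrightarrow{\bstp} t_3$.

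It then remains to discharge the distinctness condition $t_1 \neq t_3$ required by the $\ledge{1}$ edge. I would argue by contradiction: if $t_1 = t_3$, then $b_3$ and $b_1$ are the same event, so the conclusion $c_1 \tprec b_3$ reads $c_1 \tprec b_1$; combined with the always-valid $b_1 \tprec c_1$, transitivity produces the cycle $c_1 \tprec b_1 \tprec c_1$, which is impossible since $\tprec$, being a partial order on the distinct start/commit events of a history, admits no cycles. Hence $t_1 \neq t_3$, and $t_1 \xrightarrow{\ledge{1}} t_3$ follows.

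The argument is essentially mechanical once the renaming is stripped away, and I do not expect any serious obstacle: no appeal to the read/write structure of $H$ or to the execution $e$ is needed, because $\ledge{1}$ is purely a renaming of the timestamp-based start-order. The only points requiring care are to route the chain through $b_2 \tprec c_2$ rather than attempting to compose $c_1 \tprec b_2$ with $c_2 \tprec b_3$ directly (they share no common endpoint), and not to forget the distinctness side condition, which is what forces the small acyclicity argument at the end.
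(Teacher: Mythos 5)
Your proof is correct and follows essentially the same route as the paper's: unfold $\ledge{1}$ into the start-dependencies $c_1 \tprec b_2$ and $c_2 \tprec b_3$, bridge through $b_2 \tprec c_2$, and chain by transitivity of $\tprec$ to obtain $c_1 \tprec b_3$, hence $t_1 \xrightarrow{\bstp} t_3$. Your explicit discharge of the side condition $t_1 \neq t_3$ via the acyclicity of $\tprec$ is a small, sound addition that the paper leaves implicit.
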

\begin{proof}
We have  $t_1 \xrightarrow[]{\ledge{1}} t_2 \Leftrightarrow t_1 \xrightarrow[]{\bstp} t_2 \Leftrightarrow c_1 \tprec b_2$. By definition of $\tprec$,
$b _2 \tprec c_2$. Hence,  $t_1 \xrightarrow[]{\ledge{1}} t_2
\Leftrightarrow b_2 \tprec c_2$. Similarly, $t_2 \xrightarrow[]{\ledge{1}} t_3
\Leftrightarrow c_2 \tprec b_3$. Consequently, we have $c_1 \tprec b_2 \tprec c_2 \tprec b_3$. By definition $c_1 \tprec b_3 \Rightarrow t_1 \xrightarrow[]{\bstp} t_3 \Rightarrow t_1 \xrightarrow[]{\ledge{1}} t_3$.
The proof is complete. \la{what are b and c?}
\end{proof}

\begin{clm}
	\label{l2trans}
	$\ledge{2}$ edge is transitive: $t_1 \xrightarrow[]{\ledge{2}} t_2 \xrightarrow[]{\ledge{2}} t_3 \Rightarrow t_1 \xrightarrow[]{\ledge{2}} t_3$
\end{clm}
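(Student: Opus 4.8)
The plan is to mirror the structure of Claim~\ref{l1trans}: unfold each $\ledge{2}$ edge into its defining witness, reduce everything to statements about the time-precedes order $\tprec$, and then exploit the fact that $\tprec$ is a partial order (hence transitive) to assemble a single witness for $t_1 \xrightarrow{\ledge{2}} t_3$. Concretely, from $t_1 \xrightarrow{\ledge{2}} t_2$ I obtain some $t_a$ with $t_1 \xrightarrow{\bstp} t_a \xrightarrow{rw} t_2$, and from $t_2 \xrightarrow{\ledge{2}} t_3$ some $t_b$ with $t_2 \xrightarrow{\bstp} t_b \xrightarrow{rw} t_3$. My candidate witness for the composite edge is $t_b$ itself: since $t_b \xrightarrow{rw} t_3$ already holds, it suffices to establish $t_1 \xrightarrow{\bstp} t_b$, i.e. $c_1 \tprec b_b$.

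The crux is a lemma about anti-dependencies under snapshot isolation: whenever $t_i \xrightarrow{rw} t_j$, the reader starts before the writer commits, i.e. $b_i \tprec c_j$. This follows from the snapshot-read rule (a transaction observes the most recent committed version of an object as of its start point $b_i$): were $t_j$ to commit before $b_i$, then on the conflicting object $x$ the transaction $t_i$ would have read $t_j$'s version (or a later one) rather than the version that $t_j$ overwrites, contradicting $t_i \xrightarrow{rw} t_j$. Hence $\lnot(c_j \tprec b_i)$, and since the time-precedes order makes every commit/start pair comparable, $b_i \tprec c_j$. Applying this to $t_a \xrightarrow{rw} t_2$ yields $b_a \tprec c_2$.

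With this lemma in hand the rest is bookkeeping on $\tprec$. Unfolding the two edges gives $c_1 \tprec b_a$ (from $t_1 \xrightarrow{\bstp} t_a$), $b_a \tprec c_2$ (the lemma), and $c_2 \tprec b_b$ (from $t_2 \xrightarrow{\bstp} t_b$). Transitivity of $\tprec$ then collapses the chain $c_1 \tprec b_a \tprec c_2 \tprec b_b$ into $c_1 \tprec b_b$, which is precisely $t_1 \xrightarrow{\bstp} t_b$. Combined with $t_b \xrightarrow{rw} t_3$, this exhibits $t_b$ as the witness establishing $t_1 \xrightarrow{\ledge{2}} t_3$, completing the proof.

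I expect the only delicate step to be the anti-dependency lemma: it is the one place where the argument leaves pure order manipulation and appeals to the semantics of snapshot isolation (the start-snapshot read rule together with the totality of $\tprec$ on commit/start pairs). Everything else reduces to transitivity of $\tprec$. I would also sanity-check the degenerate cases, confirming that the intermediate witnesses are distinct from the endpoints where required so that the composite edge is well-formed and does not secretly encode a self-loop (the paper already observes that an $\ledge{2}$ edge has $t_i \neq t_j$, else $H$ would contain a $\gsib$ cycle).
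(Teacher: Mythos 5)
Your proposal has exactly the paper's skeleton: unfold the two $\ledge{2}$ edges into witnesses ($t_a,t_b$ for you, $t_{12},t_{23}$ in the paper), take the second witness $t_b$ as the witness for the composite edge, and reduce the goal to the chain $c_1 \tprec b_a \tprec c_2 \tprec b_b$, whose only non-trivial link is the middle inequality $b_a \tprec c_2$.

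The gap is in how you justify that middle inequality. Your lemma ($t_i \xrightarrow{rw} t_j \Rightarrow b_i \tprec c_j$) is true here, but your justification appeals to a ``snapshot-read rule''---that a transaction observes the most recent committed version of each object as of its start point. No such operational rule is among the hypotheses: this claim lives inside the proof of Theorem~\ref{theorem:si}, in the direction where all we know about the history $H$ is that it proscribes \gone{} and \gsi{} in Adya's axiomatic framework. $H$ is not assumed to be generated by a snapshot-based scheduler, and deriving ``reads come from a start-time snapshot'' from the proscribed phenomena is essentially what the enclosing theorem is establishing, so invoking that semantics at this point is unjustified (arguably circular). The repair is immediate and is what the paper does: if $\lnot(b_a \tprec c_2)$, then by the dichotomy in the definition of the time-precedes order we have $c_2 \tprec b_a$, i.e.\ the edge $t_2 \xrightarrow{\bstp} t_a$ is in $SSG(H)$; together with $t_a \xrightarrow{rw} t_2$ this forms a directed cycle with exactly one anti-dependency edge, i.e.\ phenomenon \gsib, contradicting $\lnot\gsi$. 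Hence $b_a \tprec c_2$. With that one-line substitution, your argument coincides with the paper's proof.
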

\begin{proof}
By definition of $\ledge{2}$, there must exist transactions $t_{12}, t_{23}$, such that $t_1 \xrightarrow[]{\bstp} t_{12} \xrightarrow[]{rw} t_2$ and  $t_2 \xrightarrow[]{\bstp} t_{23} \xrightarrow[]{rw} t_3$. Consider the path in $SSG(H)$: $t_1 \xrightarrow[]{\bstp} t_{12} \xrightarrow[]{rw} t_2 \xrightarrow[]{\bstp} t_{23}$. We show that: $t_1  \xrightarrow[]{\bstp} t_{23}$.
By definition, $t_1 \xrightarrow[]{\bstp} t_{12} \Rightarrow c_1 \tprec b_{12}$ and $t_{2} \xrightarrow[]{\bstp} t_{23} \Rightarrow c_{2} \tprec b_{23}$. 
By assumption, the history $H$ does not contain an \gsib{} cycle, there thus cannot exists an edge $t_{12} \xleftarrow[]{\bstp} t_{2}$ as $t_2$ either happens after or is concurrent with $t_{12}$. In both cases, we have $b_{12} \tprec c_2$.
The following relation thus hold:
 $c_1 \tprec b_{12} \tprec c_2 \tprec b_{23} $. By definition, the edge
$ t_1\xrightarrow[]{\bstp} t_{23} $ is therefore included in the $SSG(H)$, as
is the path $t_1 \xrightarrow[]{\bstp} t_{23} \xrightarrow[]{rw} t_3$.
Hence, by construction, $t_1 \xrightarrow{\ledge{2}} t_3$ exists in $\lgraph$. This completes the proof.
\end{proof}

\begin{clm}
\label{l2merge}
 $t_1 \xrightarrow[]{\ledge{2}} t_2 \xrightarrow{\ledge{1}} t_3
\Rightarrow t_1 \xrightarrow[]{\ledge{1}} t_3$.
\end{clm}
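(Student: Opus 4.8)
The plan is to unfold the two hypotheses into statements about the time-precedes order $\tprec$ and then close the gap using the absence of $\gsib$ cycles, exactly as in the proof of Claim~\ref{l2trans}. First, from $t_1 \xrightarrow{\ledge{2}} t_2$ I obtain, by definition of the $\ledge{2}$-edge, a transaction $t_k$ with $t_1 \xrightarrow{\bstp} t_k \xrightarrow{rw} t_2$; the start-dependency $t_1 \xrightarrow{\bstp} t_k$ gives $c_1 \tprec b_k$. From $t_2 \xrightarrow{\ledge{1}} t_3$ I get $t_2 \xrightarrow{\bstp} t_3$, i.e. $c_2 \tprec b_3$. Since the goal $t_1 \xrightarrow{\ledge{1}} t_3$ amounts to establishing $c_1 \tprec b_3$, it suffices to bridge $b_k$ and $c_2$ by proving $b_k \tprec c_2$ and then chain $c_1 \tprec b_k \tprec c_2 \tprec b_3$.

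The key step is $b_k \tprec c_2$, and this is where the hypothesis $\lnot \gsi$ (specifically $\lnot \gsib$) enters. Suppose instead that $c_2 \tprec b_k$, i.e. $t_2 \xrightarrow{\bstp} t_k$. Together with the anti-dependency $t_k \xrightarrow{rw} t_2$, this yields the cycle $t_2 \xrightarrow{\bstp} t_k \xrightarrow{rw} t_2$ in $SSG(H)$, a directed cycle containing exactly one anti-dependency edge, which is precisely a $\gsib$ (Missed Effects) cycle, contradicting the assumption $\lnot \gsi$. By the definition of the time-precedes order, for the pair $(t_2, t_k)$ either $c_2 \tprec b_k$ or $b_k \tprec c_2$ must hold; having ruled out the former, we conclude $b_k \tprec c_2$.

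Finally, chaining $c_1 \tprec b_k$, $b_k \tprec c_2$, and $c_2 \tprec b_3$ via transitivity of $\tprec$ gives $c_1 \tprec b_3$, which is exactly $t_1 \xrightarrow{\bstp} t_3$. It remains to verify the side condition $t_1 \neq t_3$ demanded by the definition of $\ledge{1}$: since $c_1 \tprec b_3$ and $b_3 \tprec c_3$ force $c_1 \tprec c_3$, equality $t_1 = t_3$ would yield $c_1 \tprec c_1$, impossible in a partial order. Hence $t_1 \xrightarrow{\ledge{1}} t_3$, completing the proof.

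I expect the main obstacle to be the justification of the dichotomy ``$c_2 \tprec b_k$ or $b_k \tprec c_2$'' directly from the time-precedes definition, together with the care needed to confirm that the cycle constructed under the contradiction hypothesis contains \emph{exactly one} anti-dependency edge (so that it is genuinely a $\gsib$ cycle, not merely a general $\gtwo$ cycle); everything else is a routine chaining of $\tprec$ relations that parallels Claim~\ref{l2trans}.
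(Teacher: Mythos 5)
Your proposal is correct and follows essentially the same route as the paper's proof: unfold $t_1 \xrightarrow{\ledge{2}} t_2$ into $t_1 \xrightarrow{\bstp} t_k \xrightarrow{rw} t_2$, rule out $t_2 \xrightarrow{\bstp} t_k$ because it would close a one-anti-dependency ($\gsib$) cycle with the $rw$ edge, invoke the dichotomy in the time-precedes order to get $b_k \tprec c_2$, and chain $c_1 \tprec b_k \tprec c_2 \tprec b_3$. Your explicit check that the cycle has exactly one anti-dependency edge and that $t_1 \neq t_3$ are small touches the paper leaves implicit, but the argument is the same.
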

\begin{proof}
If  $t_1 \xrightarrow[]{\ledge{2}} t_2 \xrightarrow{\ledge{1}} t_3$
exists in $\lgraph$, there must exist a transaction 
 $t_{12}$ such that $t_1 \xrightarrow[]{\bstp} t_{12}\xrightarrow[]{rw} t_2 \xrightarrow[]{\bstp} t_3$ in SSG(H). By definition,    
$t_1 \xrightarrow[]{\bstp} t_{12} \Rightarrow c_1 \tprec b_{12} $ and
$t_2 \xrightarrow[]{\bstp} t_3 \Rightarrow c_2 \tprec b_3$.
Moreover, by assumption, the history $H$ does not contain an $\gsib{}$ cycle; there thus cannot exists an edge $t_{12} \xleftarrow[]{\bstp} t_{2}$ as $t_2$ either happens after or is concurrent with $t_{12}$. In both cases, we have $ b_{12} \tprec c_{2}$. Thus the following inequalities hold:
$ c_1 \tprec b_{12} \tprec c_{2} \tprec b_3$. By definition, we thus
have $t_1 \xrightarrow[]{\ledge{1}} t_3$. 
\end{proof}

Finally, we prove the following lemma:
\begin{lemma}
	\label{lacyc}
	$\lgraph$ is acyclic if H disallows phenomena \gone{} and \gsi.
\end{lemma}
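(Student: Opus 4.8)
The plan is to argue by contradiction on a minimal-length cycle in $\lgraph$, leaning entirely on the three composition rules already in hand---Claim~\ref{l1trans} ($\ledge{1}$ is transitive), Claim~\ref{l2trans} ($\ledge{2}$ is transitive), and Claim~\ref{l2merge} ($\ledge{2}$ followed by $\ledge{1}$ collapses to $\ledge{1}$)---together with the fact that $\tprec$ is a partial order and that $H$ forbids \gsib. Suppose for contradiction that $\lgraph$ contains a cycle, and fix one, $C$, of minimal length $n \geq 1$.

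First I would dispose of the length-one case. A $\ledge{1}$ self-loop $t \xrightarrow{\ledge{1}} t$ unfolds to $t \xrightarrow{\bstp} t$, i.e. $c_t \tprec b_t$, which contradicts $b_t \tprec c_t$ and the antisymmetry of $\tprec$. A $\ledge{2}$ self-loop $t \xrightarrow{\ledge{2}} t$ unfolds, by definition of $\ledge{2}$, into a path $t \xrightarrow{\bstp} t_k \xrightarrow{rw} t$ in $SSG(H)$, which is precisely a directed cycle with exactly one anti-dependency edge, i.e. a \gsib\ cycle---excluded since $H$ disallows \gsi. Hence no self-loop can occur, so $n \geq 2$.

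Next comes the reduction step. I would read the edge types around $C$ as a cyclic word over the alphabet $\{\ledge{1}, \ledge{2}\}$ and case-split on adjacent pairs: two consecutive $\ledge{1}$ edges merge to one via Claim~\ref{l1trans}; two consecutive $\ledge{2}$ edges merge via Claim~\ref{l2trans}; and any $\ledge{2}$ immediately followed by $\ledge{1}$ merges via Claim~\ref{l2merge}. Each of these replaces two edges by a single edge of $\lgraph$, yielding a strictly shorter cycle and contradicting the minimality of $C$. The only adjacency not directly reducible is $\ledge{1}$ followed by $\ledge{2}$, and the main obstacle is to rule out a cycle built so as to avoid every reducible pattern. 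The clean way to close this gap is the transition-counting observation: in any cyclic word the number of $\ledge{1}\!\to\!\ledge{2}$ transitions equals the number of $\ledge{2}\!\to\!\ledge{1}$ transitions. Thus if $C$ contains any $\ledge{1}\!\to\!\ledge{2}$ adjacency it must also contain a $\ledge{2}\!\to\!\ledge{1}$ adjacency, which we reduce; and if $C$ contains no type transition at all it is monochromatic of length $\geq 2$ and hence exhibits a same-type adjacency, again reducible. Either way $C$ shortens, the contradiction is reached, and $\lgraph$ is acyclic.

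I would flag that the hypotheses enter this particular argument only through \gsib---used both inside Claims~\ref{l1trans}--\ref{l2merge} and to kill $\ledge{2}$ self-loops---and through the partial-order property of $\tprec$, which kills $\ledge{1}$ self-loops; \gone\ and \gsia\ appear to be needed for the surrounding execution-construction rather than for acyclicity itself. I expect the combinatorial ``a reducible adjacency always exists'' step (the transition-counting fact) to be the one requiring the most care to state precisely, since it is what guarantees the reduction never stalls on the uncovered $\ledge{1}\!\to\!\ledge{2}$ pattern.
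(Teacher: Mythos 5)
Your proof is correct and takes essentially the same approach as the paper's: the paper likewise rules out self-loops, collapses monochromatic cycles via the transitivity Claims~\ref{l1trans} and~\ref{l2trans}, and then eliminates the $\ledge{2}$ edges of a mixed cycle one at a time via Claim~\ref{l2merge} until only $\ledge{1}$ edges (hence a self-loop) remain. Your minimal-length-cycle descent with the transition-counting observation is just a tighter packaging of that same reduction --- it even makes explicit the step the paper asserts without justification, namely that a mixed cycle must contain an $\ledge{2}$ edge immediately followed by an $\ledge{1}$ edge --- and your closing remark that acyclicity really only uses $\lnot$\gsib{} and the partial-order properties of the time-precedes relation matches how the hypotheses are actually consumed in the paper's claims.
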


\begin{proof}
\par\textbf{l1-edges} First, we show that $\lgraph$ does not contain a cycle consisting only of  $\ledge{1}$ edges. $\ledge{1}$ edges are transitive (Claim~\ref{l1trans}). Hence any cycle
consisting only of $\ledge{1}$ edges will result in a self-loop, which we have argued is impossible. Thus no such cycle can exist.

\par\textbf{l2-edges} Next, we show that  $\lgraph$ does not contain a cycle consisting only of  $\ledge{2}$ edges. The proof proceeds as above:  $\ledge{2}$ edges are transitive (Claim~\ref{l2trans}). Hence any cycle
consisting of $\ledge{2}$ edges only will result in a self-loop, which we have argued is impossible. Thus no such cycle can exist.

\par\textbf{l1/l2-edges} Finally, we show that $\lgraph$ does not contain a cycle consisting of both $\ledge{1}$ and  $\ledge{2}$ edges. We do this by contradiction. 
Assume such a cycle $cyc_1$ exists. It must contain the following sequence of transactions: 
$t_1 \xrightarrow[]{\ledge{2}} t_2 \xrightarrow[]{\ledge{1}} t_3$ in $\lgraph$.
By Claim~\ref{l2merge}, there must exist an alternative cycle $cyc_2$ containing the
same set of edges as $cyc_1$ but with the edges $t_1 \xrightarrow[]{\ledge{2}} t_2 \xrightarrow[]{\ledge{1}} t_3$ replaced by the edge $t_1 \xrightarrow[]{\ledge{1}} t_3$. If $t_1 \xrightarrow[]{\ledge{2}} t_2$ was the only $\ledge{2}$ edge in $cyc_1$,
$cyc_2$ is a cycle with only $\ledge{1}$ edges, which we previously proved could
not arise. Otherwise, we apply the same reasoning, starting from $cyc_2$, and
generate an equivalent cycle $cyc_3$ consisting of one fewer $\ledge{2}$ edge, until
we obtain a cycle $cyc_n$  with all $n$ $\ledge{2}$ edges removed, and consisting
only of $\ledge{1}$ edges. In all cases, we generate a contradiction,
and no cycle consisting of both $\ledge{1}$ and $\ledge{2}$ edges can exist.
This concludes the proof.
\end{proof}

\par\textbf{Commit Test} Armed with an acyclic $\lgraph$, we can construct an execution $e$ such that every committed transaction satisfies the commit test $\commitmath{SI}{e}{t}$.
Let $i_1,...i_n$ be a permutation of $1,2,...,n$ such that $t_{i_1},...,t_{i_n}$ is a topological sort of $\lgraph$ ($\lgraph$ is acyclic and can thus be topologically sorted).
We construct an execution $e$ according to the topological order defined above: $e: s_0 \rightarrow s_{t_{i_1}}\rightarrow s_{t_{i_2}} \rightarrow ... \rightarrow s_{t_{i_n}}$ and show that $\forall t
  \in \mathcal{T}. \commitmath{SI}{t}{e}$. 
Specifically, we prove the following: let there be
a largest \textit{k} \nc{TODO: this is a bad formulation} such that  $t_{i_k} \xrightarrow{\ledge{1} } t_{i_j}$,
then $\completemath{e}{t_{i_j}}{s_{t_{i_k}}}
\wedge (\diffmath{s_{t_{i_k}}}{s_{t_{i_{j-1}}}} \cap 
\mathcal{W}_{s_{t_{i_j}}} = \emptyset)$.
		
\par\textbf{Complete State} We first prove that $\completemath{e}{t_{i_j}}{s_{t_{i_k}}}$. Consider the three possible types of operations in $t_{i_j}$: 
\begin{enumerate}
\item \textit{External Reads}: an operation reads an object version
that was created by another transaction. 
\item \textit{Internal Reads}: an operation reads an object version that
itself created.
\item \textit{Writes}: an operation creates a new object version. 
\end{enumerate}
We show that the  $s_{t_{i_k}}$ is included in the read set of
each of those operation types:
\begin{enumerate}
		\item \textit{External Reads}. Let $r_{i_j}(x_{i_q})$  read the
        version for $x$ created by $t_{i_q}$, where $q \neq j$.\\ \\
We first show that $s_{t_{i_k}} \xrightarrow[]{*}  s_{t_{i_{q}}}$.
As $t_{i_j}$ directly read-depends on $t_{i_q}$, there must exist an edge $t_{i_q}\xrightarrow{wr}t_{i_j}$ in $SSG(H)$. Given that $H$ disallows phenomenon \gsia{}
by assumption, there must therefore exist a
start-dependency edge $t_{i_q}\xrightarrow{\bstp}t_{i_j}$ in $SSG(H)$.
$\lgraph$ will consequently contain the following edge: $t_{i_q}\xrightarrow{\ledge{1}}t_{i_j}$. Given
$e$ was constructioned by applying 
every transaction $\mathcal{T}$ in topological order, and that we select the largest $k$ such that $t_{i_k}\xrightarrow{\ledge{1}}t_{i_j}$, it follows
that  $q\leq k<j$ and $s_{i_q}\xrightarrow{*}s_{i_k} \xrightarrow{+}{s_{i_j}}$

Next, we argue that the state $s_{t_{i_k}}$ contains
the object value pair $(x, x_{i_q})$. Specifically, we argue that there does not exist a $s_{t_{i_m}}$, where $q<m\leq k$, such that $t_{i_m}$ writes a new version of $x$. We prove this by contradiction. 
Consider the smallest such $m$: $t_{i_k}$ reads the version of $x$ written by $t_{i_q}$ and $t_{i_m}$ writes the next version of $x$. $t_{i_j}$ thus directly anti-depends on $t_{i_m}$--- i.e., $t_{i_j}\xrightarrow{rw} t_{i_m}$.  In addition,
it holds by assumption that  $t_{i_k}\xrightarrow{\ledge{1}} t_{i_j}$. Hence, the following sequence of eddges exists in $SSG(H)$: 
$t_{i_k}\xrightarrow{\bstp} t_{i_j}\xrightarrow{rw} t_{i_m}$. Equivalently, $\lgraph$ will contain the edge $t_{i_k}\xrightarrow{\ledge{2}} t_{i_m}$. As the topological sort of $\lgraph$ from which we constructed $e$ respects $\ledge{2}$ edges, we finally have $s_{i_k} \xrightarrow{+} s_{i_m}$ ($k<m$), a contradiction. We conclude: $(x, x_{i_q}) \in s_{t_{i_k}}$ and therefore  $s_{t_{i_{k}}} \in \RSmath{r_{i_j}(x_{i_q})}$.

\item \textit{Internal Reads}. Let $r_{i_j}(x_{i_j})$
        read $x_{i_j}$ such that $w(x_{i_j})\xrightarrow{to}r(x_{i_j})$. By definition, the read state set of such an operation consists
        of $\forall s \in \mathcal{S}_e: s\xrightarrow{*} s_p$. Since $s_{t_{i_{k}}}$ is precedes $s_{t_{i_j}}$ in the topological order (${t_{i_k}}$ $\ledge{1}$-precedes $t_{i_j}$ and $e$ respects $\ledge{1}$ edges)  , it trivially follows that $s_{t_{i_{k}}} \in \RSmath{r_{i_j}(x_{i_j})}$. 
		\item \textit{Writes}.  Let $w_{i_j}(x_{i_j})$ be a write operation. By definition, its read state set consists of all the states before $s_{t_{i_{j}}}$ in the execution. Hence it also trivially follows that $s_{t_{i_{k}}} \in \RSmath{w_{i_j}(x_{i_j})}$.
	\end{enumerate}
	Thus $s_{t_{i_{k}}} \in \bigcap\limits_{o \in \Sigma_{t_{i_{j}}}} \RSmath{o} $.

\par \textbf{Distinct Write Sets} We now prove the second half of the commit test:
$(\diffmath{s_{t_{i_k}}}{s_{t_{i_{j-1}}}} \cap 
\mathcal{W}_{s_{t_{i_j}}} = \emptyset)$
We prove this by contradiction. Consider the largest
$m$, where $k<m<j$ such that  $\mathcal{W}_{s_{t_{i_m}}} \cap \mathcal{W}_{s_{t_{i_j}}} \neq \emptyset$.
$t_{i_m}$ thus directly write-depends on $t_{i_j}$, i.e. $t_{i_m} \xrightarrow{ww} t_{i_j}$.
By assumption, $H$ proscribes phenomenon \gsia{}. Hence, there must exist an edge $t_{i_m} \xrightarrow{\bstp} t_{i_j}$ in $SSG(H)$, and equivalently $t_{i_m} \xrightarrow{\ledge{1}} t_{i_j}$ in $\lgraph$.
As $e$ respects the topological order of $\lgraph$, it follows that $s_{i_m} \xrightarrow{+} s_{i_j}$ ($m<j$).
By assumption however,  $t_{i_k}$ is the latest transaction 
in $e$ such that $t_{i_k} \xrightarrow{\ledge{1}} t_{i_j}$, so
$m \leq k$. Since we
had assumed that $k<m<j$, we have a contradiction.
Thus, $\forall m, k<m<j , \mathcal{W}_{s_{t_{i_m}}} \cap \mathcal{W}_{s_{t_{i_j}}} = \emptyset$. We conclude that \diff{s_{t_{i_k}}}{s_{t_{i_{j-1}}}} $\cap \mathcal{W}_{s_{t_{i_j}}} = \emptyset$

We have 
  \complete{e}{t_{i_j}}{$s_{t_{i_k}}$} $\wedge ($\diff{s_{t_{i_k}}}{s_{t_{i_{j-1}}}} $\cap \mathcal{W}_{s_{t_{i_j}}} = \emptyset)$ for
  any $t_{i_j}$: $\forall t \in \mathcal{T}:
  \commitmath{SI}{t}{e}$. 

($\Leftarrow$) \textbf{We next prove} $\exists e: \forall t \in
          \mathcal{T}: \commitmath{SI}{t}{e} \Rightarrow  \lnot \gone
          \wedge \lnot \gsi$.

Let $e$ be an execution such that $\forall t \in
	\mathcal{T}: \commitmath{SI}{t}{e}$, and $H$ be a history for committed transactions
$\mathcal{T}$. We first instantiate the version order for $H$, denoted as $<<$, as follows: given an execution $e$ and an object $x$, $ x_i << x_j$ if and only if $x \in \mathcal{W}_{t_i} \cap \mathcal{W}_{t_j} \land  s_{t_{i}} \xrightarrow{*} s_{t_{j}} $. It follows that, for any two states such that $(x,x_i) \in t_{i_m}
\wedge (x,x_j) \in t_{i_n} \Rightarrow s_{t_m} \xrightarrow{+} s_{t_n}$.  We next assign the start and commit points of each transaction. We assume the existence of a monotonically
increasing timestamp counter: if a transaction $t_i$ requests a
timestamp $ts$, and a transaction $t_j$ subsequently requests a timestamp $ts'$, then
$ts<ts'$.  
Writing $e$ as $s_0 \rightarrow s_{t_1} \rightarrow s_{t_2} \rightarrow \dots \rightarrow s_{t_n}$, our timestamp assignment logic is then the following: 
\begin{enumerate}
\item Let $i = 0$.
\item Set $s = s_{t_i}$; if $i=0$, $s=s_0$.
\item Assign a commit timestamp to $t_{s_i}$ if $i\neq 0$.
\item Assign a start timestamp to all transactions
$t_k$ such that $t_k$ satisfies  \\  $\completemath{e}{t_k}{s} \wedge (\diffmath{s}{s_p(t_k)} \cap \mathcal{W}_{s_{t_k}} = \emptyset)$ and 
$t_k$ does not already have a start timestamp.
\item Let $i = i+1$. Repeat 1-4 until the final
state in $e$ is reached.
\end{enumerate}

We can relate the history's start-dependency order
and execution order as follows: 
\begin{clm}
$\forall t_i,t_j \in \mathcal{T}: s_{t_j} \xrightarrow{*} s_{t_i} \Rightarrow \lnot t_i \xrightarrow{\bstp} t_j$
\end{clm}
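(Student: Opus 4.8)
The plan is to prove the claim directly by exhibiting the strict timestamp inequality $b_j \tprec c_i$. Since the monotonic counter of step (4) makes $\tprec$ a strict (hence asymmetric) total order on the assigned timestamps, $b_j \tprec c_i$ immediately yields $\lnot(c_i \tprec b_j)$, which by the definition of start-depends is exactly $\lnot(t_i \xrightarrow{\bstp} t_j)$. First I would adopt the indexing under which $e: s_0 \rightarrow s_{t_1} \rightarrow \cdots \rightarrow s_{t_n}$, so that $t_m$ produces the $m$-th state and its commit timestamp $c_m$ is requested at iteration $m$ of the assignment procedure. Under this convention the hypothesis $s_{t_j} \xrightarrow{*} s_{t_i}$ is just $j \le i$, so I may assume $j \le i$ (for $j > i$ the hypothesis is false and the claim is vacuous).

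The heart of the argument is to locate $t_j$'s start timestamp strictly before its own commit, i.e. to show that $b_j$ is requested at some iteration $i' \le j-1$. Here I would invoke the hypothesis $\commitmath{SI}{t_j}{e}$: it supplies a state $s$ with $\completemath{e}{t_j}{s}$ and $\diffmath{s}{s_p} \cap \mathcal{W}_{t_j} = \emptyset$, where $s_p = s_{t_{j-1}}$ is $t_j$'s parent state. Because $s$ is complete it belongs to every candidate read set of $t_j$, and by Definition~\ref{defn:readstate} every candidate read state satisfies $s \xrightarrow{*} s_p$; hence $s$ occupies a position $i' \le j-1$ in $e$. When the procedure reaches iteration $i'$ and sets the current state to $s$, transaction $t_j$ satisfies precisely the eligibility predicate of step (4), so it is assigned a start timestamp no later than iteration $i'$. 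Either way $b_j$ is requested at an iteration $\le j-1$.

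To finish I would compare request orders: $b_j$ is requested at an iteration $\le j-1$ while $c_i$ is requested at iteration $i \ge j > j-1$, so $c_i$'s request strictly follows $b_j$'s. By monotonicity of the counter this gives $b_j \tprec c_i$, and therefore $\lnot(t_i \xrightarrow{\bstp} t_j)$, as required.

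I expect the middle step to be the main obstacle: everything hinges on pinning $b_j$ to an iteration strictly earlier than $t_j$'s commit, and this is not automatic from the procedure alone. It follows only because the SI commit test guarantees a complete state for $t_j$ and Definition~\ref{defn:readstate} forbids that state from lying after $t_j$'s parent. The remaining bookkeeping — that within a single iteration commits (step 3) are requested before starts (step 4), and that later iterations receive larger timestamps — is routine, but I would state it explicitly so that the strict inequality across the distinct iterations $i' < i$ is unambiguous.
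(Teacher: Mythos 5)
Your proof is correct and is essentially the paper's argument run in the forward direction: both hinge on the facts that the assignment procedure gives every transaction its start timestamp at an iteration no later than that of its parent state (hence strictly before its own commit) and that timestamps grow monotonically across iterations, the paper simply packaging this as the contrapositive $t_i \xrightarrow{\bstp} t_j \Rightarrow c_i \tprec b_j \tprec c_j \Rightarrow s_{t_i} \xrightarrow{+} s_{t_j}$. The only substantive difference is that the paper asserts $b_j \tprec c_j$ (``the start point of a transaction is always assigned before its commit point'') without justification, whereas your middle step --- invoking $\commitmath{SI}{t_j}{e}$ and Definition~\ref{defn:readstate} to pin $b_j$ to an iteration $\le j-1$ --- supplies exactly the argument behind that assertion.
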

\begin{proof} We have $t_i \xrightarrow{\bstp} t_j \Rightarrow c_i \tprec b_j$ by definition. Moreover, the start point
of a transaction $t_i$ is always assigned before its commit point. Hence: $c_i \tprec b_j \tprec c_j$.  
It follows from our timestamp assignment logic that $s_{t_i} \xrightarrow{+} s_{t_j}$. We conclude:
$t_i \xrightarrow{\bstp} t_j \Rightarrow s_{t_i} \xrightarrow{+} s_{t_j}$. Taking the contrapositive
of this implication completes the proof. 
\end{proof}

\par \textbf{G1} We first prove that:
 $\forall t \in
          \mathcal{T}: \commitmath{SI}{t}{e} \Rightarrow  \lnot \gone$.  We do so by contradiction.

\par\textbf{G1a} Let us assume that $H$ exhibits
phenomenon \gonea{} (aborted reads). There
must exist events $w_i(x_i), r_j(x_i)$ in
$H$ such that $t_i$ subsequently aborted. $\mathcal{T}$ and any corresponding execution $e$, however, consists only of committed
transactions. Hence $\forall e: \not\exists s \in \mathcal{S}_e: s \in \RSmath{ r_j(x_i)}$: no
complete state can exists for $t_j$. There thus
exists a transaction for which the commit test cannot be satisfied, for any $e$. We have a contradiction.

\par\textbf{G1b} Let us assume that $H$ exhibits 
phenomenon \goneb{} (intermediate reads). In an execution $e$, only the final writes of a transaction are applied. Hence,$\not\exists s \in \mathcal{S}_e: s \in \RSmath{ r(x_{intermediate})}$. There
thus exists a transaction, which for all $e$, will not 
satisfy the commit test. We once again have a contradiction.
	
\par\textbf{G1c} Finally, let us assume that $H$ exhibits phenomenon \gonec: SSG(H) must contain a cycle
of read/write dependencies. We consider each possible
edge in the cycle in turn:
\begin{itemize}
\item{$t_i\xrightarrow{ww}t_j$} There must exist
an object $x$ such that $x_i << x_j$ (version order). By construction, version in $H$ is consistent with the execution order $e$: we have  $s_{t_i} \xrightarrow{*} s_{t_j}$.
\item{$t_i\xrightarrow{wr}t_j$} There must exist
a read $r_j(x_i) \in \Sigma_{t_j}$ such that
 $t_j$ reads version $x_i$ written by $t_i$.
By assumption, $\commitmath{SI}{e}{t_j}$ holds.
There must therefore exists a state $s_{t_k} \in \mathcal{S}_e$ such that $ \completemath{e}{t_j}{s_{t_k}}$. If
$s_{t_k}$ is a complete state for $t_j$,
$s_{t_k} \in \RSmath{r_j(x_i)}$ and $(x,x_i) \in s_{t_{k}}$. For the effects of $t_{i}$ to be visible in $s_{t_{k}}$, $t_i$ must have been applied at an earlier point in the execution. Hence we have: $s_{t_i} \xrightarrow{*} s_{t_k} $.
Moreover, by definition of the candidate read states, 
$s_{t_k} \xrightarrow{*} s_p(t_j) \xrightarrow{}
s_{t_j}$(Definition~\ref{defn:readstate}). It follows that  $s_{t_{i}} \xrightarrow{*} s_{t_{j}}$.
\end{itemize}
If a history $H$ displays phenomenon \gonec, there must exist a chain of transactions $t_i\rightarrow t_{i+1} \rightarrow ... \rightarrow t_{j}$ such that $i=j$.
A corresponding cycle
must thus exist in the execution $e$ $s_{t_i} \xrightarrow{*} s_{t_{i+1}} \xrightarrow{*} \dots \xrightarrow{*} s_{t_j}$. By definition however, a valid execution must be totally ordered. We once again have a contradiction.  

We generate a contradiction in all cases of the disjunction: we conclude that the history $H$ cannot display phenomenon \gone.

\par\textbf{G-SI} We now prove that $\forall t \in
          \mathcal{T}: \commitmath{SI}{t}{e} \Rightarrow  \lnot
          \gsi$.
\par\textbf{G-SIa} We first show that \gsia{} cannot happen for both write-write dependencies and
write-read dependencies:
\begin{itemize}
\item{$t_i\xrightarrow{wr}t_j$} There must exist an object $x$ such that $t_j$ reads version $x_i$ written by $t_i$. Let $s_{t_k}$ be the first state in $e$ such that $\completemath{e}{t_j}{s_{t_k}} \wedge (\diffmath{s_{t_k}}{s_p(t_j)} \cap \mathcal{W}_{s_{t_j}} = \emptyset)$. Such a state must exist since $\commitmath{SI}{e}{t_j}$ holds by assumption. As $s_{t_k}$ is complete, we have $(x,x_i) \in s_{t_{k}}$. For the effects of $t_i$ to be visible  
in $s_{t_{k}}$, $t_i$ must have been applied at an earlier point in the
execution. Hence we have: $s_{t_i} \xrightarrow{*} s_{t_k} \xrightarrow{*} s_{t_j}$. It follows from our timestamp assignment logic that $c_i \tpreceq c_k$.
Similarly, the start point of $t_j$ must have been assigned after $t_k$'s commit point (as $s_{t_k}$ is $t_j$'s earliest complete state), hence $c_k \tprec s_j$. Combining the two inequalities results in $c_i \tprec s_j$: there will exist a
start-dependency edge $t_i \xrightarrow{\bstp} t_j$. $H$ will not
display \gsia{} for write-read dependencies.
\item{$t_i\xrightarrow{ww}t_j$} There must exist an object $x$ such that $t_j$ writes the version $x_j$ that follows $x_i$. By construction, it follows that
$s_{t_i} \xrightarrow{*} s_{t_j}$. Let $s_{t_k}$ be the first state in the execution
such that  $\completemath{e}{t_j}{s_{t_k}} \wedge (\diffmath{s_{t_k}}{s_p(t_j)} \cap \mathcal{W}_{{t_j}} = \emptyset)$. We first show that:
$s_{t_i} \xrightarrow{*} s_{t_k}$. Assume by way of contradiction that $s_{t_k} \xrightarrow{+} s_{t_i}$. The existence of a write-write dependency between $t_i$ and $t_j$ implies that $\mathcal{W}_{t_i} \cap \mathcal{W}_{t_j} \neq \emptyset$,
and consequently, that $\diffmath{s_{t_k}}{s_p(t_j)} \cap \mathcal{W}_{t_j} \neq \emptyset$, contradicting our assumption that $\commitmath{SI}{e}{t_j}$.
We conclude that:  $s_{t_i} \xrightarrow{*} s_{t_k}$. It follows from our timestamp assignment logic that $c_i \tpreceq c_k$.
Similarly, the start point of $t_ij$ must have been assigned after $t_k$'s commit point (as $s_{t_k}$ is $t_j$'s earliest complete state), hence $c_k \tprec s_j$. Combining the two inequalities results in $c_i \tprec s_j$: there will exist a
start-dependency edge $t_i \xrightarrow{\bstp} t_j$. $H$ will not
display  \gsia for write-write dependencies.
\end{itemize}
The history $H$ will thus not display phenomenon \gsia.

\par{\textbf{G-SIb}} We next prove that $H$ will not display
phenomenon \gsib.
Our previous result states that $H$ proscribes \gsia: all read-write dependency edges between two transactions implies the existence of a start
dependency edge between those same transactions. We prove by contradiction
that $H$ proscribes  \gsib. Assume that SSG(H) consists of a directed cycle $cyc_1$ with exactly one anti-dependency edge (it displays \gsib) but proscribes
\gsia. All other dependencies will therefore be write/write dependencies, write/read dependencies, or start-depend edges. By \gsia, there must exist an equivalent cycle $cyc_2$ consisting of a directed cycle with
exactly one anti-dependency edge and start-depend edges only. Start-edges are transitive (Claim~\ref{l1trans}), hence there must exist a cycle $cyc_3$ with exactly one
anti-dependency edge and one start-depend edge. We write $t_i \xrightarrow{rw} t_j \xrightarrow{\bstp} t_i$. Given $t_i \xrightarrow{rw} t_j$, there must exist an object $x$ and transaction $t_m$
such that $t_m$ writes $x_m$, $t_i$ reads $x_m$ and $t_j$ writes the next version of $x$, $x_j$ ($x_m << x_j$). Let $s_{t_k}$ be the earliest complete state of $t_i$. Such a state must exist as $\commitmath{SI}{e}{t_i}$ by assumption.
Hence, by definition of read state $(x,x_m) \in s_{t_k}$. Similarly,
$(x,x_j) \in s_{t_j}$ by the definition of state transition
(Definition~\ref{def:statetrans}). By construction, we have $s_{t_k} \xrightarrow{+} s_{t_j}$. 
Our timestamp assignment logic maintains the following invariant: given
a state $s_t$, $\forall t_k:\completemath{e}{t_k}{s_t} : \forall s_{t_n}: s_t \xrightarrow{+} {s_{t_n}} \Rightarrow b_k \prec_t c_n$. Intuitively, the start timestamp
of all transactions associated with a particular complete state $s_t$ is smaller
than the commit timestamp of any transaction that follows $s_t$ in the execution.
We previously showed that $s_{t_k} \xrightarrow{+} s_{t_j}$. Given $s_{t_k}$
is a complete state for $t_i$, we conclude $b_i \prec_t c_j$. However,
the edge  $t_j \xrightarrow{\bstp} t_i$ implies that $c_j \prec_t b_i$.
We have a contradiction: no such cycle can exist and $H$ will not 
display phenomenon \gsi.

We generate a contradiction in all cases of the conjunction, hence $\forall t \in
          \mathcal{T}: \commitmath{SI}{t}{e} \Rightarrow  \lnot
          \gsi $ holds.

We conclude $\forall t \in
          \mathcal{T}: \commitmath{SI}{t}{e} \Rightarrow  \lnot
          \gsi \land \gone$. This completes the proof.
          \end{proof}

\subsection{Read Committed}
\label{appendix:rc}

\par \textbf{Theorem ~\ref{theorem:rc}}. Let $\mathcal{I}$ be Read Committed (RC). Then $\ \ \exists e:\forall t
  \in \mathcal{T}. \commitmath{RC}{t}{e}  \equiv \lnot \gone $.

\begin{proof}
	  \textbf{We first prove} $\lnot \gone  \Rightarrow \exists e: \forall t \in \mathcal{T}: \commitmath{RC}{t}{e}$. 
	
Let $H$ define a history over $\mathcal{T}=\{t_1, t_2, ..., t_n\}$
and let $DSG(H)$ be the corresponding direct serialization graph.
 $\lnot \gonec$ states that the $DSG(H)$ must not contain dependency cycles:
 the subgraph of $DSG(H)$, $SDSG(H)$ containing the same nodes but
 including only dependency edges,  must be acyclic. Let $i_1,...i_n$ be a permutation of $1,2,...,n$ such that $t_{i_1},...,t_{i_n}$ is a topological sort of $SDSG(H)$ ($SDSG(H)$ is acyclic and can thus be topologically sorted).

 We construct an execution $e$ according to the topological order defined above: $e: s_0 \rightarrow s_{t_{i_1}}\rightarrow s_{t_{i_2}} \rightarrow ... \rightarrow s_{t_{i_n}}$ and show that $\forall t
  \in \mathcal{T}. \commitmath{RC}{t}{e}$. Specifically, we show that for
  all $t = t_{i_j}, \prereadmath{e}{t}$.
	
Consider the three possible types of operations in $t_{i_j}$: 
\begin{enumerate}
\item \textit{External Reads}: an operation reads an object version
that was created by another transaction. 
\item \textit{Internal Reads}: an operation reads an object version that
itself created.
\item \textit{Writes}: an operation creates a new object version. 
\end{enumerate}
We show that the read set for  
each of operation type is not empty:
\begin{enumerate}
		\item \textit{External Reads}. Let $r_{i_j}(x_{i_k})$  read the
        version for $x$ created by $t_{i_k}$, where $k \neq j$.\\ \\
We first show that $s_{t_{i_k}} \xrightarrow[]{*}  s_{t_{i_j}}$.
As $t_{i_j}$ directly read-depends on $t_{i_k}$, there must exist an edge $t_{i_k}\xrightarrow{wr}t_{i_j}$ in $SDSG(H)$, and  $t_{i_k}$ must
therefore be ordered before $t_{i_j}$ in the topological sort of $SDSG(H)$ ($k<j$), it follows that  $s_{t_{i_k}} \xrightarrow[]{+}  s_{t_{i_j}}$. 
As $(x,x_{i_k}) \in s_{t_{i_k}}$, we have $s_{t_{i_{k}}} \in \RSmath{r_{i_j}(x_{i_k})}$, and consequently $\RSmath{r_{i_j}(x_{i_k})} \neq \emptyset$.
		
		\item \textit{Internal Reads}. Let $r_{i_j}(x_{i_j})$
        read $x_{i_j}$ such that $w(x_{i_j})\xrightarrow{to}r(x_{i_j})$. By definition, the read state set of such an operation consists
        of $\forall s \in \mathcal{S}_e: s\xrightarrow{*} s_p$.  $ s_0 \xrightarrow{*} s$ trivially holds. We conclude $s_0 \in \RSmath{r_{i_j}(x_{i_j})}$, i.e. $\RSmath{r_{i_j}(x_{i_j})} \neq \emptyset$. 
		\item \textit{Writes}.  Let $w_{i_j}(x_{i_j})$ be a write operation. By definition, its read state set consists of all the states before $s_{t_{i_{j}}}$ in
        the execution. Hence $s_0 \in \RSmath{r_{i_j}(x_{i_j})}$, i.e. $\RSmath{r_{i_j}(x_{i_j})} \neq \emptyset$. 
	\end{enumerate}
	
	Thus $\forall o \in \Sigma_t: \RSmath{o} \neq \emptyset$. We have $\prereadmath{e}{t_{i_j}}$ for any $t_{i_j}: \forall t \in \mathcal{T}: \commitmath{RC}{t}{e}$.

	($\Leftarrow$) \textbf{We next prove} $\exists e: \forall t \in
          \mathcal{T}: \commitmath{RC}{t}{e} \Rightarrow  \lnot \gone
        $.

To do so, we prove the contrapositive $\gone \Rightarrow \forall e\ \exists t \in \mathcal{T}: \lnot\commitmath{RC}{t}{e}$. Let $H$ be a history that displays phenomena \gone. We generate a contradiction. Assume that there exists an execution
$e$ such that $\forall t \in \mathcal{T}: \commitmath{RC}{t}{e}$. 

We first instantiate the version order for $H$, denoted as $<<$, as follows: given an execution $e$ and an object $x$, $ x_i << x_j$ if and only if $x \in \mathcal{W}_{t_i} \cap \mathcal{W}_{t_j} \land  s_{t_{i}} \xrightarrow{+} s_{t_{j}} $.
	
First, we show that: 	
	\begin{clm}
		$t_i \rightarrow t_j$ in $SDSG(H) \Rightarrow s_{t_i} \xrightarrow{+} s_{t_j}$ in the execution e ($i\neq j$).
		\label{claim:rc}
	\end{clm}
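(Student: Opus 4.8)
The plan is to prove the claim by a case analysis on the edge type. Since $SDSG(H)$ is the subgraph of $DSG(H)$ that retains only dependency edges, there are exactly two cases to handle: a directly-write-depends edge $t_i \xrightarrow{ww} t_j$ and a directly-read-depends edge $t_i \xrightarrow{wr} t_j$. The anti-dependency ($rw$) case is deliberately absent, and this is essential: an $rw$ edge does not force $s_{t_i} \xrightarrow{+} s_{t_j}$ under Read Committed, and indeed Read Committed is precisely the level that tolerates anti-dependency cycles. The structure mirrors Claim~\ref{ser1} from the serializability proof, but restricted to the two dependency edges and strengthened from $\xrightarrow{*}$ to $\xrightarrow{+}$, which is legitimate because $i \neq j$ forces the relevant states to be distinct.

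For the $ww$ case, the argument is immediate from how we instantiated the version order. A $ww$ edge means $t_i$ writes some object $x$ and $t_j$ writes the next version, i.e. $x_i << x_j$. By our definition of $<<$, the relation $x_i << x_j$ holds exactly when $x \in \mathcal{W}_{t_i} \cap \mathcal{W}_{t_j}$ and $s_{t_i} \xrightarrow{+} s_{t_j}$, so the conclusion follows directly by construction.

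For the $wr$ case, I would exploit the Read Committed commit test, which for $t_j$ reduces to $\prereadmath{e}{t_j}$. The edge $t_i \xrightarrow{wr} t_j$ supplies a read operation $r_j(x_i) \in \Sigma_{t_j}$ that reads the version of $x$ written by $t_i$. By $\prereadmath{e}{t_j}$, its candidate read set $\RSmath{r_j(x_i)}$ is non-empty, so I pick any $s$ in it. Because $i \neq j$ and each version is uniquely identifiable (so $t_j$ cannot itself have written $x_i$), the internal-write disjunct in Definition~\ref{defn:readstate} cannot apply; hence $s$ must satisfy both $(x, x_i) \in s$ and $s \xrightarrow{*} s_p(t_j)$. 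Since $(x, x_i) \in s$ requires $t_i$ to have already been applied by the time the system reaches $s$, I chain $s_{t_i} \xrightarrow{*} s \xrightarrow{*} s_p(t_j) \rightarrow s_{t_j}$, and the final transition guarantees $s_{t_i} \xrightarrow{+} s_{t_j}$.

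The main obstacle is the $wr$ case, and specifically the fact that Read Committed guarantees only PREREAD rather than a shared complete state: unlike the serializability proof, I cannot appeal to a single parent state from which all of $t_j$'s operations read, and must instead reason at the granularity of the individual read operation's candidate read set. The crux is justifying that every state in that non-empty set genuinely contains $(x, x_i)$ --- which is exactly where unique identifiability of versions and the exclusion of the internal-read disjunct do the real work --- so that the writer is forced strictly before the reader. Once the claim is established, it feeds directly into the $\gonec$ case of Theorem~\ref{theorem:rc}: a dependency cycle in $SDSG(H)$ would, via the claim, produce a cycle $s_{t_i} \xrightarrow{+} \cdots \xrightarrow{+} s_{t_i}$ in $e$, contradicting the total order that $e$ imposes on states.
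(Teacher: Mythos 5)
Your proof is correct and follows essentially the same route as the paper's: a case analysis on the two dependency edge types, with the $ww$ case discharged directly by the construction of the version order $<<$, and the $wr$ case discharged by using the Read Committed commit test ($\prereadmath{e}{t_j}$) to obtain a nonempty candidate read set whose states contain $(x,x_i)$ and precede $s_{t_j}$, forcing $s_{t_i} \xrightarrow{+} s_{t_j}$. Your explicit justification that the internal-write disjunct of Definition~\ref{defn:readstate} cannot apply (via unique identifiability of versions) is a point the paper leaves implicit, but it is the same argument.
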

	\begin{proof}
		Consider the three edge types in $DSG(H)$:
		\begin{description}
			\item{$t_i\xrightarrow{ww}t_j$} There exists an object $x$ s.t. $x_i<<x_j$ (version order). 
By construction, we have $s_{t_i} \xrightarrow{+} s_{t_j}$.
			
			\item{$t_i\xrightarrow{wr}t_j$} There exists an object $x$ s.t. $t_j$ reads version 
            $x_i$ written by $t_i$, i.e. $r_j(x,x_i) \in \Sigma_{t_i}$. By assumption $\commitmath{RC}{e}{{t}}$ ($t=t_j$), i.e. $\prereadmath{e}{t_j}$, $\RSmath{o} \neq \emptyset$. Let $s \in \RSmath{o}$, by definition of $\RSmath{o}$, we have $(x,x_i) \in s \wedge s \xrightarrow{+} s_{t_j}$, therefore $t_i$ must be applied before or on state $s$, hence we have $s_{t_i}\xrightarrow{*} s \xrightarrow{+} s_{t_j}$, i.e. $s_{t_i} \xrightarrow{+} s_{t_j}$. 
			
		\end{description}
		
	\end{proof}

We now derive a contradiction in all cases of $\gone$: 
\begin{itemize}
\item Let us assume that $H$ exhibits
phenomenon \gonea~(aborted reads). There
must exists events $w_i(x_i), r_j(x_i)$ in
H such that $t_i$ subsequently aborted. $\mathcal{T}$ and any corresponding execution $e$, however, consists only of committed
transactions. Hence $\forall e: \not\exists s \in \mathcal{S}_e, s.t.\ s \in \RSmath{ r_j(x_i)}$: no
complete state can exists for $t_j$. There thus
exists a transaction for which the commit test cannot be satisfied, for any e. We have a contradiction.
\item  Let us assume that $H$ exhibits 
phenomenon \goneb~(intermediate reads). In an execution $e$, only the final writes of a transaction are applied. Hence,$\not\exists s \in \mathcal{S}_e, s.t.\ s \in \RSmath{ r(x_{intermediate})}$. There
thus exists a transaction, which for all e, will not 
satisfy the commit test. We once again have a contradiction.
		\item Finally, let us assume that the history $H$ displays \gonec. 
Any history that displays \gonec will contain a cycle in the SDSG(H). Hence, there must exist a chain of transactions $t_i\rightarrow t_{i+1} \rightarrow ... \rightarrow t_{j}$ such that $i=j$.
By Claim~\ref{claim:rc}, we thus have $s_{t_i} \xrightarrow{+} s_{t_{i+1}} \xrightarrow{+} \dots \xrightarrow{+} s_{t_j}$, $i=j$ for any $e$. By definition however, a valid execution must be totally ordered. We have our
final contradiction.
\end{itemize}
All cases generate a contradiction. We have
$\gone \Rightarrow \forall e: \exists t \in \mathcal{T}: \lnot\commitmath{RC}{e}{t}$. This
completes the proof. 
\end{proof}

\section{Causality and Session Guarantees}
\label{appendix:causal}

We prove the following theorems:
\par \textbf{Theorem~\ref{theorem:causal}}
    Let $\mathcal{G} = \{RMW,MR,MW,WFR \}$, then \\ $\forall se \in SE :\exists e: \forall t \in T_{se} :\sessionmath{\mathcal{G}}{se}{t}{e} 
    \equiv \forall se \in SE : \exists e :  \forall t \in T_{se}: \sessionmath{CC}{se}{t}{e}$

    We first state a number of useful lemmas about the
    $\prereadmath{e}{\mathcal{T}}$ predicate
    (Definition~\ref{def:preread}): if $\prereadmath{e}{\mathcal{T}}$
    holds, then the candidate read set of all operations in all
    transactions in $\mathcal{T}$ is not empty. The first
    lemma states that an operation's read state must reflect writes
    that took place before the transaction committed, while the second
    lemma simply argues that the predicate is closed under subset.
\vspace{3mm}

\begin{lemma}
	\label{sfolemma}
	For any $\mathcal{T}'$ such that $\mathcal{T}' \subseteq \mathcal{T}$, $\prereadmath{e}{\mathcal{T}'} \Leftrightarrow \forall t \in \mathcal{T}': \forall o \in \Sigma_t: \sfomath{o} \xrightarrow{+} s_{t}$.
\end{lemma}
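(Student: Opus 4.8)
The plan is to reduce the biconditional to a single observation: as the discussion preceding Definition~\ref{def:preread} notes, the first candidate read state $\sfomath{o}$ names an actual state of $e$ exactly when $\mathcal{RS}_{e}(o) \neq \emptyset$, and by Definition~\ref{defn:readstate} every candidate read state lies at or before $t$'s parent state $s_p$, which is itself the immediate predecessor of $s_t$ under the single transition $s_p \xrightarrow[t]{} s_t$ (Definition~\ref{def:statetrans}). Since both sides of the claim quantify over exactly the same $t \in \mathcal{T}'$ and $o \in \Sigma_t$, I would first observe that it suffices to prove the per-operation equivalence $\mathcal{RS}_{e}(o) \neq \emptyset \Leftrightarrow \sfomath{o} \xrightarrow{+} s_{t}$ and then let the two universal quantifiers carry over unchanged.

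\textbf{Forward direction.} First I would assume $\prereadmath{e}{\mathcal{T}'}$ and fix $t \in \mathcal{T}'$, $o \in \Sigma_t$. By definition $\mathcal{RS}_{e}(o) \neq \emptyset$, so $\sfomath{o}$ is a well-defined element of $\mathcal{RS}_{e}(o)$ and, by Definition~\ref{defn:readstate}, satisfies $\sfomath{o} \xrightarrow{*} s_p$. Because applying $t$ to its parent state produces $s_t$ in one transition $s_p \xrightarrow[t]{} s_t$, composing yields $\sfomath{o} \xrightarrow{*} s_p \xrightarrow{+} s_t$, hence $\sfomath{o} \xrightarrow{+} s_t$.

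\textbf{Backward direction.} Conversely, I would assume $\sfomath{o} \xrightarrow{+} s_t$ for every $t \in \mathcal{T}'$ and $o \in \Sigma_t$ and argue by contrapositive: the relation $\sfomath{o} \xrightarrow{+} s_t$ can be witnessed only if $\sfomath{o}$ denotes an actual state, which happens precisely when $\mathcal{RS}_{e}(o) \neq \emptyset$. Thus if some operation had an empty candidate read set its $\sfomath{o}$ would be undefined and the hypothesis would fail; hence $\mathcal{RS}_{e}(o) \neq \emptyset$ for all $o$ and all $t$, which is exactly $\prereadmath{e}{\mathcal{T}'}$.

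\textbf{Main obstacle.} Unlike the isolation-equivalence proofs, this argument needs no induction or graph-theoretic construction; the only delicate point is pinning down how $\sfomath{o} \xrightarrow{+} s_t$ should be read when $\mathcal{RS}_{e}(o)$ may be empty. I would make explicit at the outset that writing $\sfomath{o}$ already asserts that the candidate read set is non-empty --- the very property the \textsc{preread} predicate guarantees --- so that the backward direction is a genuine implication rather than a vacuous one. Once this convention is fixed, both directions collapse to a one-line composition of the reachability relations $\xrightarrow{*}$ and $\xrightarrow[t]{}$.
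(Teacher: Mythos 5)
Your proof is correct and takes essentially the same approach as the paper's: forward, compose $\sfomath{o} \xrightarrow{*} s_p$ with the transition $s_p \rightarrow s_{t}$; backward, observe that asserting $\sfomath{o} \xrightarrow{+} s_{t}$ already presupposes $\RSmath{o} \neq \emptyset$, which is exactly $\prereadmath{e}{\mathcal{T}'}$. The only cosmetic difference is that the paper establishes the forward direction by a case analysis on operation type (internal reads, external reads, writes), whereas you invoke the uniform fact that every candidate read state---for reads by Definition~\ref{defn:readstate}, for writes by the stated convention---precedes $s_p$.
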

\vspace{-3mm}
\begin{proof}
	($\Rightarrow$)  We first prove  $\prereadmath{e}{\mathcal{T}'} \Rightarrow \forall t \in \mathcal{T}': \forall o \in \Sigma_t: \sfomath{o} \xrightarrow{+} s_{t}$. 

By the definition of $\prereadmath{e}{\mathcal{T}'}$, we have $ \forall t \in \mathcal{T}', \forall o \in \Sigma_t$, $\RSmath{o} \neq \emptyset$.
We consider the two types of operations: reads and writes.
\begin{addmargin}[5mm]{0pt}
\textbf{Reads}	The set of candidate read states of a read operation $o= r(k,v)$
is defined as $\RSmath{o} = \{s \in \mathcal{S}_e | 
	s \xrightarrow{*} s_p \wedge\big ( (k,v) \in s \vee (\exists w(k,v) \in \Sigma_t: w(k,v) \xrightarrow{to} r(k,v)) \big)\}$. The disjunction considers
two cases:
	\begin{enumerate}
		\item \textit{Internal Reads} if $\exists w(k,v) \in \Sigma_t: w(k,v) \xrightarrow{to} r(k,v)$, $\RSmath{o} = \{s \in \mathcal{S}_e | s \xrightarrow{*} s_p\}$. Hence $s_0 \in \RSmath{o}$. It follows that $\sfomath{o} = s_0 \xrightarrow{+} s_t$
		\item \textit{External Reads} By $\prereadmath{e}{\mathcal{T}'}$, we have that $\RSmath{o} \neq \emptyset$. There must therefore exist a state $s \in \mathcal{S}_e$ such that $s \xrightarrow{*} s_p \wedge (k,v) \in s$. Since $\sfomath{o}$ is, by definition, the first such $s$, we have that $\sfomath{o} \xrightarrow{*}s_p \rightarrow s_t$. We conclude: $\sfomath{o} \xrightarrow{+} s_t$.
	\end{enumerate}
\end{addmargin}
\begin{addmargin}[5mm]{0pt}
\textbf{Writes} The candidate read states set for write operations $o=w(k,v)$, is defined as $\RSmath{o} = \{s \in \mathcal{S}_e | 
	s \xrightarrow{*} s_p\}$. Hence, $s_0 \in \RSmath{o}$. It trivially
    follows that $(\sfomath{o} = s_0) \xrightarrow{+} s_t$.
\end{addmargin}	
	We conclude: $\prereadmath{e}{\mathcal{T}'} \Rightarrow \forall t \in \mathcal{T}': \forall o \in \Sigma_t: \sfomath{o} \xrightarrow{+} s_{t}$.
	
	($\Leftarrow$) Next, we prove that
  $\prereadmath{e}{\mathcal{T}'} \Leftarrow \forall t \in \mathcal{T}': \forall o \in \Sigma_t: \sfomath{o} \xrightarrow{+} s_{t}$. 

    By assumption, $\forall t \in \mathcal{T}': \forall o \in \Sigma_t: \sfomath{o} \xrightarrow{+} s_{t}$. By definition, $\sfomath{o} \in \RSmath{o}$.
It trivially follows that  $\forall t \in \mathcal{T}': \forall o \in \Sigma_t: \RSmath{o} \neq \emptyset$, i.e. $\prereadmath{e}{\mathcal{T}'}$ holds.
\end{proof}
\vspace{1mm}
\begin{lemma}
	\label{prereadlemma}
	For any $\mathcal{T}'$ that $\mathcal{T}' \subseteq \mathcal{T}$, $\prereadmath{e}{\mathcal{T}} \Rightarrow \prereadmath{e}{\mathcal{T}'}$.
\end{lemma}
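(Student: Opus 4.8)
The plan is to prove this directly from the definition of the PREREAD predicate, exploiting the fact that it is a universally quantified statement over the transactions in its second argument. First I would unfold both sides: by Definition~\ref{def:preread}, $\prereadmath{e}{\mathcal{T}}$ abbreviates $\forall t \in \mathcal{T}: \prereadmath{e}{t}$, and likewise $\prereadmath{e}{\mathcal{T}'}$ abbreviates $\forall t \in \mathcal{T}': \prereadmath{e}{t}$. Crucially, both predicates are evaluated against the \emph{same} execution $e$, so for any fixed transaction $t$ the per-transaction predicate $\prereadmath{e}{t}$---and hence each candidate read set $\mathcal{RS}_e(o)$ for $o \in \Sigma_t$---denotes exactly the same condition on either side; nothing about an operation's read set is recomputed when we shrink the transaction set.

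Given that, the argument is a one-step monotonicity claim. Assume $\prereadmath{e}{\mathcal{T}}$ holds; then $\prereadmath{e}{t}$ holds for every $t \in \mathcal{T}$. Fix an arbitrary $t \in \mathcal{T}'$. Since $\mathcal{T}' \subseteq \mathcal{T}$, we have $t \in \mathcal{T}$, so $\prereadmath{e}{t}$ holds by the assumption. As $t$ was arbitrary, $\prereadmath{e}{t}$ holds for all $t \in \mathcal{T}'$, which is precisely $\prereadmath{e}{\mathcal{T}'}$.

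There is essentially no obstacle here: the statement is the familiar observation that a universal claim over a set descends to any subset. The only point worth flagging explicitly---and the single place where a careless proof could go wrong---is to confirm that $\mathcal{RS}_e(o)$ is determined by the execution $e$ and the operation $o$ alone (through $o$'s parent state $s_p$ in $e$), and not by the ambient transaction set. This is immediate from Definition~\ref{defn:readstate}, so the candidate read sets are unaffected by the restriction from $\mathcal{T}$ to $\mathcal{T}'$ and the implication follows without any case analysis.
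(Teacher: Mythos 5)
Your proposal is correct and matches the paper's own proof: both simply unfold the definition of \preread{e}{$\mathcal{T}$} as a universal quantification over transactions and observe that a universal statement descends to any subset $\mathcal{T}' \subseteq \mathcal{T}$, since the candidate read sets depend only on $e$ and the operation, not on the ambient transaction set. No further comparison is needed.
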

\vspace{-3mm}
\begin{proof}
	Given $\prereadmath{e}{\mathcal{T}}$, by definition we have $ \forall t \in \mathcal{T}, \forall o \in \Sigma_t$, $\RSmath{o} \neq \emptyset$. Since $\mathcal{T}' \subseteq \mathcal{T}$, $\forall t \in \mathcal{T}' \Rightarrow \forall t \in \mathcal{T}$, it follows that $ \forall t \in \mathcal{T}', \forall o \in \Sigma_t$, $\RSmath{o} \neq \emptyset$, i.e. $\prereadmath{e}{\mathcal{T}'}$.
\end{proof}

We now begin in earnest our proof of Theorem~\ref{theorem:causal}.

	($\Leftarrow$) \textbf{We first prove that}  
 $ \forall se \in SE : \exists e :  \forall t \in \mathcal{T}_{se}: \sessionmath{CC}{se}{t}{e} \Rightarrow \forall se \in SE :\exists e: \forall t \in \mathcal{T}_{se} :\sessionmath{\mathcal{G}}{se}{t}{e} $. 

For any $se \in SE$, consider the execution $e$, such that  $\forall t \in \mathcal{T}_{se}: \sessionmath{CC}{se}{t}{e}$. We show that this
same execution satisfies the session test of all four session guarantees.
\begin{itemize}
\item\textbf{CC $\Rightarrow$ RMW}: By
assumption, $\sessionmath{CC}{se}{t}{e}$ for
all $\mathcal{T}_{se}$. Hence: $\forall t \in \mathcal{T}_{se}: \forall o \in \Sigma_t: \forall t' \xrightarrow{se} t:  s_{t'} \xrightarrow{*} sl_o$. Weakening
this statement gives the following
implication: $\forall o \in \Sigma_t: \forall t'\xrightarrow{se} t:\mathcal{W}_{t'} \neq \emptyset \Rightarrow s_{t'} \xrightarrow{*}sl_o$. Additionally,
$e$ satisfies  $\prereadmath{e}{\mathcal{T}}$ (by assumption) and
therefore  $\prereadmath{e}{\mathcal{T}_{se}}$ as $\mathcal{T}_{se}
\subseteq \mathcal{T}$ (by Lemma \ref{prereadlemma}). Putting it all  together: $e$ 
satisfies $\prereadmath{e}{\mathcal{T}_{se}} \wedge \forall o \in
\Sigma_t: \forall t'\xrightarrow{se} t:\mathcal{W}_{t'} \neq \emptyset
\Rightarrow s_{t'} \xrightarrow{*}sl_o$. 

We conclude that $\forall se \in SE: \exists e:  \forall t \in
\mathcal{T}_{se} : \sessionmath{RMW}{se}{t}{e}$.
	
\item\textbf{CC $\Rightarrow$ MW}: 
By assumption, $\sessionmath{CC}{se}{t}{e}$ for
all $t\in \mathcal{T}_{se}$. Hence, it holds that $\forall se' \in SE: \forall t_i \xrightarrow{se'} t_j: s_{t_i} \xrightarrow{+} s_{t_j}$. Weakening
this statement gives the following
implication: $\forall se' \in SE: \forall t_i' \xrightarrow{se'} t_j: (\mathcal{W}_{t_i} \neq \emptyset \wedge
		\mathcal{W}_{t_j} \neq \emptyset) \Rightarrow s_{t_i} \xrightarrow{+} s_{t_j} $.  Additionally,
$e$ satisfies  $\prereadmath{e}{\mathcal{T}}$ (by assumption) and
therefore  $\prereadmath{e}{\mathcal{T}_{se}}$ as $\mathcal{T}_{se}
\subseteq \mathcal{T}$ (by Lemma \ref{prereadlemma}). Putting it all together: 
$\prereadmath{e}{\mathcal{T}_{se}} \wedge \forall se' \in SE: \forall
t_i' \xrightarrow{se'} t_j: (\mathcal{W}_{t_i} \neq \emptyset \wedge
\mathcal{W}_{t_j} \neq \emptyset) \Rightarrow s_{t_i} \xrightarrow{+}
s_{t_j} $. 

We conclude that $\forall se \in SE: \exists e:  \forall t \in
\mathcal{T}_{se} : \sessionmath{MW}{se}{t}{e}$.

\item\textbf{CC $\Rightarrow$ MR}: By 
assumption, $ \sessionmath{CC}{se}{t}{e}$, 
hence $e$ ensures that $\forall t \in \mathcal{T}_{se}: \forall o \in \Sigma_t: \forall t' \xrightarrow{se} t:  s_{t'} \xrightarrow{*} sl_o$.  
Moreover, by assumption, we have that $\prereadmath{e}{\mathcal{T}}$. It follows that
$\forall t' \in \mathcal{T}: \forall o' \in \Sigma_{t'}: \sfomath{o'} \xrightarrow{+} s_{t'}$ (Lemma~\ref{sfolemma}). 
		Combining the two statements, we have $\forall o \in \Sigma_t: \forall t' \xrightarrow{se} t: \forall o' \in \Sigma_{t'}: \sfomath{o'} \xrightarrow{*} s_{t'} \xrightarrow{*} \slomath{o} $, i.e. $\sfomath{o'} \xrightarrow{*}  \slomath{o}$.
Finally, we have that $e$ satisfies
 $\ircmath{e}{t}$ by assumption, and $\prereadmath{e}{\mathcal{T}_{se}}$
by Lemma~\ref{prereadlemma}: we have	$\prereadmath{e}{\mathcal{T}}$
and $\mathcal{T}_{se} \subseteq \mathcal{T}$. Putting
it all together, $e$ satisfies
		$\prereadmath{e}{\mathcal{T}_{se}} \wedge
                \ircmath{e}{t} \wedge \forall o \in \Sigma_t: \forall
                t'\xrightarrow{se}t: \forall o' \in \Sigma_{t'}:
                sf_{o'}\xrightarrow{*}sl_o $.

 We conclude that  $\forall se \in SE: \exists e:  \forall t \in \mathcal{T}_{se}: \sessionmath{MR}{se}{t}{e}$.

\item\textbf{CC $\Rightarrow$ WFR}: By assumption,
$ \sessionmath{CC}{se}{t}{e}$. Hence $e$ satisfies $\forall se' \in SE: \forall t_i \xrightarrow{se'} t_j: s_{t_i} \xrightarrow{+} s_{t_j}$.
By assumption, $e$ respects $\prereadmath{e}{\mathcal{T}}$. It follows from lemma \ref{sfolemma} that $\forall t_i \in \mathcal{T}: \forall o_i \in \Sigma_{t_i}: \sfomath{o_i} \xrightarrow{+} s_{t_i}$. 
We have, by combining these two statements, that: $\forall se' \in SE: \forall t_i\xrightarrow{se'}t_j: \forall o_i \in \Sigma_{t_i}: sf_{o_i} \xrightarrow{+} s_{t_i} \xrightarrow{+} s_{t_j} $, i.e. $sf_{o_i} \xrightarrow{+} s_{t_j} $. Weakening this
statement results in the following implication: 
$\forall se' \in SE: \forall t_i\xrightarrow{se'}t_j: \forall o_i \in \Sigma_{ti}:\mathcal{W}_{t_j} \neq \emptyset \Rightarrow sf_{o_i} \xrightarrow{+} s_{t_j} $. Putting it all together, $e$ satisfies	
		$\prereadmath{e}{\mathcal{T}} \wedge \forall se' \in
                SE: \forall t_i\xrightarrow{se'}t_j: \forall o_i \in
                \Sigma_{ti}:\mathcal{W}_{t_j} \neq \emptyset
                \Rightarrow sf_{o_i} \xrightarrow{+} s_{t_j}$.

                We conclude that 
                $\forall se \in SE: \exists e: \forall t \in
                \mathcal{T}_{se} : \sessionmath{WFR}{se}{t}{e}$.

\end{itemize}

	($\Rightarrow$) \textbf{We now prove that}, given	
	$\mathcal{G} = \{RMW,MR,MW,WFR \}$, the following implication holds: $\forall se \in SE :\exists e: \forall t \in \mathcal{T}_{se} :\sessionmath{\mathcal{G}}{se}{t}{e} 
	\Rightarrow \forall se \in SE : \exists e :  \forall t \in \mathcal{T}_{se}: \sessionmath{CC}{se}{t}{e}$.

To this end, we prove that for any session $se$, given the execution
$e$ such that $\forall t \in \mathcal{T}_{se} :\sessionmath{\mathcal{G}}{se}{t}{e} $, we can construct an equivalent execution $e'$ that satisfies all four session guarantees, such that $\forall t \in \mathcal{T}_{se}: \sessionmath{CC}{se}{t}{e'}$.

The need for constructing an alternative but equivalent execution $e'$ may be counter-intuitive at first. We motivate it informally here: session guarantees
place no requirements on the commit order of read-only transactions. In contrast,
causal consistency requires all transactions to commit in session order. As read-only transactions have no effect on the candidate read states of other read-only or update transactions, it is therefore always possible to generate an equivalent execution
$e'$ such that update transactions commit in the same order as in $e$, and read-only transactions commit in session order. Our proof shows that, if $e$ satisfies
all four session guarantees, $e'$ will satisfy all four session guarantees. This
will in turn imply that $e'$ satisfies causal consistency.

\par \textbf{Equivalent Execution} We now describe more formally how to construct this execution $e'$: first, we apply in $e'$ all transactions $t \in \mathcal{T}$ such that $\mathcal{W}_t \neq \emptyset$, respecting their commit order in $e$. We denote the states created by applying $t$ in $e$ and $e'$ as $s_{e,t}$ and $s_{e',t}$ respectively. Our construction enforces the following relationship between
$e$ and $e'$: $\forall t \in \mathcal{T} \wedge \mathcal{W}_t \neq \emptyset:  (k,v) \in s_{e,t} \Leftrightarrow (k,v) \in s_{e',t}$. All update transactions
are applied in the same order, and read-only transactions have no effect on the state. Next, we consider the read-only transactions $t_i \in \mathcal{T}$ in session
order: we select the parent state for $t_i$ to be $\max\{\max_{o \in \Sigma_{t_i}}\{\sfomath{o}\}, s_{t_{i-1}}\}$, where $ t_{i-1}$ denotes the transaction that directly precedes $t$ in a session. A session defines a total order of transactions: $t_{i-1}$ is unique. 
\yp{multiple read-only transactions may select the same $\max_{o \in \Sigma_{t}}\{\sfomath{o}\}$ to be the parent state. We never mention this case (too hard to talk about it clearly, and it is not a fundamental case that we need to distinguish), but the statement is still true.}
If $t_{i}$ is the first transaction in the 
session, we simply set $s_{t_{i-1}}$ to $s_0$. As transactions do not change
the value of states, this process maintains the previously stated invariant:
 $\forall t \in \mathcal{T} \wedge \mathcal{W}_t \neq \emptyset: s_{e,t} \equiv s_{e',t}$.

 We now proceed to prove that $e'$ satisfies
 $\prereadmath{e'}{\mathcal{T}}$ and the session test for all session
 guarantees.

\begin{addmargin}[5mm]{3mm}
\par \textbf{Preread} First, we show that
$\prereadmath{e'}{\mathcal{T}}$ holds.
We distinguish between update and read-only transactions: 
\begin{itemize}
\item \textit{Read-Only Transactions.} By construction, the parent
state of a read-only transaction $t_i$ is $s_p(t_i) = \max\{\max_{o \in \Sigma_t}\{\sfomath{o}\}, s_{t_{i-1}}\}$. It follows that $\forall o \in \Sigma_t, s_p(t) \geq \sfomath{o}$ and consequently $\sfomath{o} \xrightarrow{*} s_p(t) \rightarrow s_t$. We have $\forall o \in \Sigma_t:\sfomath{o} \xrightarrow{+} s_t$ in $e'$.
\item \textit{Update Transactions.} Update transactions $t$ consist of both read and write operations. A write operation $o=w(k,v)\in \Sigma_t$ has for
candidate read set the set of all states $s \in \mathcal{S}_{e'}$ such that $s \xrightarrow{*} s_p(t)$. Hence $\sfomath{o} = s_0$ and $\sfomath{o} \xrightarrow{+} s_t$ in $e'$ trivially holds. The state corresponding to $\sfomath{o}$ for read operations $o=r(x_i)$ is the state created by the transaction $t_i$ that wrote version $x_i$ of object $x$: $\sfomath{o}=s_{e',t_i}$. By assumption, $e$ satisfies
$\prereadmath{e}{\mathcal{T}}$, hence by Lemma~\ref{sfolemma}, we have $s_{t_i} \xrightarrow{+} s_{t}$ in $e$. By construction (update transactions
are applied in $e'$ in the same order as $e$), it follows that $s_{t_i} \xrightarrow{+} s_{t}$ in $e'$, i.e. $\sfomath{o} \xrightarrow{+} s_{t}$ in $e'$.
\end{itemize}
By Lemma~\ref{sfolemma}, we conclude that $\prereadmath{e'}{\mathcal{T}}$ holds.\vspace{3mm}

\par \textbf{MW} We next show that $e'$ satisfies
$\sessionmath{MW}{se}{t}{e'}$ for all sessions $se \in SE$ and
$\forall t \in \mathcal{T}_{se}$.  Consider any session $se'$ and
two transactions $t_i, t_j \in \mathcal{T}_{se'}$ such that
$t_i \xrightarrow{se'} t_j$, and
$\mathcal{W}_{t_i} \neq \emptyset \wedge \mathcal{W}_{t_j} \neq
\emptyset$.
As $e$, by assumption, satisfies
$\forall t \in \mathcal{T}_{se} :\sessionmath{MW}{se}{t}{e} $, we have
$s_{e,t_i} \xrightarrow{+} s_{e,t_j} $.  Since $t_i$ and $t_j$ are
update transactions, they are applied in the same order in $e'$ as in
$e$: hence, $s_{e',t_i} \xrightarrow{+} s_{e',t_j}$.  Further, recall
that we previously
showed that $\prereadmath{e'}{\mathcal{T}}$ (and consequently
$\prereadmath{e'}{\mathcal{T}_{se}}$ by Lemma~\ref{prereadlemma}).

Putting it all together, we conclude that 
$\prereadmath{e'}{\mathcal{T}_{se}} \wedge \forall se' \in SE: \forall
t_i \xrightarrow{se'} t_j: (\mathcal{W}_{t_i} \neq \emptyset \wedge
\mathcal{W}_{t_j} \neq \emptyset) \Rightarrow s_{t_i} \xrightarrow{+}
s_{t_j} $,
i.e.,  $\forall t \in \mathcal{T}_{se} :\sessionmath{MW}{se}{t}{e'}$.

\vspace{3mm}
 				
\par \textbf{WFR} Consider all update transactions such that  $\forall se' \in SE: \forall t_i\xrightarrow{se'}t_j: \forall o_i \in \Sigma_{t_i}:\mathcal{W}_{t_j} \neq \emptyset $. We prove that $\sfomath{o_i} \xrightarrow{+} s_{t_j} $. Consider the two types of operations that arise in an update transaction:
\begin{itemize}
\item \textit{Reads.} The state corresponding to $\sfomath{o_i}$ for read operations $o=r_i(x,x_k)$ is the state created by the transaction $t_k$ that wrote version $x_k$ of object $x$: $\sfomath{o_i}=s_{t_k}$.
By assumption, $e$ satisfies $\forall t \in \mathcal{T}_{se}
:\sessionmath{WFR}{se}{t}{e}$, we have $\sfomath{o_i} \xrightarrow{+}
s_{t_j}$ in $e$, i.e. $s_{t_k} \xrightarrow{+} s_{t_j}$ in $e$. Since
we apply update transactions in $e'$ the same order as in $e$, it follows that $s_{t_k} \xrightarrow{+} s_{t_j}$ in $e'$, i.e., $\sfomath{o_i} \xrightarrow{+} s_{t_j}$ in $e'$.
\item \textit{Writes.} The candidate read states set for write operations $o=w(x_i)$, is defined as $\RSmath{o} = \{s \in \mathcal{S}_e | 
	s \xrightarrow{*} s_p\}$. It trivially
    follows that $\sfomath{o} = s_0 \xrightarrow{+} s_t$.
\end{itemize}
We conclude: $\forall se \in SE :\forall t \in \mathcal{T}_{se} :\sessionmath{WFR}{se}{t}{e'} $.				
 
\vspace{5mm}

Before proving that the remaining session guarantees hold, we prove an intermediate result:
\begin{clm}
\label{bubu}
$\forall se' \in SE: \forall t_i \xrightarrow{se'} t_j: s_{t_i}
\xrightarrow{+} s_{t_j}$. Intuitively, all transactions commit in session order.
\end{clm}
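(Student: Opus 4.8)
The plan is to prove the claim by reducing it to the case of \emph{immediate} session predecessors and then chaining with the transitivity of $\xrightarrow{+}$. Since $t_i \xrightarrow{se'} t_j$ means the two are joined by a chain $t_i \xrightarrow{se'} \cdots \xrightarrow{se'} t_j$ of consecutive transactions of the session $se'$, it suffices to show that in $e'$ every transaction $t$ with immediate session predecessor $t^-$ satisfies $s_{t^-} \xrightarrow{+} s_t$; composing these facts along the chain yields $s_{t_i} \xrightarrow{+} s_{t_j}$. I would then split on the types (update vs.\ read-only) of $t$ and of $t^-$, giving four cases, three of which fall out immediately from the construction and from the session guarantees already shown to hold for $e'$.

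First the easy cases. If $t$ is read-only, its parent state in $e'$ was chosen by construction as $\max\{\max_{o \in \Sigma_t}\{\sfomath{o}\},\, s_{t^-}\} \geq s_{t^-}$, so $s_{t^-} \xrightarrow{*} s_p(t) \rightarrow s_t$, i.e.\ $s_{t^-} \xrightarrow{+} s_t$, irrespective of the type of $t^-$. If $t$ is an update transaction and $t^-$ is also an update transaction, then $t^- \xrightarrow{se'} t$ with both write sets non-empty, so the MW property already established for $e'$ gives $s_{t^-} \xrightarrow{+} s_t$ directly. The only delicate case is when $t$ is an update transaction whose immediate predecessor $t^-$ is read-only, since here no single session guarantee directly orders $s_{t^-}$ before $s_t$.

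For that case I would prove an auxiliary lemma by induction on session position (earliest first): for every read-only transaction $u$ and every update transaction $u'$ with $u \xrightarrow{se'} u'$, one has $s_u \xrightarrow{+} s_{u'}$. Expanding $s_p(u) = \max\{\max_{o \in \Sigma_u}\{\sfomath{o}\},\, s_{u^-}\}$, the first argument satisfies $\max_{o}\{\sfomath{o}\} \xrightarrow{+} s_{u'}$ by the WFR property of $e'$ (applicable because $u \xrightarrow{se'} u'$ and $\mathcal{W}_{u'} \neq \emptyset$); for the second argument, if $u^-$ is update I invoke MW (both update, $u^- \xrightarrow{se'} u'$), if $u^-$ is read-only I invoke the induction hypothesis at the strictly earlier $u^-$ with the same $u'$ (since $u^- \xrightarrow{se'} u'$), and if $u^-$ does not exist then $s_{u^-} = s_0$ trivially precedes $s_{u'}$. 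Hence $s_p(u) \xrightarrow{+} s_{u'}$, and because $u$ is read-only it leaves the state unchanged and is inserted immediately after $s_p(u)$, giving $s_u \xrightarrow{+} s_{u'}$. Applying this lemma with $u = t^-$ and $u' = t$ closes the delicate case, completing the reduction. The main obstacle is exactly this read-only-predecessor-before-update step: it is not handed over by one guarantee but forces an induction along the session that threads WFR (to order the read states of $t^-$) together with MW (to order any update transactions sitting earlier in the session); care is also needed because, as the construction note observes, several read-only transactions may share a parent state, so one must reason about relative order rather than a unique insertion point.
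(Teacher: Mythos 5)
Your proposal is correct and takes essentially the same route as the paper's proof: the construction-based ordering when the later transaction is read-only, MW for update--update pairs, and the same WFR-plus-MW induction along the session to order a read-only transaction before a later update transaction, followed by chaining and transitivity. The only difference is organizational --- you reduce to immediate session predecessors before the case split, whereas the paper first proves the statement for update targets against \emph{all} session predecessors (with the induction embedded there) and then handles read-only targets via consecutive pairs --- but the key lemmas and the delicate step are identical.
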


\begin{proof}
We first prove this result for update transactions, then generalise it to
all transactions.
\par\textbf{Update Transactions} For a given session $se'$, let
$T_u$ be the set of all update transactions in $\mathcal{T}_{se'}$,
and let $t_j$ be an arbitrary transaction in $T_u$. It thus holds that
$t_j \in \mathcal{T}_{se'} \wedge \mathcal{W}_{t_j} \neq \emptyset$.
We associate with each such $t_j$ two further sets: $T_{{pre}_u}$
and $T_{{pre}_r}$. $T_{{pre}_u}$ contains all update transactions $t_i$
such that $t_i \xrightarrow{se'} t_j$. Similarly,  $T_{{pre}_r}$ contains all read-only transactions $t_i$ such that $t_i \xrightarrow{se'} t_j$.$T_{pre}$ is the
union of those two sets. We prove that $\forall
t_i \in T_{pre}: s_{t_i} \xrightarrow{+} s_{t_j}$. 
If $t_i \in T_{{pre}_u}$, hence $\mathcal{W}_i \neq \emptyset \wedge
\mathcal{W}_j \neq \emptyset$, the result trivially follows from monotonic writes. We previously proved that $\forall t \in \mathcal{T}_{se'} :\sessionmath{MW}{se'}{t}{e'}$. As such, the conjunction $\mathcal{W}_i \neq \emptyset \wedge
\mathcal{W}_j \neq \emptyset$ implies $s_{t_i} \xrightarrow{+} s_{t_j}$ in $e'$.

The proof is more complex if $t_i \in T_{{pre}_r}$ (read-only transaction). We proceed by induction:
\par\textbf{Base Case} Consider the first read-only transaction $t_i \in T_{{pre}_r}$ according to the session order $se'$. This transaction is unique
(sessions totally order transactions). Recall that we choose the parent state of a read-only transaction as $s_p(t_i) = \max\{\max_{o_i \in \Sigma_{t_i}}\{\sfomath{o_i}\}, s_{t_{i-1}}\}$, where $ t_{i-1}$ denotes the transaction that directly precedes $t_i$ in session $se'$ ($s_{t_{i-1}} = s_0$ if
$t_i$ is the first transaction in the session). Hence, $t_i$'s parent state is either $s_p(t_i) = \max_{o_i \in \Sigma_{t_i}}\{ \sfomath{o_i} \} $, or $s_p(t_i) = s_{t_{i-1}}$
\begin{enumerate}
\item If $s_p(t_i) = \max_{o_i \in \Sigma_{t_i}}\{ \sfomath{o_i} \}$:
We previously proved that $\forall t \in \mathcal{T}_{se} :\sessionmath{WFR}{se}{t}{e'}$.
It follows that $\forall o_i \in \Sigma_{t_i}: \sfomath{o_i} \xrightarrow{+} s_{t_j}$ in $e'$ and consequently, $\max_{o_i \in \Sigma_{t_i}}\{ \sfomath{o_i} \} \xrightarrow{+} s_{t_j}$ in $e'$. Given that $s_p(t_i) = \max_{o_i \in \Sigma_{t_i}}\{ \sfomath{o_i} \}$, the following then holds $s_p(t_i)  \xrightarrow{+} s_{t_j}$ in $e'$. Finally, we note that, by definition
(Definition~\ref{def:statetrans}), the parent state of
a transaction directly precede its commit state. We can thus rephrase the
aforementioned relationship as $s_p(t_i) \rightarrow s_{t_i} \xrightarrow{*} s_{t_j}$ in $e'$, concluding the proof for this subcase.
\item If $s_p({t_i}) = s_{t_{i-1}}$: We defined
$t_{i}$ to be the first read-only transaction in the session. Given that,
by construction $t_{i-1} \xrightarrow{se'} t_{i}$, $t_{i-1}$ is
necessarily an update transaction, where $\mathcal{W}_{t_{i-1}} \neq \emptyset$. Consider the pair of transactions $(t_{i-1}, t_j)$. The session order is transitive, hence
$t_{i-1} \xrightarrow{se'} t_j$ given that $t_{i-1} \xrightarrow{se'} t_{i}$
and $t_{i} \xrightarrow{se'} {t_j}$ both hold. By construction, we have
$\mathcal{W}_{t_{i-1}} \neq \emptyset \wedge \mathcal{W}_{t_j} \neq \emptyset$.
We previously proved that $\forall t \in \mathcal{T}_{se} :\sessionmath{MW}{se}{t}{e'}$.
Hence, if $\mathcal{W}_{t_{i-1}} \neq \emptyset \wedge \mathcal{W}_{t_j} \neq \emptyset$, it follows that $ s_{t_{i-1}} \xrightarrow{+} s_{t_j}$. As above,
we conclude that: $s_p(t_i)  \xrightarrow{+} s_{t_j}$ in $e'$,  and finally $s_p(t_i) \rightarrow s_{t_i} \xrightarrow{*} s_{t_j}$. 
 				
\end{enumerate}
To complete the base case, we note that $t_i \neq t_j$ as $t_i \xrightarrow{se'} t_j$. We conclude:  $s_{t_i} \xrightarrow{+} s_{t_j}$.

\par\textbf{Induction Step} Consider the k-th read-only transaction $t_i$ in $se'$ 
such that $t_i \xrightarrow{se'} t_j$. We assume that it satisfies 
the induction hypothesis $s_{t_i} \xrightarrow{+} s_{t_j}$. Now
consider the (k+1)-th read-only transaction $t_{i'}$ in $se'$, such that $t_{i'} \xrightarrow{se'} t_{j}$. By construction, we once again distinguish two cases:
$t_{i'}$'s parent state is either $s_p(t_{i'}) = \max_{o_{i'} \in \Sigma_{t_i'}}\{ \sfomath{o_i'} \} $, or $s_p({t_{i'}}) = s_{t_{{i'-1}}}$, where $t_{i'-1}$ denotes the transaction directly preceding $t_{i'}$ in a session.
%
\begin{enumerate}
	\item If $s_p(t_{i'}) = \max_{o_{i'} \in \Sigma_{t_{i'}}}\{ \sfomath{o_i'} \}$:
We previously proved that $\forall t \in \mathcal{T}_{se} :\sessionmath{WFR}{se}{t}{e'}$.
It follows that $\forall o_{i'} \in \Sigma_{t_{i'}}: \sfomath{o_{i'}} \xrightarrow{+} s_{t_j}$ in $e'$ and consequently, $\max_{o_{i'}} \in \Sigma_{t_{i'}}\{ \sfomath{o_{i'}} \} \xrightarrow{+} s_{t_j}$ in $e'$. Given that $s_p(t_{i'}) = \max_{o_{i'} \in \Sigma_{t_{i'}}}\{ \sfomath{o_{i'}} \}$, the following then holds $s_p(t_{i'})  \xrightarrow{+} s_{t_j}$ in $e'$. Finally, we note that, by definition
(Definition~\ref{def:statetrans}), the parent state of
a transaction must directly precede its commit state. We can thus rephrase the
aforementioned relationship as $s_p(t_{i'}) \rightarrow s_{t_{i'}} \xrightarrow{*} s_{t_j}$ in $e'$, concluding the proof for this subcase.

					\item If $s_p{(t_{i'})} = s_{t_{i'-1}}$:
First, we note that $t_{i'-1} \xrightarrow{se'} t_j$ holds, as the session order is transitive and we have both $t_{i'-1} \xrightarrow{se'} t_{i'}$ and $t_{i'} \xrightarrow{se'} {t_j}$.
We then distinguish between two cases: $t_{i'-1}$ is an update transaction,
and $t_{i'-1}$ is a read only transaction. If $t_{i'-1}$ is an
update transaction, the following conjunction holds:
$\mathcal{W}_{t_{i'-1}} \neq \emptyset \wedge \mathcal{W}_{t_j} \neq \emptyset$.
Given that we previously proved $\forall t \in \mathcal{T}_{se} :\sessionmath{MW}{se}{t}{e'}$
, we can infer that $s_{t_{i'-1}} \xrightarrow{+} s_{t_j}$, i.e. $s_p(t_{i'})  \xrightarrow{+} s_{t_j}$ in $e'$. We again note that by definition
(Definition~\ref{def:statetrans}), the parent state of
a transaction must directly precede its commit state. We can thus rephrase the
aforementioned relationship as $s_p(t_{i'}) \rightarrow s_{t_{i'}} \xrightarrow{*} s_{t_j}$ in $e'$. 
We now consider the case where $t_{i'-1}$  is a
read-only transaction. If $t_{i'}$ is the k+1-th read-only transaction, 
then, by construction $t_{i'-1}$ is the k-th read-only transaction. Hence
$t_i = t_{i'-1}=s_p(t_{i'})$. Our induction hypothesis states that $s_{t_{i}} \xrightarrow{+} s_{t_j}$. It thus follows that $s_p(t_{i'})  \xrightarrow{+} s_{t_j}$ in $e'$. 
As previously, we conclude that: $s_p(t_i) \rightarrow s_{t_i} \xrightarrow{*} s_{t_j}$ in $e'$. 
\end{enumerate}
To complete the induction step, we note that $t_i' \neq t_j$ as $t_i' \xrightarrow{se'} t_j$. We conclude:  $s_{t_i'} \xrightarrow{+} s_{t_j}$.

We proved the desired result for both the base case and induction step. By induction,
we conclude that:  given any $t_j$ such that $\mathcal{W}_{t_j} \neq \emptyset$,and any read-only transactions $t_i$ such that $t_i \xrightarrow{se'} t_j$, $s_{t_i} \xrightarrow{+}  s_{t_j}$ holds.

We conclude that given any $t_j$ such that $\mathcal{W}_{t_j} \neq
\emptyset$,  and any transaction $t_i$ such that $t_i
\xrightarrow{se'} t_j$, $s_{t_i} \xrightarrow{+}  s_{t_j}$ holds.
	
\par\textbf{Read-Only Transactions} We now generalise the result to both update
and read-only transactions. Specifically, we prove that in $e'$, $\forall t_i \xrightarrow{se'} t_j: s_{t_i} \xrightarrow{+} s_{t_j}$. We first prove this
statement for any two consecutive transactions in a session, and then extend it to all transactions in a session. Consider any two pair of transactions $t_i$,$t_{i-1}$ in $\mathcal{T}_{se'}$ such that $t_{i-1}$ directly precede $t_i$ in $se'$
(${t_{i-1}} \xrightarrow{se'} {t_{i}}$). If $\mathcal{W}_{t_i} \neq \emptyset$,
$s_{t_{i-1}} \xrightarrow{+} s_{t_i}$ as proven above. If ${t_i}$ is a read-only transaction, its parent state, by construction, is equal to $s_p(t_i) =\max\{\max_{o_i \in \Sigma_{t_i}}\{ \sfomath{o_i} \},s_{t_{i-1}} \} $. Since $ \max\{\max_{o_i \in \Sigma_{t_i}}\{ \sfomath{o_i} \},s_{t_{i-1}} \} \geq s_{t_{i-1}}$ by definition, it follows that $s_{t_{i-1}} \xrightarrow{*}  s_p(t_i) $.
As $s_p(t_i) \rightarrow s_{t_i}$, it follows that $s_{t_{i-1}} \xrightarrow{+} s_{t_i}$.
Together each such pair of consecutive transactions $t_{i-1}, t_i$ defines a sequence: $t_1 \xrightarrow{se'} t_2 \xrightarrow{se'} \dots \xrightarrow{se'} t_{k} $, where $\mathcal{T}_{se'} = \{t_1,\dots,t_k\}$. From the implication derived in the previous paragraph, it follows that 
$s_{t_1} \xrightarrow{+} s_{t_2} \xrightarrow{+} \dots \xrightarrow{+} s_{t_k}$. Noting that session order is transitive, we conclude: $\forall t_i \xrightarrow{se'} t_j: s_{t_i} \xrightarrow{+} s_{t_j}$. 

This completes the proof of Claim~\ref{bubu}.
\end{proof}

\par\textbf{RMW} We now return to session guarantees and prove that
$\forall t \in \mathcal{T}_{se} : \sessionmath{RMW}{se}{t}{e'}$. Specifically,
we show that  $\prereadmath{e'}{\mathcal{T}_{se}} \wedge \forall o \in
\Sigma_t: \forall t'\xrightarrow{se} t:\mathcal{W}_{t'} \neq \emptyset
\Rightarrow s_{t'} \xrightarrow{*}sl_o$. 

We proceed to prove that each of the two clauses holds true.

By assumption,
$e$ guarantees read-my-writes: $\forall t \in \mathcal{T}_{se} :\sessionmath{RMW}{se}{t}{e}$. Consider an arbitrary
transaction $t$, and all update transactions $t'$ that precede $t$ in the session: $\forall o \in \Sigma_t: \forall t'\xrightarrow{se} t:\mathcal{W}_{t'} \neq \emptyset$. We distinguish between read operations and write operations:
\begin{itemize}
\item Let $o$ be a read operation $o = r(k,v)$.
Its candidate read set is $\mathcal{RS}_{e'}(o) = \{s \in \mathcal{S}_e |	s \xrightarrow{*} s_p(t) \wedge\big ( (k,v) \in s \vee (\exists w(k,v) \in \Sigma_t: w(k,v) \xrightarrow{to} r(k,v)) \big)\}$. $\slomath{o}$,
the last state in $\mathcal{RS}_{e'}(o)$ can have
one of two values: $\slomath{o} = s_p(t)$, disallowing
states created after $t$'s commit point, or $\slomath{o} = s_p(\hat{t})$, where $\hat{t}$ is the
update transaction that writes the next version of $k$.
\begin{itemize}
\item $\slomath{o} = s_p(t)$. We previously proved
that $\forall se' \in SE: \forall t_i \xrightarrow{se'} t_j: s_{t_i} \xrightarrow{+} s_{t_j} $. By construction, $t' \xrightarrow{se} t$. It
follows that $s_{t'} \xrightarrow{+} s_{t}$ and consequently that $s_{t'} \xrightarrow{*} s_p(t)$.
Given $s_p(t) = \slomath{o}$, we conclude: $s_{t'} \xrightarrow{*} \slomath{o}$.					
\item $\slomath{o} = s_p(\hat{t})$. Consider
first the relationships between read states
and commit states in $e$. By 
assumption, $e$ satisfies $\forall t \in \mathcal{T}_{se} :\sessionmath{RMW}{se}{t}{e} $, i.e. $s_{t'} \xrightarrow{*} \slomath{o}$ in $e$. Since $\hat{t}$ wrote the next version of the object that $t$ read,
we have that $\slomath{o} \xrightarrow{*} s_p(\hat{t}) \xrightarrow{+} s_{\hat{t}}$ in $e$. Combining the guarantee given by
read-my-writes $s_{t'} \xrightarrow{*} \slomath{o}$ and $\slomath{o} \xrightarrow{+} s_{\hat{t}}$, we obtain $s_{t'} \xrightarrow{+} s_{\hat{t}}$ in $e$. Returning to the execution $e'$, since
$t'$ and $\hat{t}$ are both update transactions, $\mathcal{W}_{t'} \neq \emptyset \wedge \mathcal{W}_{\hat{t}} \neq \emptyset$, if $s_{t'} \xrightarrow{+} s_{\hat{t}}$ in $e$, then $s_{t'} \xrightarrow{+} s_{\hat{t}}$ in $e'$. Given $e'$ is a total
order and $s_p({\hat{t}}) \rightarrow s_{\hat{t}}$,
we conclude $s_{t'} \xrightarrow{*} s_p(\hat{t})$,  and $s_{t'} \xrightarrow{*} \slomath{o}$.
\end{itemize}
\item Let $o=w(k,v)$ be a write operation. By Claim~\ref{bubu}, it holds that $\forall se' \in SE: \forall t_i \xrightarrow{se'} t_j: s_{t_i} \xrightarrow{+} s_{t_j} $. As $t' \xrightarrow{se} t$, it follows that $s_{t'} \xrightarrow{+} s_{t}$ 
and consequently that $s_{t'} \xrightarrow{*} s_p(t)$. We conclude: $s_{t'} \xrightarrow{*} \slomath{o}$.
\end{itemize}
Finally, as $\prereadmath{e'}{\mathcal{T}}$ 
and $\mathcal{T}_{se} \subseteq \mathcal{T}$, by
Lemma~\ref{prereadlemma}, $\prereadmath{e'}{\mathcal{T}_{se}}$ holds.

 We conclude that $\prereadmath{e'}{\mathcal{T}_{se}} \wedge \forall o \in \Sigma_t: \forall t'\xrightarrow{se} t:\mathcal{W}_{t'} \neq \emptyset \Rightarrow s_{t'} \xrightarrow{*}\sfomath{o}$, i.e. $\forall t \in \mathcal{T}_{se} :\sessionmath{RMW}{se}{t}{e'}$.
 				
\par\textbf{MR} Finally, we prove that $e'$ satisfies the
final session guarantee: $\sessionmath{MR}{se}{t}{e'}$. Specifically,
we show that: \newline
$\prereadmath{e'}{\mathcal{T}_{se}} \wedge \ircmath{e'}{t} \wedge
\forall o \in \Sigma_t: \forall t'\xrightarrow{se}t: \forall o' \in
\Sigma_{t'}: sf_{o'}\xrightarrow{*}sl_o$.
Intuitively, this states that the read state of $o$ must include any
write seen by $o'$.

We proceed to prove that each of the three clauses holds true. 

The first clause follows directly from Lemma~\ref{prereadlemma} and
the fact that $\mathcal{T}_{se} \subseteq \mathcal{T}$. We can then
conclude that $\sfomath{o},\sfomath{o'}, \slomath{o}, \slomath{o'}$
must exist.

We can now proceed to prove that the third clause holds. We consider
two cases, depending on whether $o'$ is a read or a write operation.

\begin{addmargin}[5mm]{0pt}
\textbf{Read} The read operation $o' = r(k',v')$ entails the
existence of an update transaction $\hat{t'} \in \mathcal{T}$ that
writes version $v'$ of object $k'$, i.e $\sfomath{o'} = s_{\hat{t'}}$,
$k \in \mathcal{W}_{\hat{t'}}$. Now, $o$ can be either a read or a
write operation. 
\end{addmargin}
\begin{itemize}
\item Let us first assume that $o$ is a read operation $o=r(k,v)$. Its
  candidate read set is
  $\mathcal{RS}_{e'}(o) = \{s \in \mathcal{S}_{e'} | s \xrightarrow{*}
  s_p(t) \wedge\big ( (k,v) \in s \vee (\exists w(k,v) \in \Sigma_t:
  w(k,v) \xrightarrow{to} r(k,v)) \big)\}$.
  $\slomath{o}$, the last state in $\mathcal{RS}_{e'}(o)$ can be one of two cases $\slomath{o} = s_p(t)$, disallowing states created
  after $t$'s commit point, or $\slomath{o} = s_p(\hat{t})$, where
  $\hat{t}$ is the update transaction that writes the next version of
  $k$.
\begin{itemize}
\item $\slomath{o} = s_p(t)$. We previously proved
that $\forall se' \in SE: \forall t_i \xrightarrow{se'} t_j: s_{t_i} \xrightarrow{+} s_{t_j} $. By construction, $t' \xrightarrow{se} t$. It
follows that $s_{t'} \xrightarrow{+} s_{t}$ and consequently that $s_{t'} \xrightarrow{*} s_p(t)$.
Given $s_p(t) = \slomath{o}$, we conclude: $s_{t'} \xrightarrow{*} \slomath{o}$. Moreover, Lemma~\ref{sfolemma} states that given $\prereadmath{e'}{\mathcal{T}_{se}}$, we have $\sfomath{o'} \xrightarrow{+} s_{t'}$ in $e'$, hence: $\sfomath{o'} \xrightarrow{*} \slomath{o}$ in $e'$.
\item $\slomath{o} = s_p(\hat{t})$. Consider
first the relationships between read states
and commit states in $e$. By 
assumption, $e$ satisfies $\forall t \in \mathcal{T}_{se} :\sessionmath{MR}{se}{t}{e} $, i.e. $\sfomath{o'} \xrightarrow{*} \slomath{o}$ in $e$. Since $\hat{t}$ wrote the next version of the object that $t$ read,
we have that in $\slomath{o} \xrightarrow{+} s_{\hat{t}}$ in $e$. Combining the guarantee given by monotonic reads $\sfomath{o'} \xrightarrow{*} \slomath{o}$ and $\slomath{o} \xrightarrow{+} s_{\hat{t}}$, we obtain  
$\sfomath{o'} \xrightarrow{+} s_{\hat{t}}$ in $e$, i.e. $s_{\hat{t'}} \xrightarrow{+} s_{\hat{t}}$.
Returning to the execution $e'$, since
$\hat{t'}$ and $\hat{t}$ are both update transactions, $\mathcal{W}_{\hat{t'}} \neq \emptyset \wedge \mathcal{W}_{\hat{t}} \neq \emptyset$; if $s_{\hat{t'}} \xrightarrow{+} s_{\hat{t}}$ in $e$, then $s_{\hat{t'}} \xrightarrow{+} s_{\hat{t}}$ in $e'$. By definition
$\sfomath{o'} \xrightarrow{+} s_{t'}$. Moreover, by assumption, $\slomath{o} = s_p(\hat{t})$. Putting
this together, we obtain the desired result 
$\sfomath{o'} \xrightarrow{*} \slomath{o}$ in $e'$.
\end{itemize}
\item Let $o=w(k,v)$ be a write operation. The write set of a write operation is defined as  $\mathcal{RS}_{e'}(o) = \{s \in \mathcal{S}_e | 
 							s \xrightarrow{*} s_p\}$. It
                        follows that: $\slomath{o} = s_p(t)$. 
We previously proved that in $e'$, $\forall se' \in SE: \forall t_i \xrightarrow{se'} t_j: s_{t_i} \xrightarrow{+} s_{t_j} $: given
$t' \xrightarrow{se} t$, it
thus follows that $s_{t'} \xrightarrow{+} s_{t}$,
and consequently that $s_{t'} \xrightarrow{*} s_p(t)$. Noting
that $s_p(t) = \slomath{o}$ , we write $s_{t'} \xrightarrow{*} \slomath{o}$. Moreover,
as $\prereadmath{e'}{\mathcal{T}_{se}}$ holds for $e'$, by Lemma \ref{sfolemma}, we have $\sfomath{o'} \xrightarrow{+} s_{t'}$ in $e'$. Combining
the relationships, we conclude: $\sfomath{o'} \xrightarrow{+} \slomath{o}$ in $e'$.
\end{itemize} 

\begin{addmargin}[5mm]{0pt}
\textbf{Write} The candidate read state set for a write operation
$o'=w(k,v)$ is defined as the set of all states before $t'$'s commit
state. Hence $\sfomath{o'} = s_0$. Thus $\sfomath{o'} \xrightarrow{*}
\slomath{o}$ trivially holds.
\end{addmargin}

We can now prove that the second clause, $\ircmath{e'}{t}$,
holds---namely, that
$\forall o, \, o' \in \Sigma_t: o'\xrightarrow{to} o \Rightarrow
\sfomath{o'}\xrightarrow{*}\slomath{o}$. Once again we consider two
cases, depending on whether $o'$ is a read or a write operation.

\begin{addmargin}[5mm]{0pt}
\textbf{Read} The presumpted read operation $o' = r(k',v')$ entails the
existence of an update transaction $\hat{t'} \in \mathcal{T}$ that writes version $v'$ of object $k'$, i.e $\sfomath{o'} = s_{\hat{t'}}$, $k \in \mathcal{W}_{\hat{t'}}$. 
\begin{itemize}
\item Let us first assume that  
$o$ is a read operation $o=r(k,v)$, Its candidate read set $\mathcal{RS}_{e'}(o) = \{s \in \mathcal{S}_{e'} |	s \xrightarrow{*} s_p(t) \wedge\big ( (k,v) \in s \vee (\exists w(k,v) \in \Sigma_t: w(k,v) \xrightarrow{to} r(k,v)) \big)\}$. $\slomath{o}$,
the last state in $\mathcal{RS}_{e'}(o)$ can have
one of two values: $\slomath{o} = s_p(t)$, disallowing
states created after $t$'s commit point, or $\slomath{o} = s_p(\hat{t})$, where $\hat{t}$ is the
update transaction that writes the next version of $k$.
\begin{itemize}
\item $\slomath{o} = s_p(t)$. We previously
showed that $\prereadmath{e'}{\mathcal{T}}$. Given that $o'$, like $o$ is in $\Sigma_{t}$, it follows
by Lemma \ref{sfolemma} that $\sfomath{o'} \xrightarrow{+} s_{t}$ in $e'$, and consequently
that $\sfomath{o'} \xrightarrow{*} s_p(t) $. 
Setting  $s_p(t)$ to $\slomath{o}$, we conclude $\sfomath{o'} \xrightarrow{*} \slomath{o} $ in $e'$. 
\item $\slomath{o} = s_p(\hat{t})$: Consider first the relationships between read states and commit states in $e$. By assumption,
$e$ satisfies $\ircmath{e}{t}$, i.e. 
$\sfomath{o'} \xrightarrow{*} \slomath{o}$ holds in $e$.  Since $\hat{t}$ wrote the next version of the object that $o$ read, 
 we have that $\slomath{o} \xrightarrow{*} s_p(\hat{t}) \xrightarrow{+} s_{\hat{t}}$ in $e$. Combining the guarantee given by
monotonic reads $\sfomath{o'} \xrightarrow{*} \slomath{o}$ and $\slomath{o} \xrightarrow{+} s_{\hat{t}}$, it follows that $\sfomath{o'} \xrightarrow{+} s_{\hat{t}}$ in $e$ i.e. $s_{\hat{t'}} \xrightarrow{+} s_{\hat{t}}$.
Returning to the execution $e'$, since
$\hat{t'}$ and $\hat{t}$ are both update transactions, $\mathcal{W}_{\hat{t'}} \neq \emptyset \wedge \mathcal{W}_{\hat{t}} \neq \emptyset$, if $s_{\hat{t'}} \xrightarrow{+} s_{\hat{t}}$ in $e$, then $s_{\hat{t'}} \xrightarrow{+} s_{\hat{t}}$ and consequently $\sfomath{o'} \xrightarrow{+} s_{\hat{t}}$ . Given $e'$ is a total
order and $s_p({\hat{t}}) \rightarrow s_{\hat{t}}$,
we conclude $\sfomath{o'} \xrightarrow{*} s_p(\hat{t})$,  and $\sfomath{o'} \xrightarrow{*} \slomath{o}$ in $e'$, as desired.
\end{itemize}
\item Let us next assume that
$o$ is a write operation. The 
candidate read set of a write operation $o=w(k,v)$ is  $\mathcal{RS}_{e'}(o) = \{s \in \mathcal{S}_e | 
s \xrightarrow{*} s_p\}$, where, consequently, $\slomath{o} = s_p(t)$.
We previously
showed that $\prereadmath{e'}{\mathcal{T}}$. Given that $o'$, like $o$ is in $\Sigma_{t}$, it follows
by Lemma \ref{sfolemma} that $\sfomath{o'} \xrightarrow{+} s_{t}$ in $e'$, and consequently
that $\sfomath{o'} \xrightarrow{*} s_p(t) $. 
Setting  $_p(t)$ to $\slomath{o}$, we conclude $\sfomath{o'} \xrightarrow{*} \slomath{o} $ in $e'$. 
\end{itemize}
\end{addmargin}
\begin{addmargin}[5mm]{0pt}
\textbf{Write} The candidate read state set for a write operation $o'=w(k,v)$ is defined as the set of all states before $t'$'s commit state. Hence $\sfomath{o'} = s_0$. Thus $\sfomath{o'} \xrightarrow{*} \slomath{o}$ trivially holds.
\end{addmargin}

We conclude that  $\ircmath{e'}{t}$ holds.

This completes the proof that Monotonic Reads holds for execution
$e'$.  This was the last outstanding session guarantees: we have then
proved that  $e'$ satisfies all four session
guarantees.  
\end{addmargin}

\vspace{3mm}

We now proceed to prove that $e'$ satisfies causal consistency:
$\forall t \in \mathcal{T}_{se} :\sessionmath{cc}{se}{t}{e'}$.  More
specifically, we must prove that:
$\prereadmath{e'}{\mathcal{T}} \wedge \ircmath{e'}{t} \wedge (\forall o
\in \Sigma_t: \forall t' \xrightarrow{se} t:  s_{t'}
\xrightarrow{*}\slomath{o}) \wedge (\forall se' \in SE: \forall t_i
\xrightarrow{se'} t_j: s_{t_i} \xrightarrow{+} s_{t_j})$.  We proceed
by proving 
that each of the clauses holds.

\vspace{5mm}

The first two clauses are easy to establish. We previously proved that $\prereadmath{e'}{\mathcal{T}}$ holds. Likewise,
$\ircmath{e'}{t}$ holds as $\forall t \in \mathcal{T}_{se}
:\sessionmath{MR}{se}{t}{e'}$. To establish the fourth clause, we note
that  
we previously proved that $\forall se' \in SE: \forall t_i \xrightarrow{se'} t_j: s_{t_i} \xrightarrow{+} s_{t_j} $.

We are then left to prove only the third clause; namely we must prove
that $\forall t_j \in \mathcal{T}_{se} : (\forall o_j \in \Sigma_{t_j}:
\forall t_i \xrightarrow{se} t_j:  s_{t_i} \xrightarrow{*}
\slomath{o_j}) $. 

We distinguish between two cases: $t_i$ is an update transaction and $t_i$ is a read-only transaction. If $t_i$ is an update transaction, the desired result
holds as $e'$ guarantees read-my-writes: $\forall t_j \in \mathcal{T}_{se} : (\forall o_j \in \Sigma_{t_j}: \forall t_i \xrightarrow{se} t_j:  \mathcal{W}_{t_i} \neq \emptyset \Rightarrow s_{t_i} \xrightarrow{*} \slomath{o_j})$ 

If $t_i$ is a read-only transaction, we proceed by induction. We consider an arbitrary $t_j$, and an arbitrary $o_j \in \Sigma_{t_j}$. 
\par\textbf{Base Case}  Consider the first read-only transaction $t_i$ in $se'$ 
such that $t_i \xrightarrow{se'} t_j$. Recall that we choose the parent state of a read-only transaction as $s_p(t_i) = \max\{\max_{o_i \in \Sigma_{t_i}}\{\sfomath{o_i}\}, s_{t_{i-1}}\}$, where $ t_{i-1}$ denotes the transaction that directly precedes $t_i$ in session $se'$ ($s_{t_{i-1}} = s_0$ if
$t_i$ is the first transaction in the session). Hence, $t_i$'s parent state is either $s_p(t_i) = \max_{o_i \in \Sigma_{t_i}}\{ \sfomath{o_i} \} $, or $s_p(t_i) = s_{t_{i-1}}$.

\begin{itemize}
\item If $s_p(t_i) = \max_{o_i \in \Sigma_{t_i}}\{ \sfomath{o_i} \} $: We previously proved that $\forall t \in \mathcal{T}_{se} :\sessionmath{MR}{se}{t}{e'}$,
hence that $\forall t_i \xrightarrow{se} t_j :\forall o_i \in \Sigma_{t_i}:  \sfomath{o_i} \xrightarrow{*} \slomath{o_j}$, and consequently $\max_{o_i \in \Sigma_{t_i}} \{\sfomath{o_i} \} \xrightarrow{*} \slomath{o_j}$ in $e'$. Noting
that $s_p(t_i) = \max_{o_i \in \Sigma_{t_i}}\{ \sfomath{o_i} \}$, we obtain the desired result
$s_p({t_i}) \xrightarrow{*} \slomath{o_j}$ in $e'$.  
		\item If $s_p({t_i}) = s_{t_{i-1}}$. We defined
$t_{i}$ to be the first read-only transaction in the session. 
By construction $t_{i-1} \xrightarrow{se'} t_{i}$, $t_{i-1}$ is
necessarily an update transaction given $t_i$ is the first read-only transaction, where $\mathcal{W}_{t_{i-1}} \neq \emptyset$.
Given that, by transitivity $t_{i-1} \xrightarrow{se} t_j$, and that $e'$ guarantees read-my-writes 
$\forall t \in \mathcal{T}_{se} :\sessionmath{RMW}{se}{t}{e'}$,
we have $ s_{t'_{i-1}} \xrightarrow{*} \slomath{o_j}$ in $e'$. Noting that $s_p({t_i}) = s_{t_{i-1}}$,
we conclude $ s_p({t_i}) \xrightarrow{*} \slomath{o_j}$ in $e'$. 
\end{itemize}
Finally, we argue that $s_p(t_i) \neq \slomath{o_j}$
(and therefore that $s_{t_i} \xrightarrow{*} \slomath{o_j}$ as $s_p(t_i) \xrightarrow s_{t_i}$). Read-only transactions, like $t_i$ do not change the state on which they are applied, hence $\forall (k,v) \in s_p(t_i)  \Rightarrow (k,v) \in s_{t_i}$.
Moreover, by Claim~\ref{bubu}, transactions commit in session order: $\forall se' \in SE: \forall t_i \xrightarrow{se'} t_j: s_{t_i} \xrightarrow{+} s_{t_j}$.  We thus have $s_{t_i} \xrightarrow{+} s_{t_j}$ and consequently $s_p(t_i) \in \mathcal{RS}_{e'}(o_j) \Rightarrow s_{t_i} \in \mathcal{RS}_{e'}(o_j)$, i.e. $s_p(t_i) \neq \slomath{o_j}$: if $s_p(t_i)$ is 
in $\mathcal{RS}_{e'}(o_j)$, so is $s_{t_i}$. 
As $s_{t_i}$ follows $s_p(t_i)$ in the execution,
$s_p(t_i)$ will never be $\sfomath{o_j}$. 
The following thus holds in the base case: $s_p({t_i}) \xrightarrow{+} \slomath{o_j}$, i.e. $s_p({t_i}) \rightarrow s_{t_i}\xrightarrow{*} \slomath{o_j}$.
			
\par\textbf{Induction Step} 
Consider the k-th read-only transaction $t_i$ in $se'$ 
such that $t_i \xrightarrow{se'} t_j$. We assume that it satisfies 
the induction hypothesis $s_{t_i} \xrightarrow{*}\slomath{o_j}$. Now
consider the (k+1)-th read-only transaction $t_{i'}$ in $se'$, such that $t_{i} \xrightarrow{se'} t_{i'}$. By construction, we once again distinguish two cases:
$t_{i'}$'s parent state is either $s_p(t_{i'}) = \max_{o_{i'} \in \Sigma_{t_i'}}\{ \sfomath{o_i'} \} $, or $s_p({t_{i'}}) = s_{t_{{i'-1}}}$, where $t_{i'-1}$ denotes the transaction directly preceding $t_{i'}$ in a session.
\begin{enumerate}
\item If $s_p(t_{i'}) = \max_{o_{i'} \in \Sigma_{t_{i'}}}\{ \sfomath{o_{i'}} \} $. We previously proved that $\forall t \in \mathcal{T}_{se} :\sessionmath{MR}{se}{t}{e'}$,
hence that $\forall t_{i'} \xrightarrow{se} t_j :\forall o_{i'} \in \Sigma_{t_{i'}}:  \sfomath{o_{i'}} \xrightarrow{*} \slomath{o_j}$, and consequently $\max_{o_{i'} \in \Sigma_{t_{i'}}} \{\sfomath{o_{i'}} \} \xrightarrow{*} \slomath{o_j}$ in $e'$. Noting
that $s_p(t_{i'}) = \max_{o_{i'} \in \Sigma_{t_{i'}}}\{ \sfomath{o_{i'}} \}$, we obtain the desired result
$s_p({t_{i'}}) \xrightarrow{*} \slomath{o_j}$ in $e'$.  
\item If $s_p({t_{i'}}) = s_{t_{i'-1}}$:
First, we note that $t_{i'-1} \xrightarrow{se'} t_j$ holds, as the session order is transitive and we have both $t_{i'-1} \xrightarrow{se'} t_{i'}$ and $t_{i'} \xrightarrow{se'} {j}$.
We distinguish between two cases:
if $t_{i'-1}$ is a read-only transaction, then it must be the k-th such transaction (as, by construction, 
it directly precedes $t_{i'}$ in the session).
Hence $t_{i'-1} = t_i$. Our induction hypothesis
states that $s_{t_i} \xrightarrow{*} \slomath{o_j}$,
and consequently that $s_{t_{i'-1}} \xrightarrow{*} \slomath{o_j}$. Noting that  $s_p({t_{i'}}) = s_{t_{i'-1}}$, we obtain $s_p(t_{i'}) \xrightarrow{*} \slomath{o_j}$. If $t_{i'-1}$ is an
update transaction, we note that $e'$ guarantee
read-my-writes: $\forall t \in \mathcal{T}_{se} :\sessionmath{RMW}{se}{t}{e'}$. As $t_{i'-1} \xrightarrow{se} t_j$, we have $s_{t_{i'-1}} \xrightarrow{*} \slomath{o_j}$ in $e'$. Noting that $s_p({t_{i'}}) = s_{t_{i'-1}}$, we conclude: $s_p(t_{i'}) \xrightarrow{*} \slomath{o_j}$ in $e'$.				
\end{enumerate}
Finally, we argue that $s_p(t_{i'}) \neq \slomath{o_j}$
(and therefore that $s_{t_{i'}} \xrightarrow{*} \slomath{o_j}$ as $s_p(t_{i'}) \rightarrow s_{t_{i'}}$). Read-only transactions, like $t_{i'}$ do not change the state on which they are applied, hence $\forall (k,v) \in s_p(t_{i'})  \Rightarrow (k,v) \in s_{t_{i'}}$.
Moreover, by Claim~\ref{bubu}, transactions commit in session order: $\forall se' \in SE: \forall t_{i'} \xrightarrow{se'} t_j: s_{t_{i'}} \xrightarrow{+} s_{t_j}$.  We thus have $s_{t_{i'}} \xrightarrow{+} s_{t_j}$ and consequently $s_p(t_{i'}) \in \mathcal{RS}_{e'}(o_j) \Rightarrow s_{t_{i'}} \in \mathcal{RS}_{e'}(o_j)$, i.e. $s_p(t_{i'}) \neq \slomath{o_j}$: if $s_p(t_{i'})$ is 
in $\mathcal{RS}_{e'}(o_j)$, so is $s_{t_{i'}}$. 
As $s_{t_{i'}}$ succedes $s_p(t_{i'})$ in the execution,
$s_p(t_{i'})$ will never be $\slomath{o_j}$. 
The following thus holds in the induction case: $s_p({t_{i'}}) \xrightarrow{+} \slomath{o_j}$, i.e. $s_p({t_{i'}}) \rightarrow s_{t_{i'}}\xrightarrow{*} \slomath{o_j}$.
We proved the desired result for both the base case and induction step. By induction,
we conclude that, for read-only transactions: 
$\forall t_j \in \mathcal{T}_{se}$, $\forall o_j \in \Sigma_{t_j}: \forall t_i \xrightarrow{se} t_j, \mathcal{W}_{t_i} =\emptyset \Rightarrow s_{t_i} \xrightarrow{*}\slomath{o_j}$.
Hence, the desired result holds for both read-only and update transactions $\forall t_j \in \mathcal{T}_{se} : (\forall o_j \in \Sigma_{t_j}: \forall t_i \xrightarrow{se} t_j:  s_{t_i} \xrightarrow{*} \slomath{o_j}) $. 
	
\par \textbf{Conclusion} Putting everything together, if $e$ guarantees
all four session guarantees, there exists an equivalent execution $e'$ such that $e'$ also satisfies the session guarantees and is causally consistent: $ \prereadmath{e}{\mathcal{T}} \wedge \ircmath{e}{t} \wedge (\forall o \in \Sigma_t: \forall t' \xrightarrow{se} t:  s_{t'} \xrightarrow{*} sl_o)  \wedge (\forall se' \in SE: \forall t_i \xrightarrow{se'} t_j: s_{t_i} \xrightarrow{+} s_{t_j})$, i.e. $\forall t \in \mathcal{T}_{se}
:\sessionmath{CC}{se}{t}{e'}$. This completes the second part of the proof. 	

Consequently, $\forall se \in SE :\exists e: \forall t \in \mathcal{T}_{se} :\sessionmath{\mathcal{G}}{se}{t}{e} 
    \equiv \forall se \in SE : \exists e :  \forall t \in \mathcal{T}_{se}: \sessionmath{CC}{se}{t}{e}$
holds.

\section{Equivalence of PL-2+ and PSI}
\label{appendix:psi}

In this section, we prove that our state-based definition of PSI is equivalent to both the axiomatic formulation of PSI ($PSI_A$)
by Cerone et al. and to the 
cycle-based specification of PL-2+:
\par\textbf{Theorem~\ref{theorem:2pl}} Let $\mathcal{I}$ be PSI. Then $\exists e:\forall t \in \mathcal{T}:\commitmath{PSI}{t}{e} \equiv \lnot G1 \land \lnot G\text{-single}$ 
\par \textbf{Theorem~\ref{theorem:psi}} Let $\mathcal{I}$ be PSI. Then $\exists e:\forall t \in \mathcal{T}:\commitmath{PSI}{t}{e} \equiv PSI_A$

Before beginning, we first prove a useful lemma: if an execution
$e$, written $s_0 \rightarrow s_{t_1} \rightarrow s_{t_2} \rightarrow \dots \rightarrow s_{t_n}$ satisfies the predicate $\prereadmath{e}{\mathcal{T}}$, then any transaction $t$ that depends on a transaction $t'$ will always commit after $t'$
and all its dependents in the execution. We do so in two steps: we first
prove that $t$ will commit after the transactions that it directly reads from (Lemma~\ref{lemma:preddirect}), and then extend that result to all the transaction's
transitive dependencies (Lemma~\ref{lemma:predtrans}). Formally 

\begin{lemma}
	$\prereadmath{e}{\mathcal{T}} \Rightarrow \forall \hat{t} \in \mathcal{T}:\forall t \in \ddependsmath{e}{\hat{t}}, s_t \xrightarrow{+} s_{\hat{t}}$
	\label{lemma:preddirect}
\end{lemma}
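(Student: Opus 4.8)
The plan is to proceed by a case analysis that mirrors the two-part structure of the direct-depends set. Recall that $\ddependsmath{e}{\hat{t}}$ is the union of a read-induced part, $\{t \mid \exists o \in \Sigma_{\hat{t}}: t = t_{sf_o}\}$, and a write-induced part, $\{t \mid s_t \xrightarrow{+} s_{\hat{t}} \wedge \mathcal{W}_{\hat{t}} \cap \mathcal{W}_t \neq \emptyset\}$. I would fix an arbitrary $\hat{t} \in \mathcal{T}$ and an arbitrary $t \in \ddependsmath{e}{\hat{t}}$, and establish $s_t \xrightarrow{+} s_{\hat{t}}$ in each case.

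The write-induced case requires no work: membership in that part of the set already carries the conjunct $s_t \xrightarrow{+} s_{\hat{t}}$ by definition, so the conclusion is immediate.

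The read-induced case is where the content lies, though it collapses quickly once Lemma~\ref{sfolemma} is invoked. If $t = t_{sf_o}$ for some operation $o \in \Sigma_{\hat{t}}$, then by the meaning of the notation $t_{s'}$ (the transaction that generates state $s'$) we have $s_t = \sfomath{o}$. Since $\prereadmath{e}{\mathcal{T}}$ holds by hypothesis, the forward direction of Lemma~\ref{sfolemma}, instantiated at $\mathcal{T}' = \mathcal{T}$, the transaction $\hat{t}$, and the operation $o$, yields $\sfomath{o} \xrightarrow{+} s_{\hat{t}}$. Substituting $s_t = \sfomath{o}$ gives exactly $s_t \xrightarrow{+} s_{\hat{t}}$. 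I would also flag the one degenerate situation: if $\sfomath{o} = s_0$, no transaction generates it, so the read-induced part contributes no $t$ and nothing needs to be shown for that $o$.

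The only real obstacle is bookkeeping rather than depth: one must carefully connect the set-membership condition $t = t_{sf_o}$ to the state equality $s_t = \sfomath{o}$, and confirm that the $\Rightarrow$ direction of the (iff) Lemma~\ref{sfolemma} delivers precisely $\prereadmath{e}{\mathcal{T}} \Rightarrow \forall o \in \Sigma_{\hat{t}}: \sfomath{o} \xrightarrow{+} s_{\hat{t}}$. With those two identifications in place, the lemma follows directly, and it then serves as the base case for the transitive strengthening to $\dependsmath{e}{\hat{t}}$ in the subsequent Lemma~\ref{lemma:predtrans}.
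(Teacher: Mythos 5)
Your proof is correct and follows essentially the same route as the paper's: a two-case split on the definition of $\ddependsmath{e}{\hat{t}}$, with the write-induced case immediate from the definition and the read-induced case settled by identifying $s_t = \sfomath{o}$ and applying the forward direction of Lemma~\ref{sfolemma} under the $\prereadmath{e}{\mathcal{T}}$ hypothesis. Your extra remark about the degenerate $\sfomath{o} = s_0$ situation is a harmless refinement the paper leaves implicit.
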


\begin{proof}
	Consider any $\hat{t} \in \mathcal{T}$ and any $t \in \ddependsmath{e}{\hat{t}}$. $t$ is included in $\ddependsmath{e}{\hat{t}}$ if one
    of two cases hold: if $\exists o \in \Sigma_{\hat{t}}, t=t_{{sf_o}}$ ($\hat{t}$ reads the value created by $t$) or $s_{t} \xrightarrow{+}s_{\hat{t}} \wedge \mathcal{W}_{\hat{t}} \cap \mathcal{W}_{t} \neq \emptyset $ (t and $\hat{t}$ write the same objects and $t$ commits before $\hat{t}$). 
\begin{enumerate}
        \item $t \in \{t| \exists o \in \Sigma_{\hat{t}}: t=t_{{sf_o}}\}$
        Let $o_i$ be the operation such that  $t = t_{\sfomath{o_i}}$. By assumption, we have             $\prereadmath{e}{\mathcal{T}}$
        It follows that $\forall o, \sfomath{o} \xrightarrow{+} s_{\hat{t}}$.
        and consequently that $\sfomath{o_i} \xrightarrow{+} s_{\hat{t}}$
        and $s_t \xrightarrow{+} s_{\hat{t}}$.	
		\item $t \in \{t|s_{t} \xrightarrow{+} s_{\hat{t}} \wedge \mathcal{W}_{\hat{t}} \cap \mathcal{W}_{t} \neq \emptyset \}$, trivially we have $s_t \xrightarrow{+} s_{\hat{t}}$.
	\end{enumerate}
\end{proof}

We now generalise the result to hold transitively. 
\begin{lemma}
	$\prereadmath{e}{\mathcal{T}} \Rightarrow \forall t' \in \dependsmath{e}{t}: s_{t'} \xrightarrow{+} s_t$.
	\label{lemma:predtrans}
\end{lemma}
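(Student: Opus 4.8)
The plan is to prove this by strong induction on the position of $\hat{t}$ in the total order that $e$ imposes on its states, using Lemma~\ref{lemma:preddirect} as the single building block and the transitivity of $\xrightarrow{+}$ to stitch dependency chains together. Writing $e$ as $s_0 \rightarrow s_{t_1} \rightarrow \dots \rightarrow s_{t_n}$ and setting $\hat{t} = t_i$, the induction would be on the index $i$. The recursive definition of $\dependsmath{e}{\hat{t}}$ offers exactly two ways a transaction $t'$ can belong to it: either $t' \in \ddependsmath{e}{\hat{t}}$ (a direct dependency), or $t' \in \dependsmath{e}{t}$ for some $t \in \ddependsmath{e}{\hat{t}}$ (a transitive dependency reached through a direct dependency $t$). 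I would case-split accordingly.

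In the first case, Lemma~\ref{lemma:preddirect} immediately yields $s_{t'} \xrightarrow{+} s_{\hat{t}}$. In the second case, Lemma~\ref{lemma:preddirect} first gives $s_t \xrightarrow{+} s_{\hat{t}}$, which means $t$ occupies a strictly earlier position in $e$ than $\hat{t}$; the induction hypothesis then applies to $t$ and delivers $s_{t'} \xrightarrow{+} s_t$. Composing the two relations via transitivity of $\xrightarrow{+}$ (a consequence of $e$ being a total order on states) closes the case with $s_{t'} \xrightarrow{+} s_{\hat{t}}$.

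The main obstacle, and the reason for phrasing the induction on execution position rather than directly on the syntactic recursion, is \emph{well-foundedness}: the recursive unfolding of $\dependsmath{e}{\cdot}$ terminates only because no transaction can transitively depend on itself. This is precisely what Lemma~\ref{lemma:preddirect} buys us---each direct-dependency step strictly decreases the position in the total order $e$, so every dependency chain is finite and the induction index strictly decreases at each recursive appeal. I would flag the base case $i = 1$ (and, more generally, any $\hat{t}$ whose only candidate read states reduce to $s_0$): here $\ddependsmath{e}{\hat{t}} = \emptyset$, since Lemma~\ref{lemma:preddirect} forbids any dependency from committing before the first state and $s_0$ is produced by no transaction in $\mathcal{T}$. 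Hence $\dependsmath{e}{\hat{t}} = \emptyset$ and the claim holds vacuously. With well-foundedness secured, the remaining two-case argument is routine.
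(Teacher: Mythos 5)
Your proposal is correct and takes essentially the same approach as the paper's own proof: a strong induction on the transaction's position in the execution order, with Lemma~\ref{lemma:preddirect} both settling the direct-dependency case and supplying the strict decrease in position that justifies applying the induction hypothesis in the transitive case. The paper's argument matches yours in every detail, including the vacuous base case showing $\ddependsmath{e}{t_1} = \emptyset$ by contradiction and the final composition $s_{t'} \xrightarrow{+} s_{t_k} \xrightarrow{+} s_{t_{i+1}}$ via transitivity of the state order.
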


\begin{proof}
We prove this implication by induction.
\par \textbf{Base Case} Consider the first transaction $t_1$ in the execution. We want to prove that for all transactions $t$ that precede $t_1$ in the execution
$s_t \xrightarrow{*} s_{t_1} : \forall t' \in \dependsmath{e}{t}: s_{t'} \xrightarrow{*} s_{t}$. As $t_1$ is the first transaction in the execution, 
$\ddependsmath{e}{t_1}=\emptyset$ and consequently $\dependsmath{e}{t} = \emptyset$. We see this by contradiction: assume there exists a transaction \\ $t \in \ddependsmath{e}{t_1}$, by implication $s_t \xrightarrow{+} s_{t_1}$ (Lemma~\ref{lemma:preddirect}), violating
our assumption that $t_1$ is the first transaction in the execution. Hence the desired result 
trivially holds. 
\par \textbf{Induction Step} Consider the i-th transaction in the execution. We assume that $\forall t \text{ s.t. } s_t \xrightarrow{*} s_i$ the property $\forall t' \in \dependsmath{e}{t}: s_{t'} \xrightarrow{*} s_{t}$ holds. In otherwords, we assume that the property holds
for the first \textit{i} transactions. We now prove that the property holds for
the first \textit{i+1} transactions, specifically, we show that $ \forall t' \in \dependsmath{e}{t_{i+1}}: s_{t'} \xrightarrow{*}s_{t_{i+1}}$. A transaction $t'$ belongs to $\dependsmath{e}{t_{i+1}}$ if one of two conditions holds: either 
$t' \in \ddependsmath{e}{t_{i+1}}$, or $ \exists t_k \in \mathcal{T}:t' \in \dependsmath{e}{t_k} \wedge t_k \in \ddependsmath{e}{t_{i+1}}$. We consider each in turn:
\begin{itemize}
\item If $t' \in \ddependsmath{e}{t_{i+1}}$: by Lemma~\ref{lemma:preddirect}, we have $s_{t'}  \xrightarrow{+} s_{t_{i+1}}$.
\item If $ \exists t_k \in \ddependsmath{e}{t_{i+1}}: t' \in \dependsmath{e}{t_k}$: As $t_k \in \ddependsmath{e}{t_{i+1}}$, by Lemma~\ref{lemma:preddirect}, we have $s_{t_k}  \xrightarrow{+} s_{t_{i+1}}$, i.e. $s_{t_k} \xrightarrow{*} s_{t_i}$ ($s_{t_i}$ directly precedes $s_{t_{i+1}}$ in $e$ by construction).
The induction hypothesis holds for every transaction that strictly precedes $t_{i+1}$ in $e$,
hence $\forall t_{k'} \in \dependsmath{e}{t_k}: s_{t_{k'}} \xrightarrow{+} s_{t_k}$.
As $t' \in \dependsmath{e}{t_k}$ by construction, it follows that $s_{t'} 
\xrightarrow{+} s_{t_k}$. Putting everything together, we have $s_{t'} \xrightarrow{+} s_{t_k} \xrightarrow{+} s_{t_{i+1}}$, and consequently $s_{t'} \xrightarrow{+} s_{t_{i+1}}$. This completes the induction step of the proof.
\end{itemize}
Combining the base case, and induction step, we conclude: 
	$\prereadmath{e}{\mathcal{T}} \Rightarrow \forall t' \in \dependsmath{e}{t}: s_{t'} \xrightarrow{+} s_t$.
\end{proof}

\subsection{PL-2+}
\par\textbf{Theorem~\ref{theorem:2pl}} Let $\mathcal{I}$ be PSI. Then $\exists e:\forall t \in \mathcal{T}:\commitmath{PSI}{t}{e} \equiv \lnot \gone \land \lnot \gs$ 

Let us consider a history $H$ that contains the same set of transactions $\mathcal{T}$ as $e$.  The version order for $H$
, denoted as $<<$, is instantiated as follows: given an execution $e$ and an object $x$, $ x_i << x_j$ if and only if $x \in \mathcal{W}_{t_i} \cap \mathcal{W}_{t_j} \land  s_{t_{i}} \xrightarrow{*} s_{t_{j}}$. We show that, if a transaction
$t$ is in the depend set of a transaction $t'$, then there exists a path of
write-read/write-write dependencies from $t$ to $t'$ in the $DSG(H)$. Formally:
\begin{lemma}
	$\prereadmath{e}{\mathcal{T}} \Rightarrow \forall t' \in \dependsmath{e}{t}: t' \xrightarrow{ww/wr}^+ t$ in $DSG(H)$.
	\label{lemma:predchain}
\end{lemma}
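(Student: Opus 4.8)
The plan is to prove the lemma in two stages that mirror the recursive structure of the $\depends$ operator and the induction used in Lemma~\ref{lemma:predtrans}: first establish the claim for \emph{direct} dependencies as a base case, and then lift it to the transitive closure $\dependsmath{e}{t}$ by induction on execution order. Concretely, I would first prove the auxiliary statement that $\prereadmath{e}{\mathcal{T}} \Rightarrow \forall \hat{t} \in \mathcal{T}: \forall t \in \ddependsmath{e}{\hat{t}}: t \xrightarrow{ww/wr}^+ \hat{t}$ in $DSG(H)$, and then observe that any $t' \in \dependsmath{e}{t}$ is reached from $t$ through a finite chain of direct-dependency steps, each of which contributes a $ww/wr$-path, so that concatenating these paths yields the desired $t' \xrightarrow{ww/wr}^+ t$.

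For the base case I would split along the two disjuncts in the definition of $\ddependsmath{e}{\hat{t}}$. In the first case $t = t_{\sfomath{o}}$ for some operation $o = r(k,v) \in \Sigma_{\hat{t}}$: since $\prereadmath{e}{\mathcal{T}}$ guarantees $\RSmath{o} \neq \emptyset$, the state $\sfomath{o}$ exists, and for an external read it is precisely the first state containing $(k,v)$, i.e.\ the state produced by the transaction that wrote $v$ to $k$. Hence $t$ writes the exact version of $k$ that $\hat{t}$ reads, which is by definition a write-read edge $t \xrightarrow{wr} \hat{t}$ (a $ww/wr$-path of length one). Internal reads and writes have $\sfomath{o} = s_0$, whose producer is not a committed transaction in $\mathcal{T}$, so they contribute nothing here. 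In the second case $t$ and $\hat{t}$ both write some key $x$ with $s_t \xrightarrow{+} s_{\hat{t}}$; under the instantiated version order this means $x_t << x_{\hat{t}}$, and because $e$ is totally ordered the transactions writing $x$ strictly between $s_t$ and $s_{\hat{t}}$ form a chain of consecutive versions, yielding a path of write-write edges $t \xrightarrow{ww}^+ \hat{t}$.

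For the inductive step I would use the same induction on execution order as in Lemma~\ref{lemma:predtrans}: assuming the statement holds for every transaction that strictly precedes $t$ in $e$, I expand $\dependsmath{e}{t}$ via its recursive definition. Any $t' \in \dependsmath{e}{t}$ is either in $\ddependsmath{e}{t}$ (handled directly by the base case), or lies in $\dependsmath{e}{t_k}$ for some $t_k \in \ddependsmath{e}{t}$. In the latter case Lemma~\ref{lemma:preddirect} gives $s_{t_k} \xrightarrow{+} s_t$, so $t_k$ precedes $t$ and the induction hypothesis applies to yield $t' \xrightarrow{ww/wr}^+ t_k$; composing this with the base-case path $t_k \xrightarrow{ww/wr}^+ t$ gives $t' \xrightarrow{ww/wr}^+ t$, completing the step.

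I expect the main obstacle to be the read-dependency case of the base step: one must argue precisely that $t_{\sfomath{o}}$ is the \emph{writer} of the exact version read by $\hat{t}$, so that the resulting edge is genuinely a $wr$ edge rather than a mere ordering fact. This relies on $\prereadmath{e}{\mathcal{T}}$ to ensure $\sfomath{o}$ exists and on the definition of candidate read states (Definition~\ref{defn:readstate}) to pin down that $\sfomath{o}$ is the first state in which $(k,v)$ appears, hence the commit state of its writer. By contrast, the write-dependency chaining and the transitive concatenation are routine once one notes that the instantiated version order $<<$ is a total order on each object's versions consistent with the total order of $e$.
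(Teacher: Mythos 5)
Your proof is correct and takes essentially the same route as the paper's: a strong induction on execution order (enabled by Lemma~\ref{lemma:preddirect}), in which a direct dependency yields either a $\xrightarrow{wr}$ edge via $t_{\sfomath{o}}$ or a chain of $\xrightarrow{ww}$ edges via the instantiated version order $<<$, and transitive dependencies are handled by applying the induction hypothesis and concatenating paths. The only difference is organizational---you factor the direct-dependency argument out as an auxiliary claim, whereas the paper inlines it in the induction step.
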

\begin{proof}
We improve this implication by induction.
\par \textbf{Base Case} Consider the first transaction $t_1$ in the execution. 
We want to prove that for all transactions $t$ that precede $t_1$ in the execution
$\forall t \in \mathcal{T}$ such that  $s_t \xrightarrow{*} s_{t_1}$,
the following holds: $\forall t' \in \dependsmath{e}{t}: t' \xrightarrow{ww/wr}^+ t$ in $DSG(H)$. As $t_1$ is the first transaction in the execution, 
$\ddependsmath{e}{t_1}=\emptyset$ and consequently $\dependsmath{e}{t} = \emptyset$. We see this by contradiction: assume there exists a transaction $t \in \ddependsmath{e}{t_1}$, by implication $s_t \xrightarrow{+} s_{t_1}$ (Lemma~\ref{lemma:preddirect}), violating
our assumption that $t_1$ is the first transaction in the execution. ence the implication trivially holds. 
\par \textbf{Induction Step} Consider the i-th transaction in the execution. We assume that $\forall t$, s.t. $s_t \xrightarrow{*} s_{t_i}$, $ \forall t' \in \dependsmath{e}{t}: t' \xrightarrow{ww/wr}^+ t$. In otherwords, we assume that the property holds
for the first \textit{i} transactions. We now prove that the property holds for
the first \textit{i+1} transactions, specifically, we show that $ \forall t' \in \dependsmath{e}{t_{i+1}}: t' \xrightarrow{ww/wr}^+ t_{i+1}$.
A transaction $t'$ belongs to $\dependsmath{e}{t_{i+1}}$ if one of two conditions holds: either 
$t' \in \ddependsmath{e}{t_{i+1}}$, or $ \exists t_k \in \mathcal{T}:t' \in \dependsmath{e}{t_k} \wedge t_k \in \ddependsmath{e}{t_{i+1}}$. We consider each in turn:
\begin{itemize}
\item If $t' \in \ddependsmath{e}{t_{i+1}}$:
There are two cases:  $t' \in \{t| \exists o \in \Sigma_{t_{i+1}}:t=t_{{sf_o}}\}$
or, $t' \in  \{t|s_{t} \xrightarrow{+} s_{t_{i+1}} \wedge \mathcal{W}_{t_{i+1}} \cap \mathcal{W}_{t} \neq \emptyset \}$. If  $t' \in \{t| \exists o \in \Sigma_{t_{i+1}}:t=t_{{sf_o}}\}$, $t_{i+1}$ reads the version of an object that $t'$
wrote, hence $t_{i+1}$ read-depends on $t'$, i.e. $t' \xrightarrow{wr} t$. 

If $t' \in  \{t|s_{t} \xrightarrow{+} s_{t_{i+1}} \wedge \mathcal{W}_{t_{i+1}} \cap \mathcal{W}_{t} \neq \emptyset \}$: trivially, $s_{t'} \xrightarrow{+} s_{t_{i+1}}$. Let $x$ be the key that is written
by $t$ and $t_{i+1}$: $x \in \mathcal{W}_{t_{i+1}} \cap \mathcal{W}_{t}$.
By construction, the history $H$'s version order for $x$ is $x_{t'} << x_{t_{i+1}}$. By definition of version order, there must therefore a chain of $ww$ edges between $t'$ and $t_{i+1}$ in $DSG(H)$, where all of the transactions in the chain write the
next version of $x$. Thus: $t' \xrightarrow{ww}^+ t_{i+1}$ \changebars{}{also} holds. 

\item  If $ \exists t_k: t' \in \dependsmath{e}{t_k} \wedge t_k \in \ddependsmath{e}{t_{i+1}}$. As $t_k \in \ddependsmath{e}{t_{i+1}}$,
we conclude , as above that $t_k \xrightarrow{ww/wr}^+ t_{i+1}$.
Moreover, by Lemma~\ref{lemma:preddirect}, we have $s_{t_k}  \xrightarrow{+} s_{t_{i+1}}$, i.e. $s_{t_k} \xrightarrow{*} s_{t_i}$ ($s_{t_i}$ directly precedes $s_{t_{i+1}}$ in $e$ by construction).
The induction hypothesis holds for every transaction that precedes $t_{i+1}$ in $e$,
hence $\forall t_{k'} \in \dependsmath{e}{t_k}$: $t_{k'} \xrightarrow{ww/wr}^+ t_{k}$.
Noting $t' \in \dependsmath{e}{t_k}$, we see that  $t' \xrightarrow{ww/wr}^+ t_k$. Putting everything together, we obtain $t' \xrightarrow{ww/wr}^+ t_k \xrightarrow{ww/wr}^+ t_{i+1}$, i.e. $t' \xrightarrow{ww/wr}^+  t_{i+1}$
 by transitivity.  
\end{itemize}
Combining the base case, and induction step, we conclude: $\forall t: \forall t' \in \dependsmath{e}{t}: t' \xrightarrow{ww/wr}^+ t$.
\end{proof}

\par {\textbf{Equivalence}} We now prove Theorem~\ref{theorem:2pl}.

\par{\textbf{Theorem~\ref{theorem:2pl}}} Let $\mathcal{I}$ be PSI. Then $\exists e:\forall t \in \mathcal{T}:\commitmath{PSI}{t}{e} \equiv \lnot \gone \land \lnot \gs$ 
\begin{proof}
Let us recall the definition of PSI's commit test: 
\[
\prereadmath{e}{\mathcal{T}} \wedge \forall o \in \Sigma_t: \forall t' \in \dependsmath{e}{t}: o.k \in \mathcal{W}_{t'} \Rightarrow s_{t'}  \xrightarrow{*} sl_o
\]
	($\Rightarrow$) \textbf{First  we prove} $\exists e:\forall t \in \mathcal{T}:\commitmath{PSI}{t}{e} \Rightarrow \lnot \gone \land \lnot \gs$.
	
	Let \textit{e} be an execution that $\forall t \in \mathcal{T}:\commitmath{PSI}{t}{e}$, and \textit{H} be a history for committed transactions
	$\mathcal{T}$.
	
	We first instantiate the version order for \textit{H}, denoted as $<<$, as follows: given an execution $e$ and an object $x$, $ x_i << x_j$ if and only if $x \in \mathcal{W}_{t_i} \cap \mathcal{W}_{t_j} \land  s_{t_{i}} \xrightarrow{*} s_{t_{j}} $. It follows that, for any two states such that $(x,x_i) \in T_m
	\wedge (x,x_j) \in T_n \Rightarrow s_{T_m} \xrightarrow{+} s_{T_n}$. 

\par{\textbf{G1}} We next prove that $\forall t \in \mathcal{T}:\commitmath{PSI}{t}{e} \Rightarrow \lnot \gone$:	

\par{\textbf{G1-a}} Let us assume that \textit{H} exhibits
		phenomenon \gonea (aborted reads). There
		must exists events $w_i(x_i), r_j(x_i)$ in
		H such that $t_i$ subsequently aborted. $\mathcal{T}$ and any corresponding execution $e$, however, consists only of committed
		transactions. Hence $\forall e: \not\exists s \in \mathcal{S}_e, s.t.\ s \in \RSmath{ r_j(x_i)}$: i.e. $\lnot \prereadmath{e}{t_j}$, therefore $\lnot \prereadmath{e}{\mathcal{T}}$. There thus
		exists a transaction for which the commit test cannot be satisfied, for any e. We have a contradiction.

\par{\textbf {G1-b}} Let us assume that $H$ exhibits 
		phenomenon \goneb (intermediate reads). In an execution $e$, only the final writes of a transaction are applied. Hence,$\forall e: \not\exists s \in \mathcal{S}_e, s.t.\ s \in \RSmath{ r(x_{intermediate})}$, i.e. $\lnot \prereadmath{e}{t}$, therefore $\lnot \prereadmath{e}{\mathcal{T}}$. There
		thus exists a transaction $t$, which for all e, will not 
		satisfy the commit test. We once again have a contradiction.

\par{\textbf {G1-c}} Finally, let us assume that $H$ exhibits phenomenon \gonec: $DSG(H)$ must contain a cycle
		of read/write dependencies. We consider each possible
		edge in the cycle in turn:
		\begin{itemize}
			\item{$t_i\xrightarrow{ww}t_j$} There must exist
			an object $x$ such that $x_i << x_j$ (version order). By construction, version in \textit{H} is consistent with the execution order \textit{e}: we have  $s_{t_i} \xrightarrow{*} s_{t_j}$.
			\item{$t_i\xrightarrow{wr}t_j$} There must exist
			a read $o=r_j(x_i) \in \Sigma_{t_j}$ such that
			$t_j$ reads version $x_i$ written by $t_i$. 
			By assumption, $\commitmath{PSI}{e}{t_j}$ holds.
			By $\prereadmath{e}{\mathcal{T}}$ and Lemma~\ref{sfolemma}, we have $\sfomath{o} \xrightarrow{+} s_{t_j}$; 
			 and since $\sfomath{o}$ exists, $\sfomath{o} = s_{t_i} $.
		 It follows that  $s_{t_i} \xrightarrow{+} s_{t_j}$.
		
	\end{itemize}
If a history $H$ displays phenomena $\gonec$ there must exist a chain of transactions $t_i\rightarrow t_{i+1} \rightarrow ... \rightarrow t_{j}$ such that $i=j$.
A corresponding cycle
must thus exist in the execution $e$: $s_{t_i} \xrightarrow{*} s_{t_{i+1}} \xrightarrow{*} \dots \xrightarrow{*} s_{t_j}$. By definition however, a valid execution must be totally ordered. We once again have a contradiction.  

\par{\textbf \gs} We now prove that $\forall t \in \mathcal{T}:\commitmath{PSI}{t}{e} \Rightarrow \lnot \gs$

By way of contradiction, let us assume that \textit{H} exhibits
	phenomenon $\gs$: $DSG(H)$ must contain a directed cycle with exactly one anti-dependency edge. Let $t_1 \xrightarrow{ww/wr} t_{2} \xrightarrow{ww/wr} \dots \xrightarrow{ww/wr} t_{k} \xrightarrow{rw} t_1$ be the cycle in $DSG(H)$.
	
We first prove by induction that  $t_1 \in \dependsmath{e}{t_k}$,
where $t_k$ denotes the $k-th$ transaction that succedes $t_1$.
We then show that there exist a $t' \in \dependsmath{e}{t_k}$ such that  $ o.k \in \mathcal{W}_{t'} \Rightarrow s_{t'}  \xrightarrow{*} \slomath{o}$ does not hold

\par \textbf{Base case} We prove that $t_1 \in \dependsmath{e}{t_2}$.
We distinguish between two cases $t_1 \xrightarrow{ww} t_2$, and
$t_1 \xrightarrow{wr} t_2$. 
\begin{itemize}
\item If $t_1 \xrightarrow{ww} t_{2}$, there must exist an object
 $k$ that $t_1$ and $t_2$ both write: $k \in \mathcal{W}_{t_1}$ and $k \in \mathcal{W}_{t_2}$, therefore $\mathcal{W}_{t_1} \cap \mathcal{W}_{t_2} \neq \emptyset $. By construction, $t_i\xrightarrow{ww}t_j \Leftrightarrow s_{t_i} \xrightarrow{*} s_{t_j}$. Hence we have $s_{t_1} \xrightarrow{*} s_{t_2}$. By definition of $\ddependsmath{e}{t}$, it follows that $t_1 \in \ddependsmath{e}{t_2}$. 
\item If $t_1 \xrightarrow{wr} t_2$, there must exist an object $k$ such that
$t_2$ reads the version of the object created by transaction $t_1$: $o=r(k_1)$.
We previously proved that $t_i\xrightarrow{wr}t_j \Rightarrow s_{t_i} \xrightarrow{+} s_{t_j}$. It follows that $s_{t_1} \xrightarrow{+} s_{t_2}$ and 
$\sfomath{o} = s_{t_1}$, i.e. $t_1 = t_{\sfomath{o}}$. By definition, $t_1 \in \ddependsmath{e}{t_2}$. 
\end{itemize}
Since $\ddependsmath{e}{t_2} \subseteq \dependsmath{e}{t_2}$, it follows that $t_1 \in \dependsmath{e}{t_2}$.

\par \textbf{Induction step} Assume $t_1 \in \dependsmath{e}{t_i}$, we prove that $t_1 \in \dependsmath{e}{t_{i+1}}$.To do so,
we first prove that $t_i \in  \ddependsmath{e}{t_{i+1}}$. We distinguish between two cases:  $t_i \xrightarrow{ww} t_{i+1}$, and
$t_i \xrightarrow{wr} t_{i+1}$. 
\begin{itemize}
\item If $t_i \xrightarrow{ww} t_{i+1}$, there must exist an object
 $k$ that $t_i$ and $t_{i+1}$ both write: $k \in \mathcal{W}_{t_i}$ and $k \in \mathcal{W}_{t_{i+1}}$, therefore $\mathcal{W}_{t_i} \cap \mathcal{W}_{t_{i+1}} \neq \emptyset $. By construction, $t_i\xrightarrow{ww}t_j \Leftrightarrow s_{t_i} \xrightarrow{*} s_{t_j}$. Hence we have $s_{t_i} \xrightarrow{*} s_{t_{i+1}}$. By definition of $\ddependsmath{e}{t}$, it follows that $t_i \in \ddependsmath{e}{t_{i+1}}$. 
\item If $t_i \xrightarrow{wr} t_{i+1}$, there must exist an object $k$ such that
$t_{i+1}$ reads the version of the object created by transaction $t_i$: $o=r(k_i)$.
We previously proved that $t_i\xrightarrow{wr}t_j \Rightarrow s_{t_i} \xrightarrow{+} s_{t_j}$. It follows that $s_{t_i} \xrightarrow{+} s_{t_{i+1}}$ and 
$\sfomath{o} = s_{t_i}$, i.e. $t_i = t_{\sfomath{o}}$. By definition, $t_i \in \ddependsmath{e}{t_{i+1}}$. 
\end{itemize}
Hence, $t_{i} \in \ddependsmath{e}{t_{i+1}}$. The depends set includes
the depend set of every transaction that it directly depends on:
consequently $\dependsmath{e}{t_{i}} \subseteq \dependsmath{e}{t_{i+1}}$.
We conclude: $t_1 \in \dependsmath{e}{t_{i+1}}$.
	
Combining the base step and the induction step, we have proved that $t_1 \in \dependsmath{e}{t_k}$.

We now derive a contradiction. Consider the edge $t_k \xrightarrow{rw} t_1$ in the 
$\gs$ cycle: $t_k$ reads the version of an object $x$ that precedes the version written
by $t_1$. Specifically, there exists a version $x_m$ written by
transaction $t_m$ such that $r_k(x_m) \in \Sigma_{t_k}$, $w_1(x_1) \in \Sigma_{t_1}$ and $x_m << x_1$.
By definition of the PSI commit test for transaction $t_k$, if $t_1 \in \dependsmath{e}{t_k}$ and $t_1$'s write set intersect with $t_k$'s read
set, then $s_{t_1} \xrightarrow{*} \slomath{r_k(x_m)}$.  

	However, from $x_m << x_1$, we have $\forall s, s', s.t. (x,x_m) \in s \wedge (x,x_1) \in s' \Rightarrow s \xrightarrow{+} s'$. Since $(x,x_m) \in \slomath{r_k(x_m)} \wedge (x,x_1) \in s_{t_1}$, we have $\slomath{r_k(x_m)} \xrightarrow{+} s_{t_1}$. We previously proved that T $s_{t_1} \xrightarrow{*} \slomath{r_k(x_m)}$. We have a contradiction: $H$ does not exhibit
	phenomenon \gs, i.e. $\exists e:\forall t \in \mathcal{T}:\commitmath{PSI}{t}{e} \Rightarrow \lnot G1 \land \lnot \gs$.

($\Leftarrow$) We now prove the other direction $\lnot \gone \land \lnot \gs \Rightarrow \exists e:\forall t \in \mathcal{T}:\commitmath{PSI}{t}{e}$. 

We construct $e$ as follows: Consider only dependency edges in the DSG(H), by $\lnot G1$,  there exist no cycle consisting of only dependency edges, therefore the transactions can be topologically sorted respecting only dependency edges. Let $i_1,...i_n$ be a permutation of $1,2,...,n$ such that $t_{i_1},...,t_{i_n}$ is a topological sort of DSG(H) with only dependency edges. We construct an execution $e$ according to the topological order defined above: $e: s_0 \rightarrow s_{t_{i_1}}\rightarrow s_{t_{i_2}} \rightarrow ... \rightarrow s_{t_{i_n}}$. 

	First we show that $\prereadmath{e}{\mathcal{T}}$ is true: consider any transaction $t$, for any operation $o \in \Sigma_t$. If $o$ is a internal read operation or $o$ is a write operation, by definition $s_0 \in \RSmath{o}$ hence $\RSmath{o} \neq \emptyset$ follows trivially. 
		Consider the case now where $o$ is a read operation that reads a value written by another transaction $t'$. Since the topological order includes $wr$ edges and $e$ respects the topological order,  $t' \xrightarrow{wr} t$ in $DSG(H)$ implies $s_{t'} \xrightarrow{*} s_{t}$, then for any $o=r(x,x_{t'}) \in \Sigma_{t}$,  $s_{t'} \in \RSmath{o}$, therefore $\RSmath{o} \neq \emptyset$. Hence we have $\prereadmath{e}{t}$ is true. Therefore
        $\prereadmath{e}{\mathcal{T}}$ holds. 
	
	Next, we prove that $\forall o \in \Sigma_t: \forall t' \in \dependsmath{e}{t}: o.k \in \mathcal{W}_{t'} \Rightarrow s_{t'} \xrightarrow{*} \slomath{o}$ holds.
For any $t' \in \dependsmath{e}{t}$, by Lemma~\ref{lemma:predtrans}, $s_{t'} \xrightarrow{+}s_t$. Consider any $o \in \Sigma_t$, let $t'$ be a transaction such that $t' \in \dependsmath{e}{t} \wedge o.k \in \mathcal{W}_{t'}$, we now prove that $s_{t'} \xrightarrow{*} \slomath{o}$.
Consider the three possible types of operations in $t$: 
	\begin{enumerate}
		\item \textit{External Reads}: an operation reads an object version
		that was created by another transaction. 
		\item \textit{Internal Reads}: an operation reads an object version that
		itself created.
		\item \textit{Writes}: an operation creates a new object version. 
	\end{enumerate}
	We show that $s_{t'} \xrightarrow{*} \slomath{o}$ for each of  those operation types:
	\begin{enumerate}
		\item \textit{External Reads}. Let $o=r(x,x_{\hat{t}})\in \Sigma_t$ read the
		version for $x$ created by $\hat{t}$, where $\hat{t} \neq t$. Since $\prereadmath{e}{t}$ is true, we have $\RSmath{o} \neq \emptyset$, therefore $s_{\hat{t}} \xrightarrow{+} s_t$ and $\hat{t}= t_{\sfomath{o}}$. From $\hat{t}= t_{\sfomath{o}}$, we have $\hat{t} \in \ddependsmath{e}{t}$.
Now consider $t'$ and $\hat{t}$, we have currently proved that $s_{t'} \xrightarrow{+}s_t$ and $s_{\hat{t}} \xrightarrow{+} s_t$. There are two
cases: 
		\begin{itemize}
			\item $s_{t'} \xrightarrow{*} s_{\hat{t}}$: Consequently $s_{t'} \xrightarrow{*} s_{\hat{t}} =\sfomath{o} \xrightarrow{*} \slomath{o}$
It follows that $s_{t'} \xrightarrow{*} \slomath{o}$.	
			\item $s_{\hat{t}} \xrightarrow{+} s_{t'}$:
We prove that this cannot happen by contradiction. Since $o.k \in \mathcal{W}_{t'}$, $t'$ also writes key $x_{t'}$. By construction,
, $s_{\hat{t}} \xrightarrow{+} s_{t'}$ in $e$ implies $x_{\hat{t}} << x_{t'}$.
There must consequently exist a chain of $ww$ edges between $\hat{t}$ and $t'$ in $DSG(H)$, where all the transactions on the chain writes a new version of key $x$. Now consider the transaction in the chain directly after to $\hat{t}$, denoted as $\hat{t}_{+1}$, where $\hat{t} \xrightarrow{ww} \hat{t}_{+1} \xrightarrow{ww}^* t'$. $\hat{t}_{+1}$ overwrites the version of $x$ $t$ reads. Consequently, $t$ directly anti-depends on $\hat{t}_{+1}$, i.e. $t \xrightarrow{rw}\hat{t}_{+1}$. 
Moreover $t' \in \dependsmath{e}{t}$, by Lemma~\ref{lemma:predchain},  we have $t' \xrightarrow{ww/wr}^+ t$. There thus exists a cycle consists of only one anti dependency edges as $t \xrightarrow{rw} \hat{t}_{+1} \xrightarrow{ww}^* t' \xrightarrow{ww/wr}^+ t$, in contradiction with \gs. $s_{t'} \xrightarrow{*} s_{\hat{t}}$ holds.
\end{itemize}
$s_{t'} \xrightarrow{*} s_{\hat{t}}$ holds in all cases. Noting that $s_{\hat{t}}=
\slomath{o}$,
we conclude $s_{t'} \xrightarrow{*} \slomath{o}$.
		
		\item \textit{Internal Reads}. Let $o=r(x,x_t)$
		read $x_t$ such that $w(x,x_t)\xrightarrow{to}r(x,x_t)$. By definition of $\RSmath{o}$, we have $\slomath{o} = s_p(t)$. Since we have proved that $s_{t'} \xrightarrow{+} s_t$, therefore we have $s_{t'} \xrightarrow{*} s_p(t) = \slomath{o}$ (as $s_p(t) \rightarrow s_t$).
		
		\item \textit{Writes}.  Let $o=w(x,x_t)$ be a write operation. By definition of $\RSmath{o}$, we have $\slomath{o} = s_p(t)$. We
previously proved that $s_{t'} \xrightarrow{+} s_t$. Consequently we have $s_{t'} \xrightarrow{*} s_p(t) = \slomath{o}$ (as $s_p(t) \rightarrow s_t$).
\end{enumerate}

We conclude, in all cases, $\commitmath{PSI}{t}{e} \equiv \prereadmath{e}{t} \wedge \forall o \in \Sigma_t: \forall t' \in \dependsmath{e}{t}: o.k \in \mathcal{W}_{t'} \Rightarrow s_{t'} \xrightarrow{*} \slomath{o}$.
\end{proof}
  
\subsection{$PSI_A$}
We now prove the following theorem:  
\par \textbf{Theorem~\ref{theorem:psi}~}Let $\mathcal{I}$ be PSI. Then $\exists e:\forall t \in \mathcal{T}:\commitmath{PSI}{t}{e} \equiv PSI_A$

We note that this axiomatic specification, defined by Cerone et al.~\cite{cerone2015disc,cerone2015framework} is proven to be equivalent to the operational specification of Sovran et al.~\cite{sovran2011walter}, modulo an additional assumption: that each replica executes each transaction sequentially. 
The authors state that this is for syntactic elegance only, and does not change the essence of the proof. 

\subsubsection{Model Summary}

We provide a brief summary and explanation of the main terminology introduced
in Cerone et al.'s framework for reasoning about concurrency. We refer the reader to~\cite{cerone2015framework} for the full set of definitions.

The authors consider a database storing a set of objects $Obj = \{x, y, . . .\}$,
with operations $Op = \{read(x, n), write(x, n) | x \in Obj, n \in \mathbb{Z}\}$.
For simplicity, the authors assume the value space to be $\mathbb{Z}$. 

\begin{definition}
History events are tuples of the form $(\iota, op)$, where $\iota$ is an identifier from a countably infinite set EventId and $op \in Op$. Let $WEvent_x =\{(\iota, write(x, n)) | \iota \in EventId, n \in  \mathbb{Z}\}$, 
$REvent_x =\{(\iota, write(x, n)) | \iota \in EventId, n \in  \mathbb{Z}\}$, and $HEvent_{x} = REvent_{x} \cap WEvent_{x}$.
\end{definition}

\begin{definition}
	A transaction $T$ is a pair $(E, po)$, where $E \subseteq HEvent$ is an
	non-empty set of events with distinct identifiers, and the program order $po$ is a total order over $E$. A history $\mathcal{H}$ is a set of transactions with disjoint sets of event identifiers.
\end{definition} 

\begin{definition}
	An abstract execution is a triple $A = (\mathcal{H}, VIS, AR)$ where:
	visibility $VIS \subseteq \mathcal{H} \times \mathcal{H} $ is an acyclic relation; and
	arbitration $AR \subseteq \mathcal{H} \times \mathcal{H} $ is a total order such that $AR \supseteq VIS$.
\end{definition}

For simplicity, we summarise the model's main notation specificities:
\begin{itemize}
\item{\rule{7pt}{1pt}} Denotes a value that is irrelevant and implicitly existentially quantified.
\item{\boldmath$\max_R(A)$} Given a total order $R$ and a set $A$, $\max_R(A)$ is the element $u \in A$ such that $\forall v \in A. v = u \vee (v, u) \in R$. 
\item{\boldmath{$R_{-1} (u)$}} 
		For a relation $R \subseteq A \times A$ and
 		an element $u \in A$, we let $R_{-1} (u) = \{v | (v, u) \in R\}$. 
 		\item{\boldmath{$T\vdash Write\ x:n$}} $T$ writes to $x$ and
 		the last value written is $n$: $\max_{po}(E \cap WEvent_{x}) = (\_, write(x, n))$.
 		
 		\item{\boldmath$T\vdash Read\ x:n$} $T$ makes an external read from $x$, i.e., one before writing to $x$, and $n$ is the value returned by the first such read: $\min_{po}(E \cap HEvent_{x}) = (\_, read(x, n))$.
\end{itemize}
 
The authors introduce a number of consistency axioms. A consistency model
specification is a set of consistency axioms $\Phi$ constraining executions.
The model allow those histories for which there exists an execution that satisfies
the axioms:
\begin{definition}
$Hist_{\Phi} = \{ \mathcal{H} | \exists Vis,AR.(\mathcal{H}, AR) \vDash \Phi \}$
\end{definition}

The authors define several consistency axioms:
\begin{definition}
	\begin{description}
		 \item[INT] $\forall(E,po) \in \mathcal{H}. \forall event \in E. \forall x,n. (event = (\_,read(x,n))\wedge (po^{-1}(event)\cap \textit{HEvent}_x \neq \emptyset))$\\ $\Rightarrow \max_{po}(po^{-1}(event)\cap \textit{HEvent}_x)=(\_,\_(x,n))$
		 
		 \item[EXT] 	$\forall T \in \mathcal{H}.\forall x,n. T \vdash Read\ x:n \Rightarrow$
		 $((VIS^{-1}(T)\cap\{S| S \vdash Write\ x:\_\} = \emptyset \wedge n = 0) \vee$\\
		 $ \max_{AR}((VIS^{-1}(T) \cap\{S| S \vdash Write\ x:\_\}) \vdash Write\ x:n)$
		 
		 \item[TRANSVIS] VIS is transitive
		 \item[NOCONFLICT] $\forall T,S \in \mathcal{H}. (T \neq S \wedge T \vdash Write x:\_ \wedge S \vdash Write x: \_) \Rightarrow (T\xrightarrow{VIS} S \vee S \xrightarrow{VIS} T)$
	\end{description}
\end{definition}

$PSI_A$ is then defined with the following set of consistency axioms. 
\begin{definition}
	PSI allows histories for which there exists an execution that satisfies INT, EXT, TRANSVIS and NOCONFLICT:
	$Hist_{PSI} = \{H | \exists VIS, AR.(\mathcal{H}, VIS, AR) \models$ {INT, EXT, TRANSVIS, NOCONFLICT}\}.
\end{definition}

\subsubsection{Equivalence}

We first relate Cerone et al.'s notion of transactions to transactions in our model:
Cerone defines transactions as a tuple $(E,po)$ where $E$ is a set of events and $po$ is a program order over $E$. Our model similalry defines transactions as a tuple 
$(\Sigma_t, \xrightarrow{to})$, where $\Sigma_t$ is a set of operations, and $\xrightarrow{to}$ is the total order on $\Sigma_t$. These definitions are equivalent: events defined in Cerone are extensions of operations in our model (events include a unique identifier), while the partial order in Cerone maps to the program order in our model. For clarity, we denote transactions in Cerone's model as $T$ and transactions in our model as $t$. Finally, we relate our notion of
versions to Cerone's values.  

($\Rightarrow$) \textbf{We first prove} $\exists e:\forall t \in \mathcal{T}:\commitmath{PSI}{t}{e} \Rightarrow PSI_A$.

\par\textbf {Construction} Let $e$ be an execution such that $\forall t \in \mathcal{T}:\commitmath{PSI}{t}{e}$. We construct $AR$ and $VIS$ as follows:
$AR$ is defined as $T_i \xrightarrow{AR} T_j \Leftrightarrow s_{t_i} \xrightarrow{} s_{t_j}$ while $VIS$ order is defined as $T_i \xrightarrow{VIS} T_j \Leftrightarrow t_i \in \dependsmath{e}{t_j}$.
By definition, our execution is a total order, hence our constructed $AR$ is also a total order. $VIS$ defines an acyclic partial order that is a subset of $AR$ ( by $\prereadmath{e}{\mathcal{T}}$ and Lemma~\ref{lemma:predtrans}). 

We now prove that each consistency axiom holds:
\par\textbf {INT}	 $\forall(E,po) \in \mathcal{H}. \forall event \in E. \forall x,n. (event = (\_,read(x,n))\wedge (po^{-1}(event)\cap \textit{HEvent}_x \neq \emptyset))$\\ $\Rightarrow \max_{po}(po^{-1}(event)\cap \textit{HEvent}_x)=(\_,\_(x,n))$ Intuitively, the consistency axiom $INT$ ensures that the read of
an object returns the value of the transaction's last write to that object (if it exists). 
	 
	 For any $(E,po) \in \mathcal{H}$, we consider any $event$ and $x$ such that $(event = (\_,read(x,n))\wedge (po^{-1}(event)\cap \textit{HEvent}_x \neq \emptyset))$. We prove that $ \max_{po}(po^{-1}(event)\cap \textit{HEvent}_x)=(\_ ,\_ (x,n))$.
By assumption, $(po^{-1}(event)\cap \textit{HEvent}_x \neq \emptyset))$ holds, 
there must exist an event such that  $\max_{po}(po^{-1}(event)\cap \textit{HEvent}_x)$. This event is either a read operation, or a write operation:

\begin{enumerate}
	 	\item If $op = \max_{po}(po^{-1}(event)\cap \textit{HEvent}_x)$ is a write operation: given $event = (\_,read(x,n))$ and $op \xrightarrow{po} event$,
the equivalent statement in our model is $w(x,v_{op}) \xrightarrow{to} r(x,n)$.
By definition, our model enforces that $w(k,v') \xrightarrow{to} r(k,v) \Rightarrow v = v'$. Hence $v_{op} = n$, i.e. $op=(\_ , write(x,n))$, therefore $op = (\_ , \_ (x,n)$. Hence $INT$ holds. 

\item If $op=\max_{po}(po^{-1}(event)\cap \textit{HEvent}_x)$ is a read operation, We write $op = (\_,read(x,v_{op}))$. 
The equivalent formulation in our model is as follows. 
For $event = (\_,read(x,n))$, we write  $ o_1 = r(x,n)$,
and for $op$, we write $o_2 = r(x,v_{op})$ with $o_2 \xrightarrow{to} o_1$
where $o_1, o_2 \in \Sigma_t$. Now we consider the following two cases.

First, let us assume that there exists an operation $w(k,v)$ such that
$w(k,v) \xrightarrow{to} o_2 \xrightarrow{to} o_1$ (all three operations
belong to the same transaction). Given that $ \xrightarrow{to}$ is a total order, we have $w(k,v) \xrightarrow{to} o_1$ and $w(k,v) \xrightarrow{to} o_2$. It follows
by definition of candidate read state that $w(k,v') \xrightarrow{to} r(k,v) \Rightarrow v = v'$, where $v = n \land v = v_{op}$, i.e. $v_{op} = n$.
Hence $op = (\_ , \_ (x,n)$ and $INT$ holds.
Second, let us next assume that there does not exist an operation $w(k,v) \xrightarrow{to} o_2 \xrightarrow{to} o_1$. We prove by contradiction
that $v_{op} = n$ nonetheless. Assume that $v_{op} \neq n$,
and consider transactions $t_1$ that writes $(x,n)$, and $t_2$ that writes $(x,v_{op})$, by $\prereadmath{e}{\mathcal{T}}$ , we know that $\sfomath{o_1}, \sfomath{o_2}$ exist. We have $t_1 = t_{\sfomath{o_1}}$ and $t_2 = t_{\sfomath{o_2}}$. By definition of $\dependsmath{e}{t}$, we have $t_1,t_2 \in \ddependsmath{e}{t} \subseteq \dependsmath{e}{t}$, i.e. $t_1,t_2 \in \dependsmath{e}{t}$.
	 	
	 	We note that the sequence of states containing $(x,n)$ is disjoint from states containing $(x,v_{op})$: in otherwords, the sequence of states bounded
by $\sfomath{o_1}$ and $\slomath{o_1}$ and  $\sfomath{o_2}$ and $\slomath{o_2}$
are disjoint. Hence, we have either $s_{t_1} \xrightarrow{*} \slomath{o_1} \xrightarrow{+} s_{t_2} \xrightarrow{*} \slomath{o_2}$, or $s_{t_2} \xrightarrow{*} \slomath{o_2} \xrightarrow{+} s_{t_1} \xrightarrow{*} \slomath{o_1}$. Equivalently
either $t_2 \in \dependsmath{e}{t} \wedge o_1.k \in \mathcal{W}_{t_2} \wedge \slomath{o_1} \xrightarrow{+} s_{t_2}$, or $t_1 \in \dependsmath{e}{t} \wedge o_2.k \in \mathcal{W}_{t_1} \wedge \slomath{o_2} \xrightarrow{+} s_{t_1}$. In
both cases, this violates $\commitmath{PSI}{t}{e}$,  a contradiction.
We conclude  $v_{op} = n$, i.e. $op=(\_ , read(x,n))$, therefore $op = (\_ , \_ (x,n)$. 
 \end{enumerate}
We proved that $ \max_{po}(po^{-1}(event)\cap \textit{HEvent}_x)=(\_ ,\_ (x,n))$.
$INT$ holds.

\par{\textbf{EXT}} We now prove that $EXT$ holds for $\mathcal{H}$. Specifically, \\	$\forall T \in \mathcal{H}.\forall x,n. T \vdash Read\ x:n \Rightarrow$
	$((VIS^{-1}(T)\cap\{S| S \vdash Write\ x:\_\} = \emptyset \wedge n = 0) \vee$\\
	$ \max_{AR}((VIS^{-1}(T) \cap\{S| S \vdash Write\ x:\_\}) \vdash Write\ x:n)$

We proceed in two steps, we first show that there exist a transaction
$T$ that wrote (x,n), and next we show that $T$ is the most recent such
transaction. 
Consider any $T \in \mathcal{H}.\forall x,n. T \vdash Read\ x:n $ (a external read). Equivalently,
we consider a transaction $t$ in our model such that $r(x,n) \in \Sigma_t$.
Let $t_n$ be the transaction that writes $(x,n)$.
By assumption, $\prereadmath{e}{\mathcal{T}}$
holds hence $\sfomath{o}$ exists and $\sfomath{o} = s_{t_n}$, i.e. $t_n = t_{\sfomath{o}}$, as $t_n$ created the first state from which $o$ could
read from. By definition of $\dependsmath{e}{t}$, we have $t_n \in \ddependsmath{e}{t} \subseteq \dependsmath{e}{t}$, i.e. $t_n \in \dependsmath{e}{t}$. 
Moreover, we defined $VIS$ as $T_i \xrightarrow{VIS} T_j \Leftrightarrow t_i \in \dependsmath{e}{t_j}$. Hence, we have $T_n \xrightarrow{VIS} T$, 
and consequently $T_n \in VIS^{-1}(T)$. Since $write(x,n) \in \Sigma_{t_n}$, $T_n \vdash Write\ x:n$.

Next, we show that $T_n$ is larger than any other transaction $T'$ in $AR$: $T' \xrightarrow{VIS} T \wedge T' \vdash Write\ x:\_$. Consider the equivalent transaction $t'$ in our model, we know that $t' \in \dependsmath{e}{t}$ ($T' \xrightarrow{VIS}T$) and $x \in \mathcal{W}_{t'}$. As $o=r(x,n) \in \Sigma_t$ and $t' \in \dependsmath{e}{t} \wedge o.k \in \mathcal{W}_{t'} $, $\commitmath{PSI}{t}{e}$ implies that $s_{t'} \xrightarrow{*} \slomath{o}$.
We note that the sequence of states containing $(x,n)$ is disjoint from states containing $(x,x_{t'})$. It follows that  
$s_{t'} \xrightarrow{*} \sfomath{o}=s_{t_n}$. We can
strengthen this to say  $s_{t'} \xrightarrow{+} \sfomath{o}=s_{t_n}$ as $t'\neq t_n$. By construction, we have $T' \xrightarrow{AR} T_n$, i.e. $T_n = \max_{AR}((VIS^{-1}(T) \cap\{S| S \vdash Write\ x:\_\})$.
We conclude, $EXT$ holds.  

\par\textbf {TRANSVIS} We first prove that $t_i \in \dependsmath{e}{t_j} \wedge t_j \in \dependsmath{e}{t_k} \Rightarrow t_i \in \dependsmath{e}{t_k}$ and use this
result to prove that $VIS$ is transitive.

We proceed by induction, let $e$ be $s_0 \rightarrow s_{t_1} \rightarrow s_{t_2} \rightarrow \dots \rightarrow s_{t_n}$:	
	
\par\textbf {Base Case} Consider the first transaction $t_1$ in the execution. We want to prove that for all transactions $t$ that precede $t_1$ in the execution
		$\forall t' \in \dependsmath{e}{t_1}: \dependsmath{e}{t'} \subseteq \dependsmath{e}{t_1}$ As $t_1$ is the first transaction in the execution, 
		$\ddependsmath{e}{t_1}=\emptyset$ and consequently $\dependsmath{e}{t} = \emptyset$. We see this by contradiction: assume there exists a transaction $t \in \ddependsmath{e}{t_1}$, by implication $s_t \xrightarrow{+} s_{t_1}$ (Lemma~\ref{lemma:preddirect}), violating
		our assumption that $t_1$ is the first transaction in the execution. Hence the desired result 
		trivially holds. 
\par\textbf {Induction Step} Consider the i-th transaction in the execution. We assume that $\forall t \text{ s.t. } s_t \xrightarrow{*} s_i$ the property $\forall t' \in \dependsmath{e}{t}: \dependsmath{e}{t'} \subseteq \dependsmath{e}{t}$ holds. In other words, we assume that the property holds
		for the first \textit{i} transactions. We now prove that the property holds for
		the first \textit{i+1} transactions, specifically, we show that $ \forall t' \in \dependsmath{e}{t_{i+1}}: \dependsmath{e}{t'} \subseteq \dependsmath{e}{t_{i+1}}$. A transaction $t'$ belongs to $\dependsmath{e}{t_{i+1}}$ if one of two conditions holds: either 
		$t' \in \ddependsmath{e}{t_{i+1}}$, or $ \exists t_k \in \mathcal{T}:t' \in \dependsmath{e}{t_k} \wedge t_k \in \ddependsmath{e}{t_{i+1}}$. We consider each in turn:
		
		\begin{itemize}
			\item If $t' \in \ddependsmath{e}{t_{i+1}}$: by definition of $\dependsmath{e}{t_{i+1}}$, $\dependsmath{e}{t'} \subseteq \dependsmath{e}{t_{i+1}}$.
			\item If $ \exists t_k \in \ddependsmath{e}{t_{i+1}}: t' \in \dependsmath{e}{t_k}$: As $t_k \in \ddependsmath{e}{t_{i+1}}$, by definition of $\dependsmath{e}{t_{i+1}}$, $\dependsmath{e}{t_k} \subseteq \dependsmath{e}{t_{i+1}}$.
			Moreover, by Lemma~\ref{lemma:preddirect}, we have $s_{t_k}  \xrightarrow{+} s_{t_{i+1}}$, i.e. $s_{t_k} \xrightarrow{*} s_{t_i}$ ($s_{t_i}$ directly precedes $s_{t_{i+1}}$ in $e$ by construction).
			The induction hypothesis holds for every transaction that strictly precedes $t_{i+1}$ in $e$,
			hence $\forall t_{k'} \in \dependsmath{e}{t_k}: \dependsmath{e}{t_{k'}} \subseteq   \dependsmath{e}{t_k}$.
			As $t' \in \dependsmath{e}{t_k}$ by construction, it follows that $\dependsmath{e}{t'} \subseteq   \dependsmath{e}{t_k}$. Putting everything together, we have $\dependsmath{e}{t'} \subseteq   \dependsmath{e}{t_k} \subseteq \dependsmath{e}{t_{i+1}}$, and consequently $\dependsmath{e}{t'} \subseteq \dependsmath{e}{t_{i+1}}$. This completes the induction step of the proof.
		\end{itemize}
		
		Combining the base case, and induction step, we conclude:  $\forall t' \in \dependsmath{e}{t}: \dependsmath{e}{t'} \subseteq \dependsmath{e}{t}$.

If $T_i \xrightarrow{VIS} T_j \wedge T_j \xrightarrow{VIS} T_k$, by
construction we have $t_i \in \dependsmath{e}{t_j} \wedge t_j \in \dependsmath{e}{t_k}$. 
From $t_j \in \dependsmath{e}{t_k}$, we know, by induction, that $\dependsmath{e}{t_j} \subseteq \dependsmath{e}{t_k}$, and
consequently that \\ $t_i \in \dependsmath{e}{t_k}$. By construction, we have $T_i \xrightarrow{VIS} T_k$., hence we conclude: $VIS$ is transitive.

\par\textbf{NOCONFLICT}
	$\forall T,S \in \mathcal{H}. (T \neq S \wedge T \vdash Write\ x:\_ \wedge S \vdash Write x: \_) \Rightarrow (T\xrightarrow{VIS} S \vee S \xrightarrow{VIS} T)$
	
	 Consider any $T,S \in \mathcal{H}. (T \neq S \wedge T \vdash Write\ x:\_ \wedge S \vdash Write x: \_) $ and let $t_i, t_j$ be the equivalent transactions
in our model such that $w(x, x_i) \in \Sigma_{t_i}$ and $w(x, x_j) \in \Sigma_{t_j} $ and consequently $x \in \mathcal{W}_{t_i} \cap \mathcal{W}_{t_j}$.  
Since $e$ totally orders all the committed transactions, we have either $s_{t_i} \xrightarrow{+} s_{t_j}$ or $s_{t_j} \xrightarrow{+} s_{t_i}$. 	
	If $s_{t_i} \xrightarrow{+} s_{t_j}$,
	it follows from $s_{t_i} \xrightarrow{+} s_{t_j} \wedge \mathcal{W}_{t_i} \cap \mathcal{W}_{t_j} \neq \emptyset$ that $t_i \in \ddependsmath{e}{t_j} \subseteq \dependsmath{e}{t_j}$, i.e. $t_i \in  \dependsmath{e}{t_j}$, and
    consequently $T \xrightarrow{VIS} S$.
	
	Similarly, if $s_{t_j} \xrightarrow{+} s_{t_i}$, 
	it follows from $s_{t_j} \xrightarrow{+} s_{t_i} \wedge \mathcal{W}_{t_i} \cap \mathcal{W}_{t_j} \neq \emptyset$ that $t_j \in \ddependsmath{e}{t_i} \subseteq \dependsmath{e}{t_i}$, i.e. $t_j \in  \dependsmath{e}{t_i}$, and consequently $S \xrightarrow{VIS} T$.

	We conclude: $T\xrightarrow{VIS} S \vee S \xrightarrow{VIS} T$, NOCONFLICT is true.

($\Leftarrow$) Now we prove that $ PSI_A  \Rightarrow \exists e:\forall t \in \mathcal{T}:\commitmath{PSI}{t}{e}$.

By assumption, AR is a total order over $\mathcal{T}$.  We construct an execution $e$ by applying transactions in the same order as AR, i.e. $s_{t_i} \xrightarrow{+} s_{t_j} \Leftrightarrow T_i \xrightarrow{AR} T_j$ and subsequently prove that $e$ satisfies $\forall t \in \mathcal{T}:\commitmath{PSI}{t}{e}$.

\par{\textbf{Preread}} First we show that $\prereadmath{e}{\mathcal{T}}$ is true: consider any transaction $t$, for any operation $o \in \Sigma_t$. If $o$ is a internal read operation or $o$ is a write operation, by definition $\sfomath{o} = s_0$ hence $\sfomath{o}  \xrightarrow{*} s_t$ follows trivially. On
the other hand, consider the case where $o$ is a read operation that reads a value written by another transaction $t'$: let $T$ and $T'$ be the corresponding
transaction in Cerone's model. We have $T \vdash Read\ x:n$ and $T' \vdash Write\ x:n$. Assuming that values are uniquely identifiable, we have $T'= \max_{AR}(VIS^{-1}(T) \cap\{S| S \vdash Write\ x:\_\})$ by EXT, and consequently $T' \in VIS^{-1}(T)$.  As $VIS \subseteq AR$, $T' \xrightarrow{VIS}T$ and
consequently $T' \xrightarrow{AR}T$.
Recall that we apply transactions in the same order as $AR$, hence we have $s_{t'} \xrightarrow{+} s_{t}$. 
Since we have $(x,n) \in s_{t'}$ and $s_{t'} \xrightarrow{+} s_{t}$, it
follows that $s_{t'} \in \RSmath{o}$, hence $\RSmath{o} \neq \emptyset$. 
We conclude: for any transaction $t$, for any operation $o \in \Sigma_t$, $\RSmath{o} \neq \emptyset$, therefore $\prereadmath{e}{\mathcal{T}}$ is true.

Now consider any $t\in \mathcal{T}$, we want to prove that $\forall o \in \Sigma_t: \forall t' \in \dependsmath{e}{t}: o.k \in \mathcal{W}_{t'} \Rightarrow s_{t'} \xrightarrow{*} sl_o$.

First we prove that $\forall t' \in  \dependsmath{e}{t} \Rightarrow T'\xrightarrow{VIS} T$. 

We previously proved that $\prereadmath{e}{\mathcal{T}}$ is true. Hence, by Lemma~\ref{lemma:predchain} we know that there is a chain $t' \xrightarrow{wr/ww}^+ t$. Consider any edge on the chain: $t_i \xrightarrow{ww/wr} t_j$:

\begin{enumerate}
	\item $t_i \xrightarrow{ww} t_j$: We have $T_i,T_j \in \mathcal{H}$ and $(T_i \neq T_j \wedge T_i \vdash Write\ x:\_ \wedge T_j \vdash Write\ x:\_)$, therefore by NOCONFLICT, we have $t_i \xrightarrow{VIS} t_j \vee t_j \xrightarrow{VIS} t_i$. Note that $s_{t_i} \xrightarrow{*} s_{t_j}$, we know that $t_i \xrightarrow{AR} t_j$, and since $VIS \subseteq AR$, we have $t_i \xrightarrow{VIS} t_j$.
	\item $t_i \xrightarrow{wr} t_j$. We map the initial values in Cerone et al
    from $0$ to $\bot$.	Let $n$ be the value that $t_i$ writes and $t_j$ reads.
    A transaction cannot write empty value, i.e. $\bot$, to a key. It follows
    that $T_j \vdash Read\ x:n $ and $n \neq 0$. By EXT, $ \max_{AR}(VIS^{-1}(T_j) \cap\{S| S \vdash Write\ x:\_\}) \vdash Write\ x:n$. 
	Since $T_i \vdash Write \ x:n$, $T_i = \max_{AR}(VIS^{-1}(T_j) \cap\{S| S \vdash Write\ x:\_\})$ hold, and consequently $T_i \in VIS^{-1}(T_j) $, i.e. $T_i \xrightarrow{VIS} T_j$.
\end{enumerate}

Now we consider the chain $t' \xrightarrow{wr/ww}^+ t$, and we have that $T' \xrightarrow{VIS}^+ T$, by TRANSVIS, we have $T' \xrightarrow{VIS} T$.

Now, consider any $o \in \Sigma_t$ such that $o.k \in \mathcal{W}_{t'}$, let $o.k = x$, therefore $T' \vdash Write\ x:\_$. We previously proved that $T' \xrightarrow{VIS} T$. Hence we have $T' \in VIS^{-1}(T) \cap\{S| S \vdash Write\ x:\_\}$. Now we consider the following two cases.

If $o$ is an external read, and reads the value $(x,\hat{x})$ written by $\hat{t}$. As transactions cannot write  an empty value, i.e. $\bot$, to a key, we have $T \vdash Read\ x:\hat{x} $ and $\hat{x} \neq 0$. By EXT, $ \max_{AR}(VIS^{-1}(T_j) \cap\{S| S \vdash Write\ x:\_\}) \vdash Write\ x:\hat{x}$. 
Since $\hat{T}\vdash Write\ x:n$, we have $\hat{T} = \max_{AR}(VIS^{-1}(T_j) \cap\{S| S \vdash Write\ x:\_\})$
, therefore $T' \xrightarrow{AR} \hat{T}$ or $T'  =\hat{T}$. Note that we apply transactions in the same order as $AR$, therefore we have $s_{t'} \xrightarrow{+} s_{\hat{t}}$ or $s_{t'} = s_{\hat{t}}$, i.e. $s_{t'} \xrightarrow{*}s_{\hat{t}}$. 
Since we proved that $\prereadmath{e}{t}$ is true, we have $\sfomath{o}$ exists and $s_{\hat{t}} = \sfomath{o}$, note that by definition $\sfomath{o} \xrightarrow{*} \slomath{o}$. Now we have $s_{t'}\xrightarrow{*} s_{\hat{t}}=\sfomath{o}\xrightarrow{*} \slomath{o}$, therefore $s_{t'} \xrightarrow{*} \slomath{o}$  . 

If $o$ is an internal read operation or write operation, then $\slomath{o} = s_p(t)$. Since $t' \in \dependsmath{e}{t}$, by Lemma~\ref{lemma:predtrans}, we have $s_{t'} \xrightarrow{+} s_{t}$, therefore $s_{t'} \xrightarrow{*} s_p({t}) = \slomath{o}$, i.e. $s_{t'} \xrightarrow{*} \slomath{o}$.

\end{document}